\newtheorem{theorem}{Theorem}
\newtheorem{lemma}{Lemma}
\newtheorem{corollary}[lemma]{Corollary}
\newtheorem{observation}[lemma]{Observation}
\theoremstyle{definition}
\title{Computational Complexities of Folding}
\author{David Eppstein\thanks{Department of Computer Science, University of California, Irvine. Research supported in part by NSF grant CCF-2212129. A preliminary form of the material in \cref{sec:param} was presented at the 35th Canadian Conference on Computational Geometry (CCCG 2023) \cite{Epp-CCCG-23}. Some results in \cref{sec:counting} were initiated at the 34th Bellairs Winter Workshop on Computational Geometry,
co-organized by Erik Demaine and Godfried Toussaint, on March 22--29, 2019 in Holetown, Barbados, for which we thank the organizers and other participants. This work was presented in preliminary form at OSME 2024 and JCGCG$^3$ 2024. We thank Erik Demaine and Stefan Langermann for helpful discussions at JCGCG$^3$ 2024.}}
\date{ }
\begin{document}
\maketitle  

\begin{abstract}
We prove several hardness results on folding origami crease patterns. Flat-folding finite crease patterns is fixed-parameter tractable in the ply of the folded pattern (how many layers overlap at any point) and the treewidth of an associated cell adjacency graph. Under the exponential time hypothesis, the singly-exponential dependence of our algorithm on treewidth is necessary, even for bounded ply. Improving the dependence on ply would require progress on the unsolved map folding problem. Finding the shape of a polyhedron folded from a net with triangular faces and integer edge lengths is not possible in algebraic computation tree models of computation that at each tree node allow either the computation of arbitrary integer roots of real numbers, or the extraction of roots of polynomials with bounded degree and integer coefficients. For a model of reconfigurable origami with origami squares are attached at one edge by a hinge to a rigid surface, moving from one flat-folded state to another by changing the position of one square at a time is $\mathsf{PSPACE}$-complete, and counting flat-folded states is $\mathsf{\#P}$-complete. For self-similar square crease patterns with infinitely many folds, testing flat-foldability is undecidable.
\end{abstract}

\section{Introduction}

The computer simulation of origami and related folding problems has been well-studied, and several systems can convert a crease pattern into its folded state, or find a folding motion to reach that state~\cite{Tac-OSME-09,GhaDemGer-OSME-18,ZhuFil-PRSA-19,ZhuSchFil-AMR-22,CaiRenDin-AST-17,Kwo-JMD-20,MiyShiYas-JVCA-96}. However, these systems are often heuristic, rather having proven performance guarantees. They may sometimes produce unphysical motions in which paper passes through itself. They may be limited to simple folding steps instead of multi-crease motions. The search space may be pruned heuristically, avoiding a slow combinatorial search but omitting valid solutions~\cite{ZhuFil-PRSA-19}. These methods generally approximate real numbers by floating point, and it can be unclear whether a folded state that appears numerically valid might be mathematically impossible. On the theoretical side, only limited positive results are known, including polynomial time algorithms for folding single-strip patterns with parallel creases~\cite{BerHay-SODA-96,ItoNes-JCDCG3-24}, single-vertex crease patterns~\cite{BerHay-SODA-96}, and map folds of $2\times n$ grids~\cite{Mor-12}. We would like to understand the extent to which improved theory might help make these simulations more robust, and to which theoretical hardness limit our ability to find algorithmic guarantees for these problems.

Past results in this direction include the $\mathsf{NP}$-completeness of finding a flat-folded state of a finite crease pattern~\cite{BerHay-SODA-96,AkiCheDem-JCGCGG-15}, and the undecidability of certain problems related to flat folding of infinite crease patterns~\cite{HullZak-23,Ass-24}. We extend these results by proving the following:
\begin{itemize}
\item Flat folding has a fixed-parameter tractable algorithm, with time depending factorially on the ply of a crease pattern (the maximum number of layers in its folded state), exponentially on the \emph{treewidth} of a graph associated with the crease pattern, and quadratically in the number of creases (\cref{thm:fpt}). Under the exponential time hypothesis, the exponential dependence on treewidth is necessary for this result, even when the ply is bounded~(\cref{thm:eth}).
\item In the radical computation tree and root computation tree models of exact symbolic computation formalized by Bannister et al.~\cite{BanDevEpp-JGAA-15}, it is not always possible to represent the three-dimensional folded state of a polyhedral net with triangular faces and integer edge lengths (\cref{thm:galois}).
\item For a toy model of flat folding consisting of uncreased congruent squares of paper taped to a tabletop by one edge (``flaps and flips''), getting between two flat folded states by moving one square at a time, or testing whether all states are reachable from each other, is $\mathsf{PSPACE}$-complete (\cref{thm:pspace}). Counting flat-folded states of flaps and flips instances is $\mathsf{\#P}$-complete under polynomial-time counting reductions (\cref{thm:count}).
\item There is a fixed infinite self-similar crease pattern (the union of infinitely many scaled copies of a finite set of creases, which may either be labeled or unlabeled) for which testing whether a finite choice of folds can be extended to a flat-folding of the whole pattern is  $\mathsf{coRE}$-complete and undecidable (\cref{thm:undecidable}). Without the finite choice of folds, it is undecidable to determine whether a given self-similar crease pattern is flat-foldable (\cref{thm:wang}).
\end{itemize}

\cref{thm:pspace} and \cref{thm:count} use reductions from nondeterministic constraint logic~\cite{HeaDem-TCS-05}. In proving \cref{thm:count}, we also prove that counting solutions to a nondeterministic constraint logic instance is $\mathsf{\#P}$-complete (\cref{cor:ncl-count}). Few $\mathsf{\#P}$-completeness or $\mathsf{\#P}$-hardness proofs are known for geometric problems~\cite{Lin-SIDMA-86,KamSur-ANALCO-11,DitPak-18,Epp-DCG-20}. \cref{thm:count} adds to this small set. \cref{thm:undecidable} and \cref{thm:wang} strengthen previous results of Hull and Zakharevich~\cite{HullZak-23} by using finite paper, avoiding optional folds, and proving undecidability for flat folding rather than a more complicated decision problem.

\section{Preliminaries: Finite and infinite flat foldings}
\label{sec:flat}

Although many previous works consider flat foldings of origami crease patterns, we need to be somewhat careful with definitions, especially when considering infinite flat foldings. As in previous work~\cite{Epp-JoCG-19}, we base our definition of flat folding on a \emph{local flat folding}, a description of how the folding maps a flat surface to itself without describing the spatial arrangement of layers as a flat-folded surface. \emph{Flat foldings} augment this model to include layer ordering. Define a \emph{crease pattern} to consist of a subset of the plane together with a system of (possibly infinitely many) open line segments, rays, or lines (\emph{creases}), each of which has an open neighborhood within which it is the only crease. It is \emph{finite} when there are finitely many creases.
Define a \emph{local flat folding} of a crease pattern $P$ to be a continuous function $\varphi$ such that:
\begin{itemize}
\item For each open subset $S$ of $P$ that does not intersect any crease, $\varphi$ is an isometry.
\item For each open subset $S$ of $P$ that contains a single crease $c$, any two points that are reflections of each other across $c$ have the same image.
\end{itemize}
In short, $\varphi$ is a \emph{continuous piecewise isometry} with the creases as boundaries of the pieces on which it acts isometrically. Define a \emph{vertex} of a crease pattern to be a point at the endpoint of more than one crease. (A crease may also have an endpoint at the boundary of the crease pattern, which we do not consider as a vertex.) This level of detail does not distinguish mountain folds from valley folds.

\begin{observation}
\label{obs:local}
Given a finite crease pattern $P$, we can in linear time construct a local flat folding $\varphi$ for $P$, if one exists, or determine that it does not exist.
\end{observation}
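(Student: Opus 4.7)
The plan is to exploit the fact that a local flat folding is determined up to a global isometry by its values on any one face, and that compatibility between adjacent faces is forced by the reflection condition across their shared crease. So once the planar subdivision of $P$ induced by the creases has been computed (in linear time, assuming $P$ is given in any standard combinatorial representation of such a subdivision), the task reduces to propagating a single initial choice through the face-adjacency (dual) graph and checking the resulting assignment is consistent.

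Concretely, I would pick any face $F_0$ and set $\varphi|_{F_0}$ to be the identity. Performing a BFS or DFS on the dual graph, whenever the tree advances from a face $F$ to a neighbor $F'$ across a crease $c$, I would define $\varphi|_{F'}=r_{\varphi(F)(c)}\circ\varphi|_{F}$, where $r_\ell$ denotes Euclidean reflection across the line $\ell$. This produces a candidate piecewise isometry in linear total time, since each face is visited once and each crease is used once. On interior points of creases, continuity is automatic from how the reflection was defined; the content of ``local flat folding'' is exactly this propagation rule.

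The remaining work is to verify that the assignment is well-defined globally, i.e.\ that for every closed walk in the dual graph the composition of the corresponding reflections is the identity of the plane. Because $P$ is a subset of the plane, the cycle space of the dual is generated by the small cycles around interior vertices of the crease pattern, together (if $P$ is not simply connected) with one cycle per hole. Around an interior vertex with $2k$ creases at consecutive sector angles $\alpha_1,\alpha_2,\dots,\alpha_{2k}$, the composition of the reflections across the corresponding crease images is the identity iff the number of creases is even and Kawasaki's alternating-sum condition $\alpha_1-\alpha_2+\cdots+\alpha_{2k-1}-\alpha_{2k}=0$ holds. Each such check is local and takes time proportional to the degree of the vertex; checks at the $O(1)$ holes (if any) reduce to comparing the reflection composition along one fundamental cycle to the identity, also in linear total time. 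If every check succeeds, the BFS-computed $\varphi$ is a valid local flat folding; if any fails, no local flat folding exists.

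The main obstacle is the cycle-space argument: one must justify that verifying consistency around each interior vertex (plus one generator per hole of the paper) really suffices, rather than checking consistency along arbitrarily long dual cycles. This follows from the standard identification of the cycle space of the dual of a planar subdivision with the homology of the underlying surface, whose generators are precisely the vertex cycles and the cycles around the missing components; once this identification is in hand the linear-time bound is routine.
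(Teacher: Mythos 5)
Your proposal follows the same overall strategy as the paper: compute the planar subdivision, propagate $\varphi$ by reflections across creases along a spanning tree of the dual (face-adjacency) graph, and then verify global consistency. Where you diverge is in the choice of consistency check. The paper simply re-examines every non-tree dual edge and verifies that the two already-computed maps agree when one is reflected across the shared crease; this is a direct check of every fundamental cycle of the spanning tree, a generating set for the cycle space, so no further argument is needed. You instead propose checking a different generating set --- the small cycles around interior vertices (equivalently Kawasaki's alternating-angle condition) plus one cycle per hole --- and then invoking the identification of the dual cycle space with the homology of the surface to argue these suffice. Your route is correct and has the merit of recovering the classical Kawasaki characterization of local flat-foldability as a byproduct, but it carries extra machinery (evenness plus the alternating-sum condition, the homology argument) that the paper avoids by checking non-tree edges directly. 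Both run in linear time once the subdivision is given in a standard doubly-connected-edge-list style representation.

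One small caution: your vertex check as stated (Kawasaki at each interior vertex) is correct, but phrasing it in terms of sector angles means you are checking a condition on the unfolded pattern, whereas the paper's per-edge check compares reflections of the already-folded images. These are equivalent, but if you take the Kawasaki route you should note that the composition of reflections around a degree-$2k$ vertex is a rotation by twice the alternating sum of the sector angles, so it equals the identity exactly when that alternating sum vanishes (and $2k$ is even); otherwise the propagation is inconsistent at that vertex. With that noted, your argument is sound.
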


\begin{proof}
Choose an arbitrary starting polygon, set $\varphi$ to be the identity within this polygon, and traverse the adjacencies between polygons of the decomposition. When traversing an edge from a polygon whose mapping under $\varphi$ has been determined to a polygon whose mapping has not, set the mapping for the new polygon to be the reflection of the mapping for the old polygon across the traversed edge. When traversing an edge to a polygon whose mapping has already been determined, check its consistency with this reflection.
\end{proof}

The function $\varphi$, constructed in this way, is unique up to rigid transformations of the plane.

\begin{figure}[t]
\centering\includegraphics[width=0.4\textwidth]{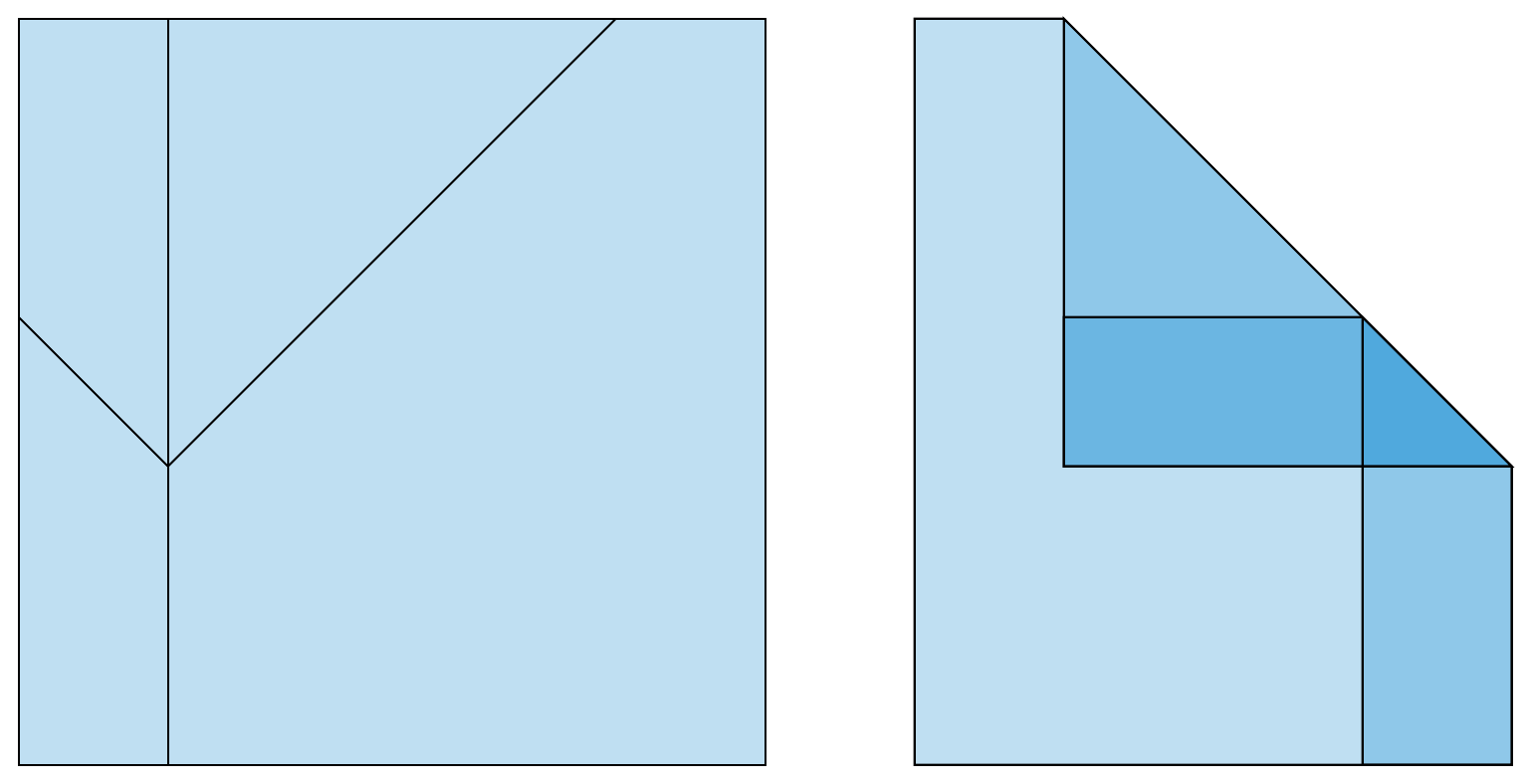}
\caption{A finite crease pattern (left) and the arrangement of its local flat folding (right), with shading indicating the ply of each arrangement cell. The ply of the overall pattern is four, equal to the maximum ply in the small triangular cell.}
\label{fig:arrangement-ply}
\end{figure}

Define the \emph{arrangement} of a local flat folding $\varphi$ of a crease pattern $P$ to be the system of images under $\varphi$ of the creases in $P$. These images of creases partition the plane into \emph{cells}, maximal regions that are not crossed by the image of any crease. It is convenient to consider preimages of  cells rather than of individual points. The \emph{ply} of a cell is the number of these preimages, and the ply of the crease pattern is the maximum ply of any cell (possibly infinite for infinite crease patterns). See \cref{fig:arrangement-ply}. Using standard methods in computational geometry, an arrangement of a local flat folding with $n$ creases has complexity $O(n^2)$. It can be constructed and its ply calculated in time $O(n^2)$.

Define a \emph{layering} of a local flat folding to be a vertical ordering of the preimages for each cell of the arrangement. Define a \emph{flat folding} to be a local flat folding and a layering that, for every $\varepsilon>0$, is consistent with a topological embedding of the crease pattern into three-dimensional space that is $\varepsilon$-close to the local flat folding. Here, ``close'' means there exists a local flat folding into a plane in space so that, for every point of the crease pattern, its images under the topological embedding and under the local flat folding have distance at most $\varepsilon$ from each other. To avoid topological difficulties we additionally require that a line perpendicular to the plane, through a point of the plane farther than $\varepsilon$ from any crease, has exactly one point of intersection with each preimage of a cell: the embedding cannot be ``crumpled'' far from its creases. This restriction gives us a vertical ordering for each cell, the intersection order along any such perpendicular line.

A cross-section of such a topological embedding across any crease includes layers in two adjacent arrangement cells. Two layers in the same cell can pair up to form a crease, two layers from each cell can be paired to form parts of an uncreased region, and the boundary of the crease pattern may lie along a crease (\cref{fig:layer-consistency}, left). These paired layers must meet certain obvious conditions:
\begin{itemize}
\item If two polygons span the two cells without being creased, they must be consistently ordered in both cells instead of crossing at the crease (\cref{fig:layer-consistency}, top right).
\item If two layers of the same cell meet in a crease, any uncreased polygon spanning the two cells without being creased cannot lie between the two creased layers, as their crease would block it  (\cref{fig:layer-consistency}, middle right). This is the \emph{taco-tortilla property} of Akitaya et al.~\cite{AkiCheDem-JCGCGG-15}.
\item If two creased pairs of layers form the same crease line, their layers cannot alternate, as this would form a crossing (\cref{fig:layer-consistency}, bottom right). This is the \emph{taco-taco property}~\cite{AkiCheDem-JCGCGG-15}.
\item If two layers of the same cell meet in a crease labeled as being a mountain fold or valley fold, the ordering of the layers must be consistent with the fold type (not shown).
\end{itemize}
Define a layering to be \emph{uncrossed} when, at each crease, it meets all of these conditions.

\begin{lemma}
\label{lem:uncrossed}
A local flat folding of a finite crease pattern comes from a flat folding if and only if it has an uncrossed layering.
\end{lemma}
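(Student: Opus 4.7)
The plan is to prove the two implications of the if-and-only-if separately. The forward implication is essentially a repackaging of the four bullet conditions listed just before the lemma as necessary topological obstructions, while the reverse implication is the constructive step that actually produces an $\varepsilon$-close embedding from an uncrossed layering.

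For the forward direction, suppose a flat folding exists. Fix $\varepsilon$ smaller than one third of the minimum distance in the arrangement between any two features that are not incident (creases, vertices, and boundary arcs). Given any $\varepsilon$-close topological embedding, the image of the preimage of each cell adjacent to a crease $c$ is contained in an $\varepsilon$-tube around $c$. Inside this tube, a violation of any one of the four conditions forces two sheets to cross: two pass-through layers whose vertical order reverses across $c$ must meet by the intermediate value theorem; a taco whose two layers sandwich an uncreased layer of the adjacent cell is pierced by the closed side of the taco; two alternating tacos sharing the crease line form a linked pair; and a mismatched mountain/valley assignment forces the layer order to reverse through the crease, again yielding a transverse crossing. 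Each of these contradicts that the map is an embedding, so the layering must be uncrossed.

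For the reverse direction, I would fix $\varepsilon > 0$ and build an embedding explicitly. For each cell $C$ of the arrangement of ply $N$ and each preimage $C_i$ of $C$ at height $h \in \{1,\dots,N\}$ in the given layering, place the interior of $C_i$ on the horizontal plane $z = h\varepsilon/(N+1)$, shrunk inward from the creases by an amount of order $\varepsilon$ so that its creased boundary sits inside the $\varepsilon$-tube around the corresponding arrangement crease. Inside each such tube the local flat folding pairs the layers on the two sides as either uncreased pass-throughs, single creases, or boundary endpoints, and the uncrossed conditions say exactly that this pairing, viewed as a matching on the vertical sequence of layers, is non-crossing. Any non-crossing matching in a rectangle admits a realization by disjoint arcs, and each such arc becomes a small U-bend or slab connecting the two adjacent horizontal pieces, respecting any mountain or valley label.

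The main obstacle will be gluing the local tube constructions at the vertices of the crease pattern. For $\varepsilon$ smaller than the minimum distance from any vertex to a non-incident crease, the tubes around creases incident to a vertex meet only inside a small disk around that vertex, and inside that disk the combinatorial structure of the layers is determined entirely by the local flat folding. I would argue that the uncrossed conditions on all creases incident to the vertex promote to a non-crossing circular matching of the cyclic sequence of layers around the vertex image, and then realize that matching by a pleated cone of diameter $O(\varepsilon)$ centered at the vertex. Matching boundary conditions on the interfaces between the cone, the crease tubes, and the flat cell pieces then produce a single topological embedding $\varepsilon$-close to $\varphi$, completing the ``if'' direction.
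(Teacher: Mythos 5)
Your forward direction and the bulk of your reverse construction match the paper's: place each layer of a cell at a distinct small height, shrink cells inward from the arrangement creases, and connect paired layers through the crease tubes. The place where you diverge, and where you leave a genuine gap, is the treatment of vertices. You write that you ``would argue that the uncrossed conditions on all creases incident to the vertex promote to a non-crossing circular matching of the cyclic sequence of layers around the vertex image,'' and then realize that matching by a pleated cone. This promotion is exactly the hard step, and it is not obviously a consequence of the per-crease taco-taco and taco-tortilla conditions alone: those conditions are local to single creases and pairs of cells, while the circular claim is a global statement around a vertex whose star may contain many cells and many interleaved tacos and tortillas. Without an actual proof of that combinatorial promotion, the reverse direction is not closed.

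The paper avoids proving any such circular statement by a topological shortcut: it builds the near-embedding from the per-crease data exactly as you do, observes that this is already an embedding away from small neighborhoods of vertices, and then invokes the fact that two surfaces in $3$-space cannot intersect transversely at an isolated point without intersecting along a curve through it. Hence any remaining crossing near a vertex would propagate out to a nearby crease, contradicting the per-crease conditions; the surface can therefore be perturbed to be embedded near the vertices with no further case analysis. This sidesteps precisely the circular-matching lemma you would need. Your proposal would become a full proof if you either (a) replaced the pleated-cone step with the paper's isolated-point argument, or (b) supplied a real proof of the circular non-crossing claim, which I expect is doable but is more work than the paper's one-paragraph observation.
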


\begin{proof}
In one direction, if a flat folding exists, it cannot violate any of the conditions above, because each describes a certain type of crossing, and topological embeddings forbid crossings. In the other direction, every uncrossed layering comes from a flat folding: one can form a 3d embedding from it, by shrinking each cell a small distance from its boundary, making parallel copies of the cell in 3d in the order given by the layering, all separated from each other but within distance $\varepsilon$ of the plane of the local flat folding, and connecting them with curved patches of surface near each crease.

\begin{figure}[t]
\centering\includegraphics[width=0.6\textwidth]{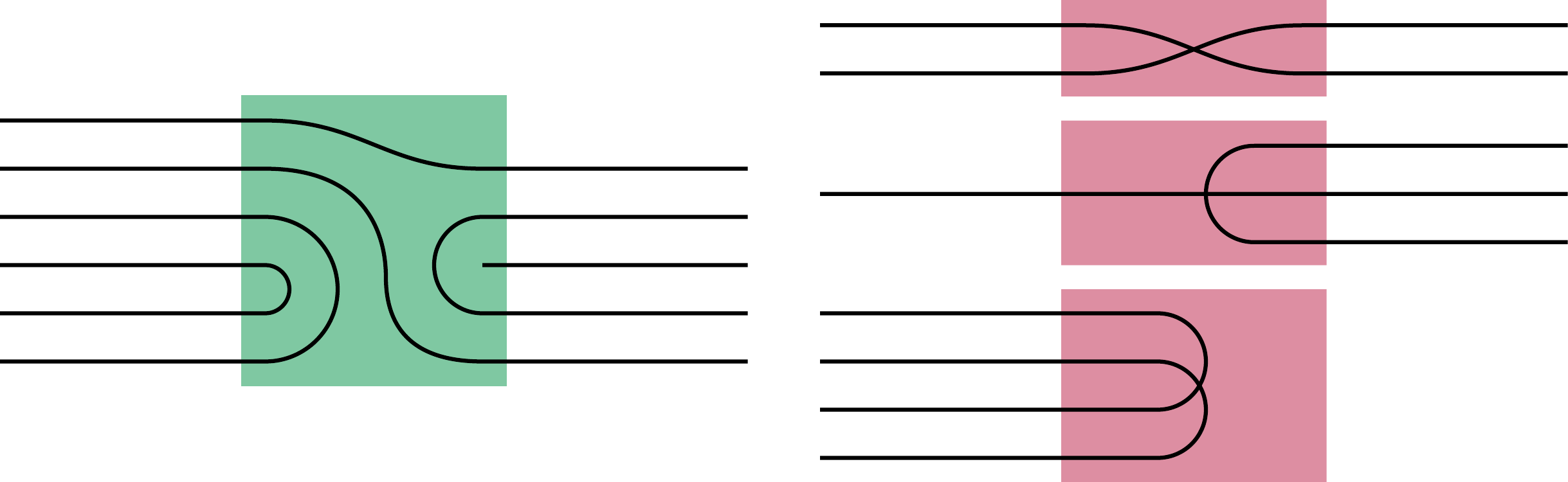}
\caption{Left: cross-section through a crease (shaded region) of a uncrossed layering. Right: Three ways that a layering can be inconsistent across a crease: two uncreased polygons cross (top), an uncreased polygon is blocked by two layers that connect to form a crease (middle), or two pairs of creased layers cross (bottom).}
\label{fig:layer-consistency}
\end{figure}

It is unnecessary to add more case analysis for the way layerings can interact at a vertex, instead of across a crease. Two surfaces in 3d cannot cross each other at a single point, without crossing along a curve touching that point, so if a system of surfaces in 3d defined from a uncrossed layering avoids crossings except at points $\varepsilon$-close to the vertices, it can be converted into a topological embedding for the same layering that avoids crossing everywhere.
\end{proof}

\section{Parameterized and fine-grained complexity}
\label{sec:param}

In this section, we show that flat-foldability is \emph{fixed-parameter tractable} when parameterized by two values: the \emph{ply} of the crease pattern (how many layers of paper can overlap at any point of the flat-folded result), and the \emph{treewidth} of an associated \emph{cell adjacency graph} constructed by overlaying the flat polygons of the crease pattern in the positions they would take in their folded state. The pattern may either be labeled with mountain and valley folds or unlabeled. We identify a wide class of patterns for which flat foldability is easy: those with bounded ply and bounded treewidth.

Bounded ply is natural in paper folding, as large ply can make physical realization difficult~\cite{DemEppHes-JDA-16}. Treewidth generalizes the notion of a crease pattern that is complicated only in one dimension, and simple in a perpendicular dimension, as occurs (with large ply) for $2\times n$ map folding.  Single-vertex crease patterns automatically have low treewidth (their cell adjacency graph is just a cycle; see \cref{sec:cell-adj}) but may have high ply. Fixed-parameter tractability means that the worst-case time bound has the form of a polynomial in the input size, multiplied by a non-polynomial function of the parameters; in our case this function is factorial in the ply and exponential in the treewidth. On inputs with bounded parameters, the time bound simplifies to a polynomial of the input size.

 For flat foldings of bounded ply, our algorithm is single-exponential in the treewidth. This dependence is unavoidable under the exponential time hypothesis (\cref{thm:eth}). We do not have as strong a justification for the dependence on ply, but if it could be eliminated, we could solve \emph{map folding} in polynomial time. Map folding concerns flat-foldability of a square grid with edges labeled as mountain and valley folds. It has a polynomial time algorithm for $1\times n$ grids~\cite{ArkBenDem-CGTA-04} and for $2\times n$ grids~\cite{Mor-12}, but for larger grids its complexity is a major open problem. The associated cell adjacency graph is trivial and has bounded treewidth. If its time turns out to be non-polynomial, that non-polynomial dependence must come from the only other parameter, ply.

\begin{figure}[t]
\includegraphics[width=0.7\textwidth]{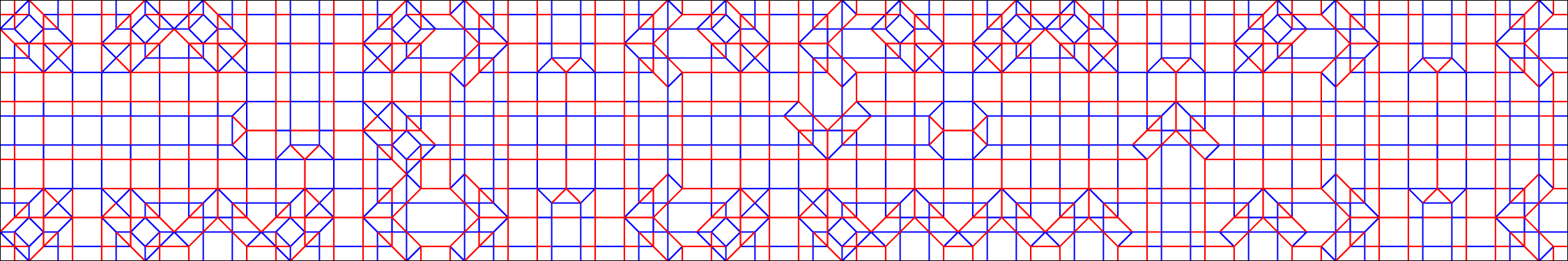}\hfill \includegraphics[width=0.25\textwidth]{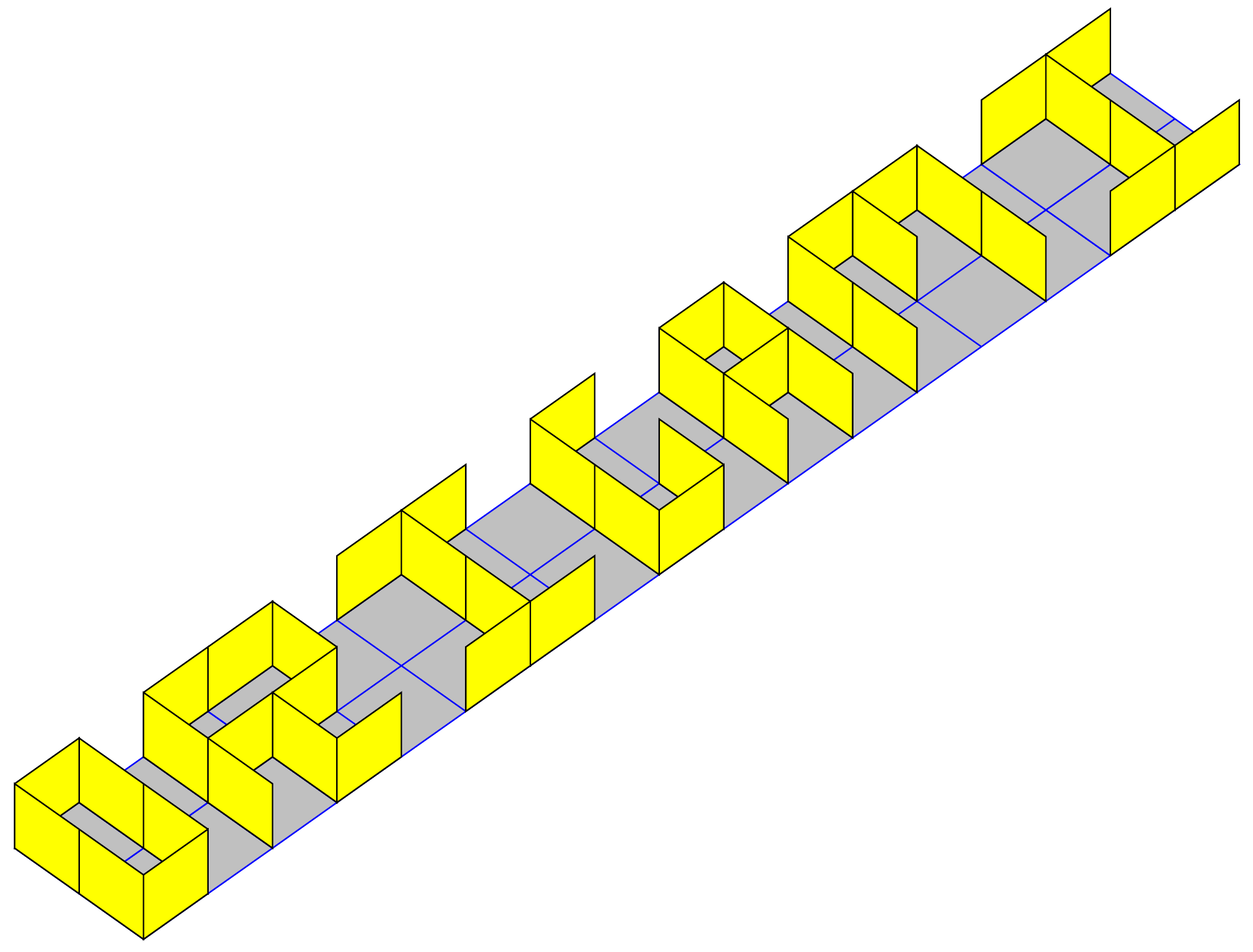}
\caption{A crease pattern for the origami font of Demaine, Demaine, and Ku, produced by \url{http://erikdemaine.org/fonts/maze/?text=origami}, and its 3d folded form.}
\label{fig:origami-maze}
\end{figure}

The parameters of our algorithm are naturally bounded in crease patterns from the origami font of Demaine, Demaine, and Ku~\cite{DemDemKu-G4G-10,DemDem-TCS-15,DemDem-JIP-20}. Rendering text in this font converts it into an origami crease pattern. When folded, this pattern produces a three-dimensional structure consisting of letterform-shaped vertical walls on a flat background surface (\cref{fig:origami-maze}). The resulting structures are not flat foldings because of the vertical walls, but can easily be modified to be. The resulting crease pattern, for a line of text, has bounded ply, high complexity along any horizontal line through the pattern, and low complexity along any vertical line. Its cell adjacency graph has bounded bandwidth, and for a modified version of the font that included ascenders and descenders it would instead have bounded pathwidth, special cases of our bounded treewidth assumption.

\subsection{Cell adjacency graphs and their treewidth}
\label{sec:treewidth}

A \emph{tree decomposition} of a graph $G$ consists of an unrooted tree $T$, and an assignment to each tree vertex $t_i$ of a set $B_i$ of vertices from $G$ (called a \emph{bag}), such that each vertex of $G$ belongs to the bags from a connected subtree of $T$, and each edge of $G$ has endpoints that belong together in at least one bag. Its \emph{width} is the maximum size of a bag, minus one, and the \emph{treewidth} of $G$ is the minimum width of any tree decomposition. Many hard graph optimization problems can be solved in linear time on graphs of bounded treewidth, using dynamic programming over tree decompositions. Finding the treewidth is hard but can be solved in linear time for graphs of bounded treewidth~\cite{Bod-SICOMP-96}. Our application uses treewidth of planar graphs, derived from the arrangement of a crease pattern. It is unknown whether planar treewidth is hard, but it has an (unparameterized) polynomial time approximation ratio of $3/2$ via an algorithm for a closely related parameter, \emph{branchwidth}~\cite{SeyTho-Comb-94}.

To simplify our algorithm we use  \emph{nice tree decompositions}, rooted trees with four types of bags:
\begin{itemize}
\item \emph{Leaf bags}, leaves of the rooted tree, have exactly one graph vertex in the bag.
\item \emph{Introduce bags} have exactly one child vertex in the tree, and their bag differs from that of the child by the addition of exactly one graph vertex.
\item \emph{Forget bags} have exactly one child vertex in the tree, and their bag differs from that of the child by the removal of exactly one graph vertex.
\item \emph{Join bags} have exactly two children, whose bags are both equal to the join bag.
\end{itemize}

A nice tree decomposition can be constructed in linear time from an arbitrary tree decomposition, without increasing the width, and it has size linear in the size of the input tree decomposition~\cite{Klo-94}.

\label{sec:cell-adj}

\begin{figure}[t]
\centering\includegraphics[scale=0.3]{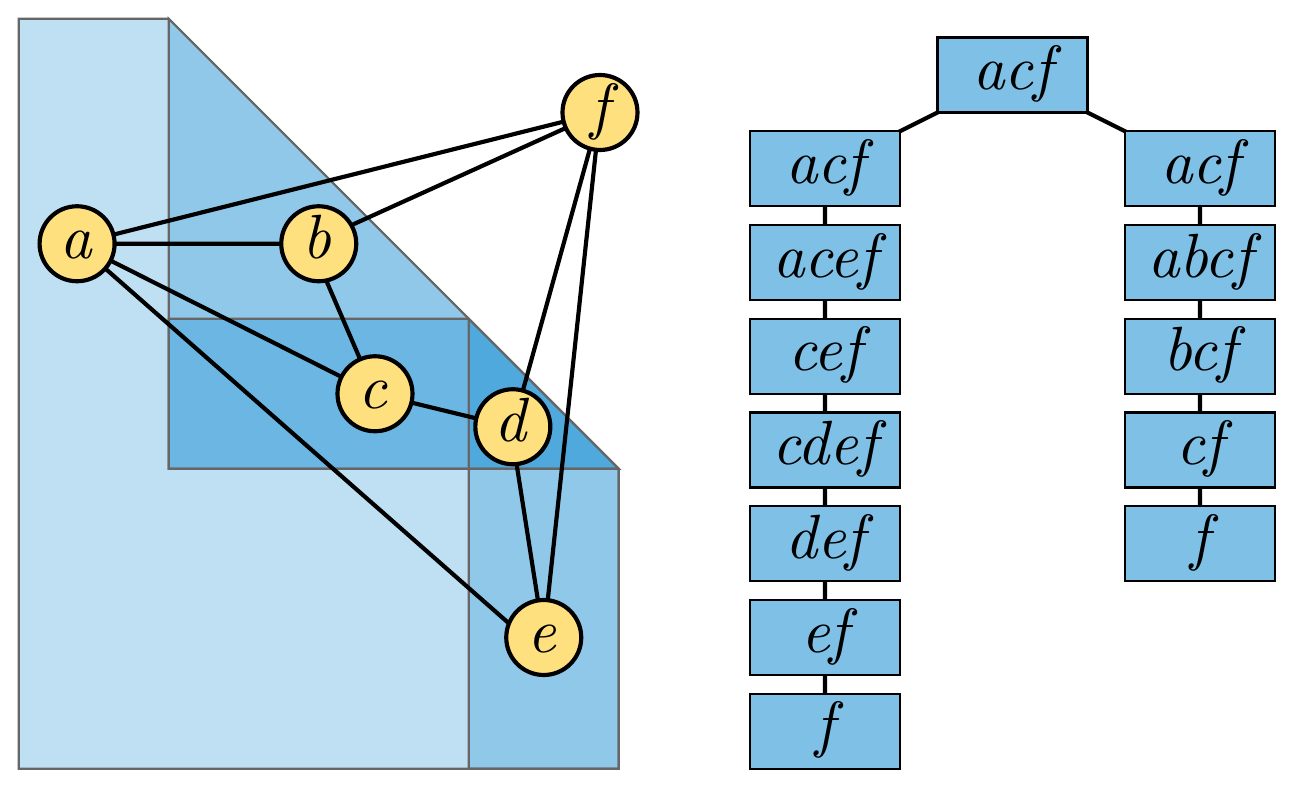}
\caption{The arrangement from \cref{fig:arrangement-ply}, its cell adjacency graph, and a nice tree decomposition.}
\label{fig:cell-adj-decomp}
\end{figure}

For finite crease patterns, our definition of flat folding involves constructing an arrangement of polygons, the images of the polygons in the crease pattern under the mapping that defines a local flat folding. Define its \emph{cell adjacency graph} to be the graph with a vertex for each cell of the arrangement, and an edge between each two neighboring cells~\cite{DenShu-DCG-88}. We include even the  cells with ply zero, in order to check for crossings along the creases between this cell and its neighbors. For example, in map folding, a square grid crease pattern is folded down to a single square, but the arrangement has two cells, the inside of the square and the outside, so the cell adjacency graph is $K_2$. In the case of a single-vertex crease pattern, the local flat folding produces an arrangement consisting of wedges all having this vertex as their apex, and its cell adjacency graph is a cycle.

The two main parameters for the analysis of our algorithm will be the ply of the local flat folding, and the treewidth of the cell adjacency graph. \cref{fig:cell-adj-decomp} depicts an example of a cell adjacency graph of treewidth 2, and a nice tree decomposition with a join bag at its root.

\subsection{The algorithm}

We will test the flat foldability of a crease pattern by first attempting to construct its local flat folding. If this step fails, a flat folding does not exist, and our algorithm exits. Next, we construct its arrangement and its cell adjacency graph,  find an optimal or near-optimal tree decomposition of the cell adjacency graph, and convert the tree decomposition to a nice tree decomposition of the same width. Finally, we reach the main part of our algorithm: a bottom-up dynamic program on the bags of the tree decomposition. If $B$ is any bag (that is, a set of cells of the arrangement, associated with a vertex of the nice tree decomposition), we define a \emph{state} of $B$ to be a layering of each cell in $B$.

\begin{observation}
In a tree decomposition of width $w$ for a crease pattern of ply $p$, every bag has at most $(p!)^{w+1}$ states.
\end{observation}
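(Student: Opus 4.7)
The plan is a direct counting argument that separates into a per-cell factor raised to the number of cells appearing in a bag. First I would invoke the definition of treewidth: the width of a tree decomposition is one less than the maximum bag size, so a bag of a width-$w$ decomposition contains at most $w+1$ vertices of the cell adjacency graph, i.e., at most $w+1$ cells of the arrangement of the local flat folding.

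Next, I would count layerings one cell at a time. By the definition of ply in \cref{sec:flat}, a cell $c$ has at most $p$ preimages under the local flat folding $\varphi$, since $p$ is the maximum ply over all cells. A layering of $c$ is, by definition, a vertical ordering of these preimages, which is simply a permutation of a set of size at most $p$, giving at most $p!$ layerings of $c$. Because a state of a bag $B$ is defined as a choice of layering for \emph{each} cell in $B$ independently, the states of $B$ are in bijection with tuples of per-cell layerings, so their count is the product of the per-cell bounds. With at most $w+1$ factors each bounded by $p!$, this product is at most $(p!)^{w+1}$, as claimed.

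The main ``obstacle'' is really just ensuring the counting is unconditional: the bound tallies all layering tuples, not only those satisfying the uncrossed conditions of \cref{lem:uncrossed} or any compatibility with neighboring bags, so no case analysis of crease interactions is needed at this stage. The substantive work on consistency across creases and across tree edges will appear in the subsequent description of the dynamic programming transitions; this observation only establishes the size of the state space that those transitions will operate on.
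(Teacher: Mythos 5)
Your argument is correct and supplies the straightforward reasoning the paper leaves implicit (the observation is stated without proof). The counting is exactly as you describe: at most $w+1$ cells per bag by the definition of width, at most $p!$ layerings per cell by the definition of ply, and a state is an independent choice of layering per cell, giving $(p!)^{w+1}$. You are also right that this is an unconditional count of the state space, with consistency and crossing constraints deferred to the dynamic program.
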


If $B$ has a child $C$ in the tree decomposition,
then we say that a state of $B$ is \emph{consistent} with a state of $C$ if they have the same layering in all cells that belong to both bags. We say that a state of  $B$ is \emph{locally uncrossed} if, for all pairs of adjacent cells in $B$, their layerings meet the same conditions that we used earlier to define a global layering as being uncrossed. We say that a state is \emph{valid} when it is locally uncrossed and is consistent with (recursively defined) valid states for all child bags.

\begin{lemma}
\label{lem:valid-fold}
A bag $B$ has a valid state if and only if there exists a layering that is uncrossed at all creases between pairs of cells that occur together in $B$ or its descendants in the tree decomposition.
\end{lemma}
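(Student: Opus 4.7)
The plan is a structural induction on the subtree of the nice tree decomposition rooted at $B$, proving both directions simultaneously. The base case (a leaf bag containing a single cell) is trivial, since no pair of distinct cells of the subtree is adjacent and any layering of the one cell is both a valid state and a globally uncrossed layering.

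For the inductive step, the forward direction (a valid state of $B$ implies a globally uncrossed layering of the entire subtree) is the more delicate one. Given a valid state $\sigma$ of $B$, each child $C_i$ has an associated valid state $\sigma_i$ consistent with $\sigma$, so by induction there exists an uncrossed layering $\lambda_i$ of the cells in the subtree rooted at $C_i$ that extends $\sigma_i$. I will glue the $\lambda_i$ together into a single layering $\lambda$ of the subtree of $B$, using $\sigma$ on $B$ itself and $\lambda_i$ on each child subtree. The well-definedness of this gluing is the key step: if a cell $x$ lies in the subtrees of both $C_i$ and $C_j$, the connected-subtree axiom of tree decompositions forces $x$ to lie in $B$ as well, so the values $\lambda_i(x)$ and $\lambda_j(x)$ both equal $\sigma(x)$. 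To verify that $\lambda$ is uncrossed at each crease between adjacent cells $u$ and $v$ in the subtree, the edge-covering axiom guarantees that some bag contains both $u$ and $v$; that bag is either $B$ itself (handled by local uncrossedness of $\sigma$) or lies entirely within a single child subtree (handled by $\lambda_i$).

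The reverse direction is routine: given an uncrossed layering $\lambda$ of the subtree of $B$, let $\sigma$ be its restriction to $B$. Local uncrossedness of $\sigma$ is immediate from the hypothesis, and the restriction of $\lambda$ to each child subtree is still uncrossed there, so by induction each child $C_i$ has a valid state equal to that restriction, which automatically agrees with $\sigma$ on the shared cells $B\cap C_i$. Hence $\sigma$ is valid.

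The main obstacle is the gluing in the forward direction. Everything hinges on the two defining axioms of a tree decomposition working in tandem: the connected-subtree axiom rules out inconsistent assignments to cells shared across sibling subtrees, while the edge-covering axiom ensures that every crease within the subtree of $B$ is checked by local uncrossedness at some bag along the way, so that no crossing can slip through the cracks of the decomposition.
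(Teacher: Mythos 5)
Your proof is correct and follows essentially the same route as the paper's: both directions hinge on the connected-subtree axiom to guarantee consistency of the recursively constructed valid states across sibling subtrees, and on the observation that every relevant crease is witnessed by local uncrossedness at some bag in the subtree. You are somewhat more explicit than the paper in spelling out the gluing step and in using the edge-covering axiom (really the Helly property of subtrees) to locate a witnessing bag for each crease, but these are refinements of the same argument rather than a different approach.
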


\begin{proof}
If such a layering exists, its restriction to the cells in $B$ and its descendant bags produces a valid state.
If a valid state exists, coming from a recursively constructed set of valid states among its descendant bags, then each of these states must consistently layer the cells that they have in common, by the requirement of tree-decompositions that each vertex belong to bags in a connected subtree. Form a global layering by choosing arbitrarily a layering for each cell that is not included among these descendants. Then it must be uncrossed at all creases between pairs of cells that occur together in $B$ or its descendants, because any crossing would cause the state to be invalid at that bag, violating the assumption that we have a recursively constructed set of valid states.
\end{proof}

\begin{lemma}
\label{lem:compute-valid}
If we have already computed the valid states of each child of a given bag $B$ of a nice tree decomposition, we can compute the valid states for $B$ itself in time $O(pw(p!)^{w+1})$.
\end{lemma}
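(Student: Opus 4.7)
The plan is to do a case analysis on the four bag types of the nice tree decomposition, inspecting at most the $(p!)^{w+1}$ candidate states of $B$ given by the preceding observation and certifying each as valid or not. I would store each bag's valid-state table as a hash set keyed by the concatenated layering, so that insertion and membership queries cost $O(pw)$. The target is $O(pw)$ work per candidate state, which matches the claimed $O(pw(p!)^{w+1})$ bound.

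For a leaf bag, with just one cell of ply at most $p$, I would enumerate the $p!$ layerings and insert them all, since local uncrossedness is vacuous and there is no child. For a forget bag $B=C\setminus\{v\}$, I would iterate over the child table and, for each valid state, delete the layering of $v$ and insert the resulting projection into $B$'s table; every pair in $B$ is also a pair in $C$, so local uncrossedness is inherited, and consistency with $C$ is automatic. By \cref{lem:valid-fold}, every valid state of $B$ arises this way. For a join bag, where both children have the same vertex set as $B$, a state of $B$ is consistent with child $C_i$ exactly when it literally occurs in $C_i$'s table, so the valid states of $B$ are the intersection of the two child tables, computed by hashing; local uncrossedness is inherited from either child. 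Each of these three cases costs $O((p!)^{w+1})$ up to polynomial factors.

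The introduce case is the bottleneck and the step that most needs care. For $B = C \cup \{v\}$ with $|C|\le w$, I would iterate over the at most $(p!)^w$ valid states of $C$ and, for each, over the $p!$ layerings of $v$. For each combined state I would verify local uncrossedness along the at most $w$ creases between $v$ and its neighbors in $B$, using the cross-section conditions of \cref{lem:uncrossed}: consistent ordering for uncreased spans, the taco--tortilla and taco--taco properties, and any mountain/valley label constraint. Along a single crease, each condition reduces to a linear sweep through the $O(p)$ paired layers in the two adjacent cells, so the per-crease cost is $O(p)$ and the total cost is $O(pw)\cdot(p!)^{w+1}$.

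The main obstacle is justifying that in the introduce step it suffices to re-check only adjacencies incident to the new cell $v$. An adjacency between two cells of $B$ that does not touch $v$ already lay inside $C$, and the tree-decomposition property guarantees that both of its endpoints coexisted in $C$ throughout its construction, so any local uncrossedness violation at such a crease would already have been filtered out of $C$'s valid-state table; hence the child state we extend is automatically uncrossed on all non-$v$-incident pairs. Combined with the consistency-by-construction argument for each bag type and with \cref{lem:valid-fold}, this yields both correctness and the stated time bound.
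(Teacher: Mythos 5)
Your proposal is correct and follows essentially the same route as the paper: a case analysis over the four bag types, with leaf/forget/join handled in the cheap direct way and the introduce bag as the dominant case checking only the $O(w)$ creases incident to the newly introduced cell, each in $O(p)$ time. Your explicit justification that non-$v$-incident adjacencies are already filtered out by the child's local-uncrossedness requirement is a useful elaboration of what the paper states more tersely as ``previously-unrepresented creases,'' but it is the same idea.
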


\begin{proof} 
We apply a case analysis according to the type of $B$ in the decomposition.
\begin{itemize}
\item At a leaf bag, all states are valid: there are no creases between pairs of cells to cause crossings.
\item At an introduce bag, we add a layering for the introduced cell to all valid layerings of the other cells from the child node. For each child layering, and each layering of the introduced cell, we check at most $w$ previously-unrepresented creases, each in time $O(p)$, to determine whether it forms any forbidden crossing type.
\item At a forget bag, all valid states of the child node determine a valid state of the bag, by forgetting the layering on the cell that is not included.
\item At a join bag, a state is valid when it is valid in both children. We can intersect the valid states in both children, in time linear in the possible states, using a bit array.\qedhere
\end{itemize}
\end{proof}

Putting these pieces together gives our main result:

\begin{theorem}
\label{thm:fpt}
Testing flat foldability of a crease pattern with $n$ creases and ply $p$, with a cell adjacency graph of treewidth $w$, can be performed in time $(p!)^{O(w)}n^2$. 
\end{theorem}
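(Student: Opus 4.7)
The plan is to assemble the pieces already established into an end-to-end algorithm, being careful that each preprocessing step fits within the $(p!)^{O(w)} n^2$ budget. First, I would run the linear-time procedure of \cref{obs:local} to produce the local flat folding $\varphi$, rejecting immediately if none exists. Second, I would compute the arrangement of the images of the creases under $\varphi$; this is a standard planar arrangement of $n$ line segments with combinatorial complexity $O(n^2)$, built in $O(n^2)$ time, and yields both the cells (with their ply counts) and their adjacencies, i.e.\ the cell adjacency graph $G$.

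The next step is to obtain a tree decomposition of $G$ whose width is $O(w)$. Since $G$ is planar (it is the dual-like adjacency graph of a planar arrangement), I would invoke a polynomial-time constant-factor approximation for planar treewidth, e.g.\ through the branchwidth algorithm of Seymour and Thomas \cite{SeyTho-Comb-94}, which gives width within $3/2$ of optimal, hence $O(w)$. I would then convert the result to a nice tree decomposition in linear time, preserving the width \cite{Klo-94}; the resulting nice decomposition has $O(n^2)$ bags.

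With the nice decomposition in hand, I would run the bottom-up dynamic program of \cref{lem:compute-valid}, storing at each bag the set of valid states (layerings of its cells), represented as a bit array indexed by the at most $(p!)^{w+1}$ possible states. By \cref{lem:compute-valid}, processing a single bag takes $O(pw(p!)^{w+1})$ time, and with $O(n^2)$ bags in total this contributes $O\bigl(p w (p!)^{w+1} n^2\bigr) = (p!)^{O(w)} n^2$ time, which dominates the $O(n^2)$ preprocessing cost.

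For correctness, I would take the root bag and declare the instance foldable iff it has a valid state. By \cref{lem:valid-fold} applied at the root, this is equivalent to the existence of a global uncrossed layering, which by \cref{lem:uncrossed} is equivalent to the existence of a flat folding consistent with $\varphi$; and any flat folding must arise from the unique (up to rigid motion) $\varphi$ produced in the first step. The main point requiring care is the treewidth computation: exact planar treewidth is not known to be polynomial, so the proof relies on being willing to lose a constant factor in the width, which gets absorbed into the exponent of $(p!)^{O(w)}$. Everything else is a direct bookkeeping exercise on top of the lemmas already proved.
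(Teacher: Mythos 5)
Your overall structure matches the paper's proof exactly: build the local flat folding, build the $O(n^2)$-size arrangement and cell adjacency graph $G$, approximate a tree decomposition, convert to a nice one, run the dynamic program of \cref{lem:compute-valid}, and decide by checking the root via \cref{lem:valid-fold} and \cref{lem:uncrossed}. The one place where you diverge, however, contains a genuine gap in the time analysis.

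You compute the approximate tree decomposition with the Seymour--Thomas ratcatcher algorithm for planar branchwidth, and then assert that the preprocessing costs $O(n^2)$. But the cell adjacency graph $G$ has $N = \Theta(n^2)$ vertices, and the Seymour--Thomas algorithm runs in time polynomial in $N$ with exponent strictly greater than one (the decision version alone is $\Theta(N^2)$, and the construction of the actual branch decomposition is slower still). Substituting $N = \Theta(n^2)$ gives at least $\Omega(n^4)$, which blows past the $n^2$ polynomial factor that \cref{thm:fpt} promises, and that factor is not absorbed by $(p!)^{O(w)}$ since it does not depend on the parameters. This is precisely why the paper does \emph{not} use Seymour--Thomas in the proof, even though it mentions it earlier as background: the proof instead invokes parameterized treewidth approximation algorithms~\cite{BodDraDre-SICOMP-16,Kor-FOCS-21}, which produce a width-$O(w)$ decomposition in time $2^{O(w)}N = 2^{O(w)}n^2$. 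That bound is subsumed by $(p!)^{O(w)} n^2$, whereas an unparameterized polynomial-in-$N$ algorithm is not. To repair your argument, replace the Seymour--Thomas step with one of these FPT approximation algorithms (or otherwise argue that a width-$O(w)$ decomposition of $G$ can be found in $2^{O(w)}n^2$ time); the rest of your proof then goes through as written.
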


\begin{proof}
We construct a nice tree decomposition and traverse it in bottom-to-top order, using \cref{lem:compute-valid} to determine the valid states in each bag. A folding exists if and only if there is a valid state at the root bag, by  \cref{lem:valid-fold}.
The arrangement of the local flat folding has size $O(n^2)$ and can be constructed in time $O(n^2)$ using standard algorithms from computational geometry, giving also the cell adjacency graph. From this graph, a tree decomposition of width $O(w)$ can be constructed in time $2^{O(w)}n^2$ using recently developed parameterized algorithms for approximating treewidth~\cite{BodDraDre-SICOMP-16,Kor-FOCS-21}.
The quadratic dependence on $n$ comes from the size of the arrangement of the local flat folding, and the size of the tree decomposition of its cell adjacency graph. The dependence on ply and width comes from the time bound per bag in \cref{lem:compute-valid}, applied to the width of the constructed tree decomposition (larger by a constant factor than the width of the cell adjacency graph). The $pw$ term in the bound of \cref{lem:compute-valid} is subsumed by other factors in the stated time bound.
\end{proof}

\subsection{ETH-hardness}

The time bound for our parameterized algorithm for flat folding in \cref{thm:fpt}, for crease patterns with ply $O(1)$, simplifies to $2^{O(w)}n^2$, single-exponential in the treewidth of the cell adjacency graph. As we now show, a bound of this form is necessary under the exponential-time hypothesis~\cite{ImpPatZan-JCSS-01}, which for our purposes is most conveniently phrased as the assumption that there does not exist an algorithm for the 3SAT (satisfiability of 3-CNF Boolean formulae with $n$ variables and $m$ clauses) that has a subexponential running time bound of the form $2^{o(n+m)}$. Our proof uses NAE3SAT (not-all-equal-3-satisfiability), a variant of 3SAT. An NAE3SAT instance consists of a system of $n$ Boolean variables and $m$ clauses, where each clause is formed by a triple of variables or their negations. It is satisfied by a truth assignment to the variables such that the three Boolean values in every clause are not all equal. Standard reductions transform any instance of 3SAT to an equivalent instance of NAE3SAT, multiplying both the number of variables and clauses by $O(1)$ factors. It follows that, under the exponential time hypothesis, it is not possible to solve NAE3SAT instances in time subexponential in their numbers of variables or clauses. The same is true more generally for a wide class of satisfiability problems including both 3SAT and NAE3SAT~\cite{JonLagNor-SODA-13}.

We base our hardness result on the proof  by Bern and Hayes that flat foldability is NP-complete~\cite{BerHay-SODA-96}, as corrected by Akitaya et al.~\cite{AkiCheDem-JCGCGG-15}. These works provide two proofs for unlabeled crease patterns and for crease patterns labeled with mountain and valley folds, both following the same outline. They are reductions from NAE3SAT that produce crease patterns in the shape of a rectangle. Each NAE3SAT variable is represented by closely-spaced horizontal crease lines that can be folded consistently in two ways representing true and false values. Each NAE3SAT clause is represented by a \emph{clause gadget} near the top of the rectangle, connected to its three variables by vertical systems of parallel creases that cross the other variables without otherwise interacting with them. The clause gadgets can only be flat-folded for truth assignments that assign unequal values to the connected variables. When a flat folding exists, and the construction is flat-folded, most of the paper has ply 1, with bounded ply within the various gadgets of the construction. \cref{fig:bern-hayes} provides a schematic view.

\begin{figure}[t]
\centering\includegraphics[width=0.8\textwidth]{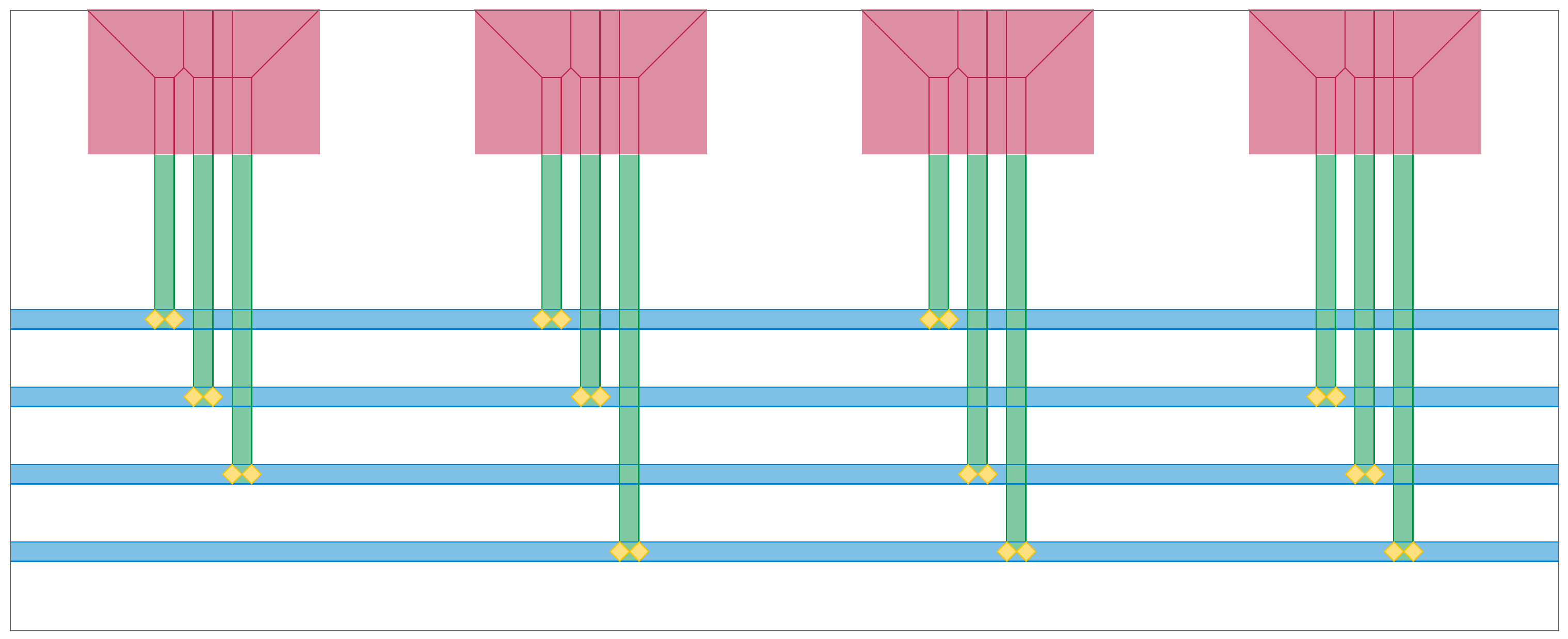}
\caption{Schematic view of the hardness reductions of Bern and Hayes~\cite{BerHay-SODA-96}, as corrected by Akitaya et al.~\cite{AkiCheDem-JCGCGG-15}. The red regions at top are clause gadgets and the blue horizontal left--right paths are variable gadgets. Variable gadgets are connected to clause gadgets by vertical signals (light green). Yellow splitter gadgets connect variables to signals; crossover gadgets are depicted as green squares.}
\label{fig:bern-hayes}
\end{figure}

\begin{observation}
The local flat foldings of the crease patterns of Bern and Hayes and Akitaya et al. have ply $O(1)$. For a NAE3SAT instance with $n$ vertices and $m$ clauses, they have treewidth $O(n)$, obtained by a path decomposition whose bags are the subsets of cells of the local flat folding intersected by vertical lines, in left-to-right order.
\end{observation}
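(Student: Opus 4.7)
The plan is to prove the two parameter bounds separately. For the ply bound, the argument is essentially a reminder of the gadget design: each of the fixed gadget types in the Bern--Hayes and Akitaya et al.\ constructions (variable strips, clause gadgets, vertical signals, splitters, and crossovers) is a fixed finite object whose flat-folded realization has ply $O(1)$, and the paper between gadgets folds to ply one. The overall ply is the maximum of these bounded contributions and is itself $O(1)$, independent of $n$ and $m$.

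For the treewidth bound, I would construct an explicit path decomposition of the cell adjacency graph by sweeping a vertical line across the arrangement of the local flat folding. For each horizontal coordinate $c$ in the horizontal extent of the arrangement, let the bag $B(c)$ consist of all arrangement cells whose closure meets the line $x=c$, together with the unbounded outer cell. As $c$ varies, $B(c)$ changes only at the finitely many critical $x$-coordinates given by arrangement vertices and crease-image endpoints; listing the resulting distinct bags in left-to-right order yields a path decomposition of polynomial length.

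The two tree-decomposition conditions are straightforward to verify. Each cell belongs to an interval of bags because a cell of a planar line-segment arrangement is a simply-connected open region whose projection onto the $x$-axis is a single interval. Each edge of the cell adjacency graph is covered because two cells sharing an arrangement edge both lie in $B(c)$ for any $c$ in the horizontal projection of that shared edge. The width of the decomposition therefore equals the maximum number of arrangement cells met by any vertical line.

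The crux, and the main obstacle, is to bound this vertical-line complexity by $O(n)$. For this I would appeal to the specific layout of the Bern--Hayes construction depicted in \cref{fig:bern-hayes}: the unfolded rectangle stacks $n$ horizontal variable strips vertically, with clauses placed across the top connected to their three variables by a constant number of vertical signal columns per clause. A vertical line in the unfolded rectangle therefore meets all $n$ variable strips (each of $O(1)$ internal complexity) and at most $O(1)$ signal columns at the given $x$-coordinate, plus $O(1)$ clause/splitter/crossover contributions at the top. Because each variable gadget folds only along its own internal horizontal creases, and each of the other gadget types has bounded size, the folded arrangement preserves this horizontal structure, so a vertical line in the image meets only $O(n)$ cells. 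This yields a path decomposition of width $O(n)$ and establishes the claimed treewidth bound.
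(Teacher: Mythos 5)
Your proposal is correct and follows essentially the same approach as the paper: the paper states this observation without a formal proof, relying on the preceding description of the Bern--Hayes/Akitaya et al.\ layout (bounded ply inside each gadget, $n$ horizontal variable strips with $O(1)$ vertical gadgetry per $x$-coordinate) together with the vertical-sweep path decomposition named in the observation itself, and your write-up simply fills in those same details. One tiny nit: you invoke simple connectivity of arrangement cells to get interval projections, but mere connectedness suffices (and segment-arrangement cells need not be simply connected in general), so you should weaken that clause, though the conclusion is unaffected.
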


\begin{theorem}
\label{thm:eth}
If the exponential time hypothesis is true, it is not possible to test flat foldability of crease patterns of ply $O(1)$ and treewidth $w$ in time $2^{o(w)}$, regardless of whether the pattern is labeled with mountain and valley folds or unlabeled.
\end{theorem}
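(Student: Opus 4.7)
The plan is to derive the lower bound by composing the preceding observation with the Bern--Hayes/Akitaya et al. reductions and chasing the parameters through a hypothetical subexponential-in-treewidth algorithm. Concretely, I would argue the contrapositive: suppose an algorithm $A$ decides flat foldability of a crease pattern with $n$ creases, ply $O(1)$, and cell-adjacency treewidth $w$ in time $2^{o(w)} \cdot \mathrm{poly}(n)$, and derive a subexponential algorithm for NAE3SAT (hence, via the standard $O(1)$-factor reduction, for 3SAT), contradicting the exponential time hypothesis as phrased in the excerpt.

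The steps, in order, are as follows. First, take an arbitrary 3SAT instance $\varphi$ with $n$ variables and $m$ clauses and apply the standard reduction to produce an equivalent NAE3SAT instance $\psi$ with $n' = O(n+m)$ variables and $m' = O(n+m)$ clauses. Second, apply the Bern--Hayes reduction (either the unlabeled version or the mountain/valley-labeled version, corrected by Akitaya et al.) to $\psi$, yielding a crease pattern $P$ of size $\mathrm{poly}(n'+m')$ that is flat-foldable iff $\psi$ is satisfiable. Third, invoke the preceding observation: the local flat folding of $P$ has ply $O(1)$, and its cell-adjacency graph has treewidth $O(n'+m') = O(n+m)$ via the path decomposition whose bags are vertical slices of the arrangement. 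Fourth, run $A$ on $P$; its running time is $2^{o(w)} \cdot \mathrm{poly}(|P|) = 2^{o(n+m)} \cdot \mathrm{poly}(n+m)$, which decides 3SAT in subexponential time, contradicting ETH.

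The only step that requires genuine care, and thus the main obstacle, is justifying the treewidth bound asserted in the preceding observation — namely, that a sweep by vertical lines across the Bern--Hayes/Akitaya et al. pattern produces a valid path decomposition of the cell adjacency graph with bag size $O(n+m)$. I would verify this by noting that the construction is laid out as horizontal variable gadgets stacked to height $O(n')$ and clause gadgets along the top with vertical signal wires of bounded thickness, so any vertical line crosses $O(n')$ variable tracks and $O(1)$ signal wires per active clause column, giving $O(n'+m')$ cells per bag; the adjacencies between cells are all between cells whose horizontal $x$-extents overlap, so any edge of the cell adjacency graph has both endpoints together in some vertical slice. One then has to confirm that cells of ply zero (the exterior) are handled without inflating the bags — which is immediate, since the unbounded exterior cell can be placed in every bag at the cost of adding one to the width. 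The final touch is to observe that the reduction, the construction of the arrangement and cell adjacency graph, and the approximate tree decomposition of \cref{thm:fpt} all run in polynomial time, so all overhead outside the $2^{o(w)}$ factor is absorbed into $\mathrm{poly}(n+m)$, completing the contradiction for both the labeled and unlabeled variants of the problem.
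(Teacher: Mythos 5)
Your proposal is correct and matches the paper's argument: both treat ETH as ruling out $2^{o(n+m)}$ algorithms for (NAE)3SAT, invoke the Bern--Hayes/Akitaya et al.\ reduction, appeal to the observation that the resulting crease patterns have ply $O(1)$ and treewidth $O(n+m)$ via a vertical-sweep path decomposition, and conclude that a $2^{o(w)}$-time flat-foldability tester would violate ETH. Your additional sketch of why the vertical sweep yields a valid path decomposition is a reasonable elaboration of the observation the paper states without proof, but the proof itself is the same.
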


\begin{proof}
If such a fast test existed, then applying it to the crease patterns produced by the hardness reductions of Bern and Hayes and Akitaya et al. would give an algorithm for NAE3SAT with time $2^{o(m)}$, contradicting the exponential time hypothesis.
\end{proof}

\section{Galois complexity}
\label{sec:galois}

For flat-foldable and finite crease patterns, all vertex positions in the final folded pattern are reflections of their initial positions across finitely many creases (\cref{obs:local}). It follows that, if the crease endpoints have rational Cartesian coordinates, then all final vertex positions will also be rational. In contrast, as we show in this section, for fully three-dimensional folding, even the accurate representation of the final folded state can be computationally difficult.

In previous work with Bannister et al.~\cite{BanDevEpp-JGAA-15}, we formalized problems of the exact representation of numeric values arising from geometric computation under the framework of \emph{Galois complexity}. Many geometric problems with integer or combinatorial inputs have output coordinate values that are \emph{algebraic numbers}, roots of polynomials with integer coefficients. The exact representation of these numbers as closed-form expressions using radicals has been studied in \emph{Galois theory}, according to which groups associated with these polynomials (their \emph{Galois groups}) determine the existence of a closed-form formula. The algebraic numbers that one wishes to represent have a closed-form formula if and only if the group is a \emph{solvable group}, having a decomposition into a sequence of cyclic quotient groups. In the \emph{radical computation tree} model of computation of Bannister et al., based on algebraic computation trees, with arithmetic and radicals as its fundamental operations, the roots of polynomials with unsolvable Galois groups cannot be computed. An alternative model of computation of Bannister et al. is the \emph{root computation tree}. This uses algebraic computation trees with arithmetic and roots of bounded-degree polynomials as its fundamental operations. As Bannister et al. proved, this model is incapable of constructing high-order regular polygons:
 there exist infinitely many prime numbers $p$ such that constructing a regular $p$-gon requires taking roots of polynomials of unbounded degree, $\Omega(p^{0.677})$.

\begin{figure}[t]
\centering\includegraphics[width=0.4\textwidth]{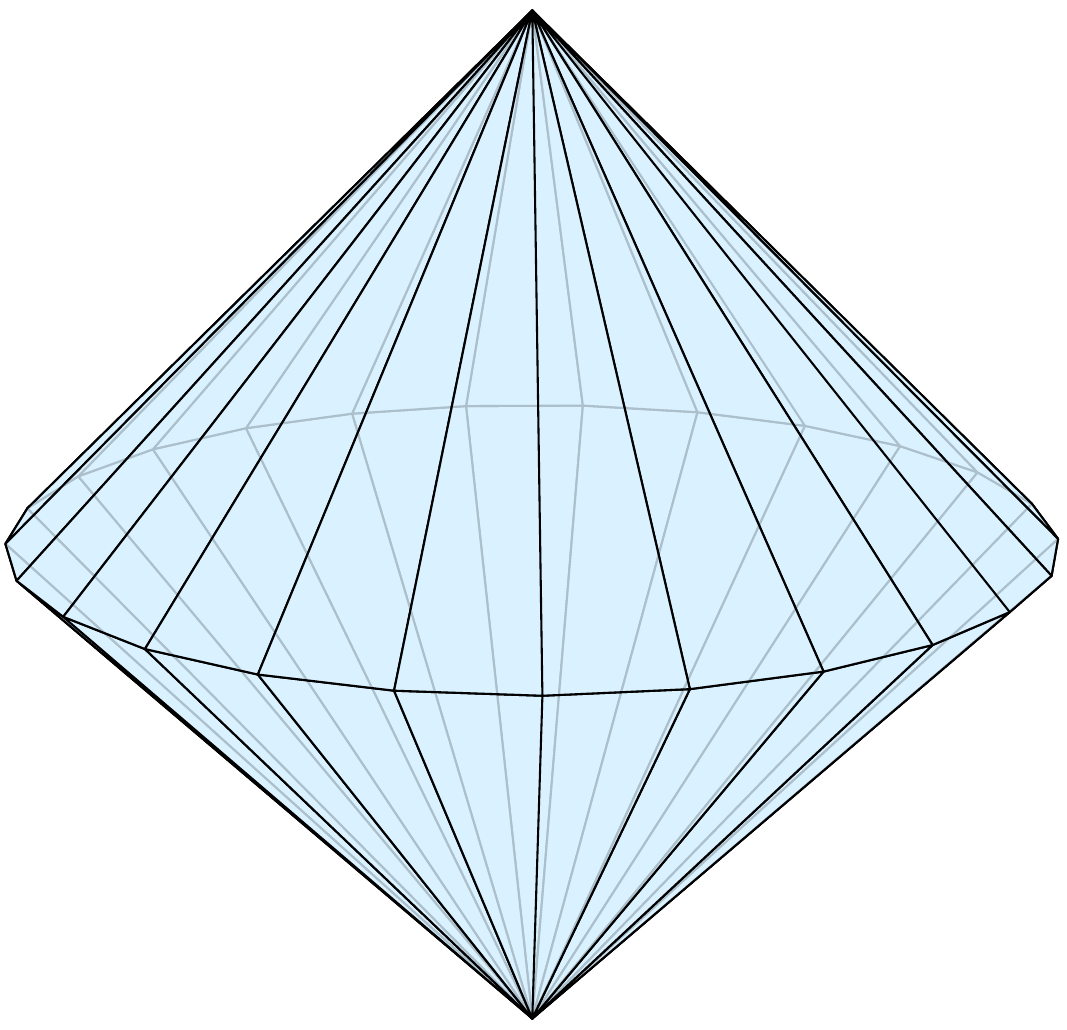}
\caption{A bipyramid with congruent isosceles faces and a regular polygon as its equator, from \cite{Epp-GCOM-21}.}
\label{fig:bipyramid}
\end{figure}

To apply this theory to folding, we need patterns whose folded states are easy to describe, but have coordinates that are hard to represent. The folded states that we use are \emph{bipyramids}, convex polyhedra in which the faces are isosceles triangles, with their bases forming a cycle (the \emph{equator}) and their apexes meeting at two \emph{poles} (\cref{fig:bipyramid}). We choose each face to have integer edge lengths, with a common length $\ell$ used for the two equal side lengths of all face isosceles triangles. A triangle with given integer edge lengths is easy to construct with compass and straightedge (given a unit segment), and all compass-and-straightedge constructions can be performed by both the radical computation tree and root computation tree models~\cite{BanDevEpp-JGAA-15}. Thus, there is no representational difficulty in the input to the folding problem, a crease pattern for a given bipyramid. The output is algebraic, satisfying certain quadratic equations constraining each polyhedron edge to have its specified length. The question is how easy these algebraic numbers are to represent.

We need the following preliminary result on the polyhedral realizations of bipyramids.

\begin{lemma}[Alexandrov uniqueness]
Any combinatorial bipyramid,  with specified edge lengths that are all equal for edges incident to bipyramid poles, is realizable as a convex polyhedron if the sum of apex angles at a pole is less than $2\pi$. If it is realizable, its realization is unique up to congruence. Its equator lies in a plane, with the vertices of the equator on a circle in the plane.
\end{lemma}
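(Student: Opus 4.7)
The plan is to view the combinatorial bipyramid with its specified edge lengths as an abstract polyhedral metric on a topological sphere, obtained by gluing the triangular faces along matching edges, and then to invoke Alexandrov's uniqueness theorem for convex polyhedra. The hypothesis that all pole-incident edges have a common length $\ell$ makes each face isosceles and makes the upper triangle over each equator edge congruent to the lower triangle over the same edge, which will be crucial for the symmetry step.

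First I would verify that the gluing produces a convex polyhedral metric, i.e., that the total cone angle at every vertex is strictly less than $2\pi$. At each pole this cone angle is the sum of apex angles, which is less than $2\pi$ by hypothesis. At an equator vertex $v_i$, four face angles meet: base angles $\beta_{i-1}, \beta_i$ coming from the two upper faces incident to $v_i$, plus the same $\beta_{i-1}, \beta_i$ from the two congruent lower faces. Writing $\beta_j = (\pi - \alpha_j)/2$ in terms of the apex angle $\alpha_j$ of the face of equator base $v_{j}v_{j+1}$, the total cone angle at $v_i$ equals $2\beta_{i-1}+2\beta_i = 2\pi - (\alpha_{i-1}+\alpha_i) < 2\pi$, so positivity of curvature at equator vertices is automatic. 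Alexandrov's theorem then gives a convex realization of the metric, unique up to rigid motion in $\mathbb{R}^3$, establishing existence and uniqueness (degenerate cases in which the realization is a doubly-covered polygon are flat, and for them the remaining claims below hold trivially).

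Next, for planarity of the equator, I would exploit the intrinsic isometry $\sigma$ of the glued surface that swaps each upper face with the corresponding lower face across their common equator edge. This $\sigma$ is well defined precisely because the two triangles sharing an equator edge are congruent (both have two sides of length $\ell$ and the same base length), and it fixes every point of the equator polygon while exchanging the poles $N$ and $S$. By the uniqueness clause of Alexandrov's theorem, $\sigma$ extends to an ambient isometry $\hat\sigma$ of $\mathbb{R}^3$ acting on the realization: the composition of the realization map with $\sigma$ gives another convex realization of the same metric, which must differ from the original by a rigid motion, and that rigid motion is $\hat\sigma$. Since $\hat\sigma$ fixes every equator vertex and swaps $N$ with $S$, it is a nontrivial isometry of $\mathbb{R}^3$ with at least a polygon's worth of fixed points and hence must be a reflection through a plane $\Pi$ containing all equator vertices.

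Finally, for cocircularity, let $M$ be the midpoint of $NS$; reflection through $\Pi$ swaps $N$ and $S$, so $M \in \Pi$ and the segment $NM$ is perpendicular to $\Pi$. Each equator vertex $v$ lies in $\Pi$ and satisfies $|vN| = \ell$, so by the Pythagorean theorem $|vM| = \sqrt{\ell^2 - |MN|^2}$, a quantity independent of $v$. Hence all equator vertices lie on the circle in $\Pi$ centered at $M$ of this radius. The main obstacle I anticipate is the rigorous invocation of the "intrinsic isometries extend to ambient isometries" consequence of Alexandrov uniqueness and the handling of the doubly-covered degenerate case; both are standard but require care to state cleanly.
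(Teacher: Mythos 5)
Your proposal is correct and follows essentially the same route as the paper: verify positive angular deficit at poles (by hypothesis) and at equator vertices (four base angles of isosceles triangles summing below $2\pi$), invoke Alexandrov's uniqueness theorem for existence and uniqueness, use uniqueness to promote the combinatorial pole-swapping symmetry to an ambient reflection whose mirror plane contains the equator, and then apply the Pythagorean theorem to get equidistance from the midpoint of the pole segment. Your explicit cone-angle computation $2\beta_{i-1}+2\beta_i = 2\pi - (\alpha_{i-1}+\alpha_i)$ and your remark about handling the doubly-covered degenerate case are minor elaborations of steps the paper treats more tersely, but the underlying argument is the same.
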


\begin{proof}
Existence and uniqueness of the realization are a special case of Alexandrov's uniqueness theorem~\cite{Ale-06}, according to which any edge-to-edge gluing of a polyhedron with the topology of a sphere and with positive angular deficit at each vertex (necessarily summing to $4\pi$ by the Gauss--Bonnet theorem) has a unique realization as a convex polyhedron.  At a vertex of the equator of the bipyramid, four acute base angles of isosceles triangles vertices meet, summing to less than $2\pi$ and giving positive deficit. The positive deficit at the poles is an assumption of our lemma. Therefore, the preconditions for Alexandrov uniqueness are met.

By uniqueness, every combinatorial symmetry of the gluing pattern extends to a geometric symmetry of the realization, for otherwise the realization of a combinatorially symmetric copy of the gluing pattern would give a second realization. Thus, the symmetry that swaps the two poles of the bipyramid (combinatorially) and fixes the equatorial vertices is realized by a geometric symmetry that likewise swaps the poles and fixes the equator. This can only be a reflection symmetry, with the equator lying on the plane of reflection. The two poles must be reflections of each other, so the line segment between the two poles crosses this reflection plane perpendicularly.

All edges incident to poles have the same length $\ell$. Let $s$ be half the length of the line segment between the two poles (the \emph{half-diagonal} of the bipyramid), and let $m$ be the midpoint of this segment, at distance $s$ from each pole. Then each equator vertex forms a right triangle with $m$ and each pole, because the segment from the pole to $s$ is perpendicular to the reflection plane containing the equator vertex. By the Pythagorean theorem, the distance from the equator vertex to $m$ is $\sqrt{\ell^2-s^2}$. Because this distance formula is the same for all equator vertices, they are equidistance from $m$, and lie in a circle on the reflection plane with this radius, centered at $m$.
\end{proof}

\begin{observation}
For every cyclic polygon containing the center of its circle and for every $\ell$ greater than the radius of the polygon, there exists a convex bipyramid with the given polygon as its equator and with all edges incident to poles having length $\ell$.
\end{observation}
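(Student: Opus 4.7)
The plan is to construct the bipyramid explicitly in coordinates and then verify its combinatorial type using supporting hyperplanes. First I would place the cyclic polygon $P$ in the plane $z=0$ with its circumscribing circle of radius $r$ centered at the origin, and take the two poles to be $p_{\pm} := (0, 0, \pm s)$ where $s := \sqrt{\ell^{2} - r^{2}}$; this $s$ is real and positive because $\ell > r$. For any equator vertex $v$, the triangle with vertices $v$, the origin, and $p_{\pm}$ is right-angled at the origin with legs $r$ and $s$, so $\lvert v - p_{\pm}\rvert = \sqrt{r^{2}+s^{2}} = \ell$, matching the required pole-incident edge length. Let $B$ denote the convex hull of the equator vertices together with $p_{+}$ and $p_{-}$.

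The next step is to show that, for each equator edge $v_{i}v_{i+1}$, the candidate triangles $v_{i}v_{i+1}p_{+}$ and $v_{i}v_{i+1}p_{-}$ are genuine facets of $B$. Writing the line through $v_{i},v_{i+1}$ inside the $xy$-plane as $n \cdot (x,y) = c$ with $n$ a unit normal pointing outward from $P$, the hypothesis that $P$ contains the center of its circle forces $c > 0$, and convexity of $P$ then yields $n \cdot v_{j} < c$ for every other equator vertex $v_{j}$ (strict because a line meets the circle in at most two points). The plane through $v_{i}$, $v_{i+1}$, and $p_{+}$ has equation $n \cdot (x,y) + (c/s)\,z = c$, as one checks by direct substitution. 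Evaluating the left-hand side at $p_{-}$ gives $-c < c$, and at any other equator vertex gives $n \cdot v_{j} < c$, so this plane supports $B$ with the triangle as its contact set, making $v_{i}v_{i+1}p_{+}$ a facet; the case of $p_{-}$ is symmetric by reflection through $z=0$. Since these $2n$ triangles partition the topological sphere, they exhaust the facets of $B$, and $B$ is therefore a convex bipyramid with equator $P$ and all pole-incident edges of length $\ell$.

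The argument involves no essential difficulty --- it is essentially a direct computation --- but the one critical step is the invocation of the hypothesis that $P$ contains the center of its circumscribing circle. This hypothesis is exactly what forces $c > 0$, which in turn keeps each pole's orthogonal projection strictly inside $P$ and prevents the combinatorial structure of the convex hull from degenerating; without it, some of the candidate triangles $v_{i}v_{i+1}p_{\pm}$ would no longer lie on supporting planes of $B$, and the convex hull would have a different, non-bipyramidal facet structure.
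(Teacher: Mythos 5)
Your proof is correct and uses essentially the same construction as the paper: apply the Pythagorean theorem to obtain the half-diagonal $s=\sqrt{\ell^{2}-r^{2}}$ and place the two poles at distance $s$ above and below the circumcenter along the perpendicular to the polygon's plane. The paper stops there and treats the convexity and facet structure of the result as self-evident, whereas you add a welcome verification that the supporting planes really do cut out the $2n$ triangular facets of a bipyramid, making the role of the hypothesis that the polygon contains its circumcenter (forcing $c>0$, so the poles project to an interior point and the hull does not degenerate) explicit rather than implicit.
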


\begin{proof}
Use the Pythagorean theorem in the same way to determine the half-diagonal $s$ as $\sqrt{\ell^2-r^2}$, where $r$ is the radius of the circle, and place the two poles at distance $s$ along a line perpendicular to the plane of the polygon through the center of the circle.
\end{proof}

Because of these result, from now on we can consider cyclic polygons in the plane instead of bipyramids in space.

\begin{figure}[t]
\centering\includegraphics[width=0.35\textwidth]{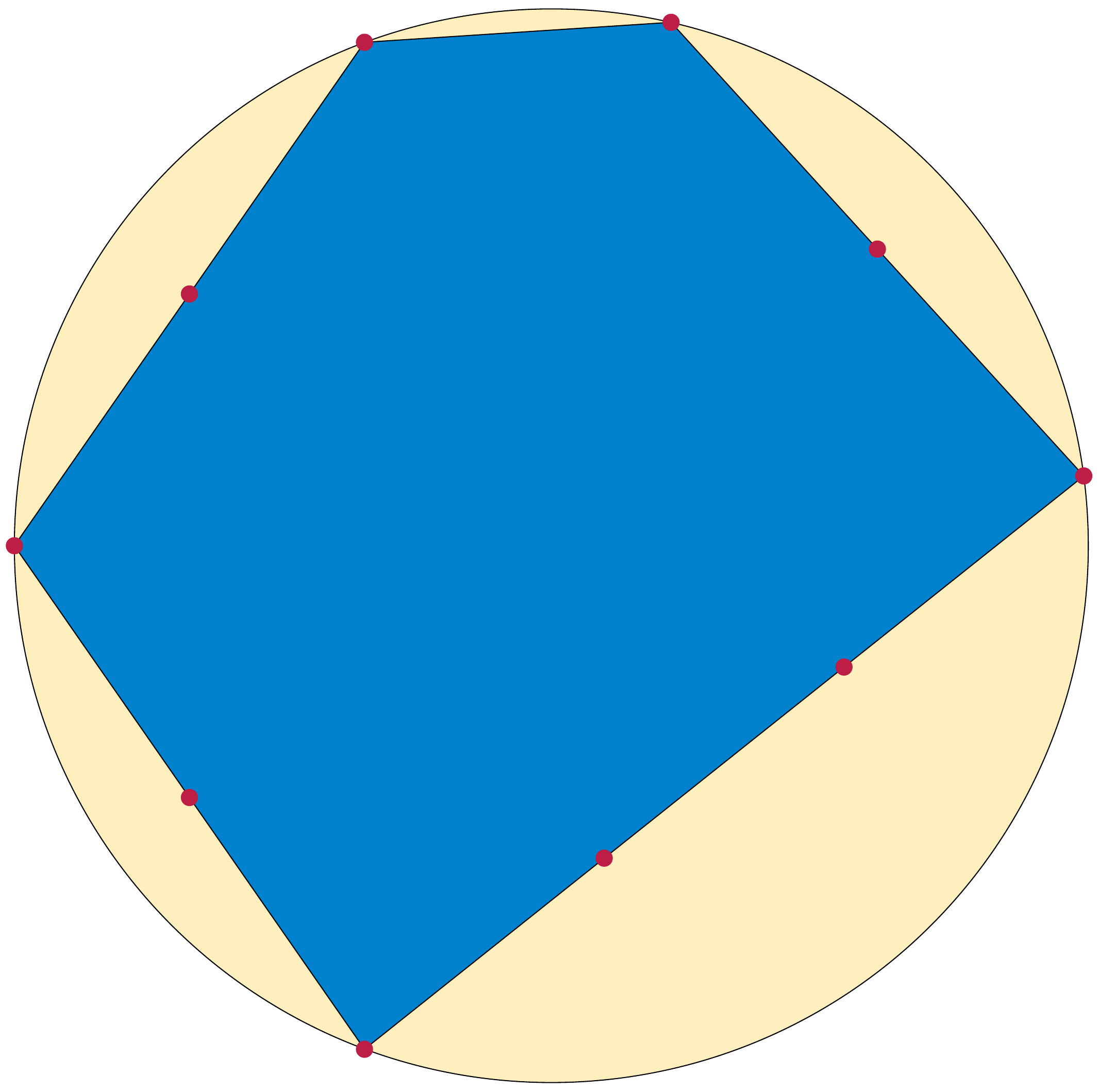}
\caption{A cyclic pentagon with side lengths $(1,2,3,2,2)$ and circumradius $\approx 1.75113$}
\label{fig:22233-cyclic-pentagon}
\end{figure}

\begin{lemma}
\label{lem:cyclic-existence}
Let $S=(s_1,s_2,\dots)$ be finitely many positive real numbers such that $\max S\le\sum S/(1+\pi/2)$, Then there exists a cyclic polygon with these edge lengths, containing the center of its circle.
\end{lemma}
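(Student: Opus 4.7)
The plan is to parameterize the construction by the circumradius $r$ and use the intermediate value theorem. Given a candidate circle of radius $r$, a cyclic polygon whose consecutive side lengths are $s_1,s_2,\dots$ exists precisely when the central angles $\theta_i=2\arcsin(s_i/2r)$ sum to $2\pi$, which we can always arrange geometrically by placing vertices consecutively around the circle at those angular offsets. The polygon contains the center if and only if every central angle satisfies $\theta_i\le\pi$, i.e., every chord is at most a diameter; equivalently, $2r\ge M:=\max S$. So the task reduces to finding $r\ge M/2$ with $f(r):=\sum_i\arcsin(s_i/2r)=\pi$.

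Next I would study $f$ on the interval $[M/2,\infty)$. Each summand is continuous and strictly decreasing in $r$ (since $\arcsin$ is increasing and $s_i/2r$ is strictly decreasing), so $f$ is continuous and strictly decreasing, with $f(r)\to 0$ as $r\to\infty$. The heart of the argument is to show $f(M/2)\ge\pi$; once this is established, the intermediate value theorem produces a unique $r_\star\in[M/2,\infty)$ with $f(r_\star)=\pi$, which yields the required cyclic polygon.

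The key estimate is at the endpoint $r=M/2$. Pick an index $j$ with $s_j=M$. Then $\arcsin(s_j/M)=\arcsin(1)=\pi/2$. For every other index, use the elementary inequality $\arcsin(x)\ge x$ on $[0,1]$ (immediate since $\arcsin(0)=0$ and $(\arcsin)'(x)=1/\sqrt{1-x^2}\ge 1$) to obtain
\[
f(M/2)=\arcsin(1)+\sum_{i\ne j}\arcsin(s_i/M)\ge \frac{\pi}{2}+\frac{1}{M}\sum_{i\ne j}s_i=\frac{\pi}{2}+\frac{\sum S-M}{M}.
\]
The hypothesis $M\le\sum S/(1+\pi/2)$ is equivalent to $(\sum S-M)/M\ge\pi/2$, so $f(M/2)\ge\pi$, as required. (Note also that $1+\pi/2>2$ forces $M<\sum S/2$, so the usual polygon inequality holds automatically; thus the construction is nondegenerate.)

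I do not foresee a serious obstacle: once the problem is recast as a root-finding question for $f$, the monotonicity and the boundary estimate above do all the work. The only mildly delicate point is the boundary case of the hypothesis, where equality yields $r_\star=M/2$ and the center lies on the side of length $M$ (still \emph{contained} in the polygon in the closed sense the lemma states); the uniqueness of $r_\star$ and convexity of the resulting polygon follow automatically from strict monotonicity of $f$ and from each central angle being at most $\pi$.
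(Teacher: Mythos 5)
Your proof is correct and follows essentially the same route as the paper: an intermediate value argument over the circumradius, with the boundary case $r=\max S/2$ handled by the chord-versus-arc comparison (your inequality $\arcsin x\ge x$ is exactly the paper's observation that each chord spans an arc at least as long as itself, and your condition $\theta_i\le\pi$ is its requirement that each chord fit for $r\ge\max S/2$). Your write-up is simply a more explicit, quantitative version of the paper's geometric wrapping argument, including the equality case where the center lies on the longest side.
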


\begin{proof}
Consider any circle $C_r$ of radius $r\ge\max S/2$, and inscribe a polygonal chain in $C_r$ whose edge lengths are drawn from $S$, choosing each vertex of the chain to be the next point clockwise around the circle at the correct distance. The condition that $r\ge\max S/2$ ensures that each segment of this chain will fit within $C_r$. For $r=\max S/2$, this chain will wrap around past its starting point, as each chord will span an arc of $C_r$ greater than its length, the chord of length $\max S$ will span the diameter, and the remaining chords will span an arc greater than a semicircle. However, for very large $r$, the chain will fail to wrap around $C_r$ even once. By the intermediate value theorem, there exists an $r$ for which the chain will wrap around exactly once, returning to its starting point.
\end{proof}

\cref{fig:22233-cyclic-pentagon} depicts an example, a cyclic pentagon with side lengths taken from the sequence $(1,2,3,2,2)$. Here $\max S=3$, $\sum S=10$, and $\sum S/(1+\pi/2)\approx 3.89$. For any cyclic polygon the side lengths can be permuted arbitrarily, corresponding geometrically to permuting wedges of the circumcircle; the circumradius remains unchanged. We note that a weaker condition, that $\max S\le\sum S/2$, ensures that a cyclic polygon exists (by a similar intermediate value argument with a counterclockwise turn adjacent to the longest segment) but it might not contain the circle center.

Cyclic pentagons are known to have high Galois complexity:

\begin{lemma}
\label{lem:s7-pentagon}
There exist cyclic pentagons with integer side lengths, containing the center of their circle, such that the Galois group of the minimal field containing their circumradius (over $\mathbb{Q}$) is $S_7$, the symmetric group on seven elements.
\end{lemma}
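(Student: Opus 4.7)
The plan is to exhibit integer side lengths $s_1,\ldots,s_5$ satisfying the hypothesis of \cref{lem:cyclic-existence} for which the squared circumradius $R^2$ of the resulting cyclic pentagon has minimal polynomial of degree $7$ over $\mathbb{Q}$ with Galois group equal to the full symmetric group $S_7$.

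The first step is to show that $R^2$ satisfies a degree-$7$ polynomial with rational coefficients when the $s_i$ are integers. Writing the central-angle condition $\sum_i 2\arcsin(s_i/2R) = 2\pi$ algebraically via multi-angle identities for $\cos(\sum\theta_i)=1$, substituting $\cos\theta_i = 1 - s_i^2/(2R^2)$ and $\sin\theta_i = (s_i/2R^2)\sqrt{4R^2 - s_i^2}$, and iteratively squaring to clear the five radicals, yields an integer-coefficient polynomial identity $P(R^2; s_1^2,\ldots,s_5^2) = 0$ of degree $7$ in $R^2$. This parallels Robbins's classical degree-$7$ polynomial for the squared area of a cyclic pentagon, and can in fact be derived from it by eliminating the area via the standard area--circumradius relations for the five triangles into which the pentagon decomposes. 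Consequently $R^2$ is algebraic over $\mathbb{Q}$ of degree at most $7$.

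The second step is to choose the $s_i$ so that $P$ is irreducible with Galois group $S_7$. The cleanest route is generic: regarded over the rational function field $\mathbb{Q}(s_1,\ldots,s_5)$, the polynomial $P$ is irreducible of degree $7$ and its Galois group is $S_7$; Hilbert's irreducibility theorem then provides a Zariski-dense set of integer specializations for which $P$ remains irreducible over $\mathbb{Q}$ with full Galois group $S_7$, and any such specialization compatible with \cref{lem:cyclic-existence} satisfies the lemma. A more concrete alternative is to exhibit a specific tuple (for instance $(1,2,3,2,2)$ of \cref{fig:22233-cyclic-pentagon} or a small integer perturbation), compute the coefficients of $P$ explicitly, and verify the Galois group directly by factoring $P$ modulo two well-chosen primes: a prime for which $P \bmod p$ is irreducible contributes a $7$-cycle to the Galois group, and a prime for which the factorization has cycle type $(2,1,1,1,1,1)$ contributes a transposition. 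Because $7$ is prime, any transitive subgroup of $S_7$ containing a transposition is all of $S_7$, pinning down the group.

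The main obstacle is the Galois-group determination itself. The generic-$S_7$ assertion, while strongly suggested by the combinatorial fact that permuting the five side lengths induces no algebraic relation among the seven conjugate values of $R^2$, requires justification, and the concrete mod-$p$ route requires writing out a polynomial with very large coefficients. Neither is conceptually deep---both reduce to finite calculations amenable to a computer algebra system---but identifying specific integer side lengths whose $P$ admits the two required cycle types under reduction is the nontrivial technical step that makes the statement effective rather than purely existential.
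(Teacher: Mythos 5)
Your Route A is exactly the paper's approach: appeal to a generic-$S_7$ result over $\mathbb{Q}(s_1,\ldots,s_5)$, specialize via Hilbert irreducibility, and adjust to satisfy \cref{lem:cyclic-existence}. But the step you flag as the main obstacle---proving the Galois group over the function field is actually $S_7$---is the entire content of the lemma, and your proposal leaves it unresolved, offering only the heuristic that permuting side lengths suggests no relations among the conjugates. That heuristic is not a proof: transitivity of the side-permutation action on a $5$-element set does not by itself yield a transitive action on the $7$ conjugate roots of a degree-$7$ polynomial, let alone full symmetric Galois group. The paper closes this gap by citing Varfolomeev's theorem, which establishes precisely that the generic Galois group of the circumradius field is $S_7$. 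Without that citation (or the equivalent explicit computation you sketch in Route B but do not carry out), the proposal is a correct reduction to an unproven claim rather than a proof.

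Two smaller points. First, Hilbert irreducibility gives a Zariski-dense set of good rational specializations, but you still must ensure that this dense set meets the open region where \cref{lem:cyclic-existence} applies (polygon contains its circumcenter); the paper does this by observing that density lets one pick specializations arbitrarily close to $(1,1,1,1,1)$, which lies strictly inside that region, then clearing denominators. Your phrase ``any such specialization compatible with \cref{lem:cyclic-existence}'' presupposes such specializations exist without saying why. Second, your Route B (reduction mod two primes to exhibit a $7$-cycle and a transposition) is a legitimate and more effective alternative; the paper points in that direction by noting that Royo and Tramuns verify the $(1,2,3,4,5)$ pentagon has group $S_7$ explicitly, but it does not rely on such a computation for the lemma itself.
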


\begin{proof}
Varfolomeev~\cite{Var-MS-04} proves the conclusion, that the Galois group is $S_7$, but over the field 
\[\mathbb{Q}(\alpha_1,\alpha_2,\alpha_3,\alpha_4,\alpha_5)\]
where the $\alpha_i$ are five algebraically independent side lengths of the pentagon. By a standard argument applying the Hilbert irreducibility theorem (to a minimal polynomial of a primitive element of a splitting field) the same Galois group is obtained by choosing the tuple of $\alpha_i$ values from a Zariski-dense set of rationals, which (by density) can be chosen close enough to the tuple (1,1,1,1,1) to meet the conditions of \cref{lem:cyclic-existence}. Clearing denominators produces the desired integer lengths.
\end{proof}

\begin{theorem}
\label{thm:galois}
The coordinates of the geometric realizations of bipyramids with integer side lengths can neither be computed in the radical computation tree nor in the root computation tree models of Bannister et al.~\cite{BanDevEpp-JGAA-15}.
\end{theorem}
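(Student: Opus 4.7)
The plan is to reduce the computation of bipyramid coordinates to the computation of the circumradius $r$ of its equator, and then invoke two Galois-theoretic lower bounds of Bannister et al.\ in the two models. By the Alexandrov uniqueness lemma and the observation immediately following it, a bipyramid with integer edge lengths has a unique realization determined by its combinatorial type, the pole-edge length $\ell$, and the circumradius $r$ of its equator; the equator vertex positions and the pole coordinates (at half-diagonal $s=\sqrt{\ell^2-r^2}$) can be recovered from $r$ by rational arithmetic and square roots. Conversely, $r$ equals the Euclidean distance from the midpoint of the two poles to any equator vertex, again obtainable from the coordinates by rational arithmetic and one square root. Since square roots are available in both the radical computation tree (as radicals) and the root computation tree (as roots of quadratic polynomials), computing the bipyramid's coordinates is equivalent, in either model, to computing $r$.

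For the radical case, I would apply \cref{lem:s7-pentagon} to obtain a cyclic pentagon with integer side lengths, satisfying \cref{lem:cyclic-existence}, whose circumradius $r$ has Galois group $S_7$ over $\mathbb{Q}$. Choosing any integer $\ell$ larger than $r$ (for example, $\ell$ larger than the sum of side lengths, which dominates the circumradius) yields a valid bipyramid with integer edge lengths, with positive angular deficit at the poles since the sum of pole apex angles $2p\arcsin(1/(2\ell))$ can be made arbitrarily small. Because $A_7$ is simple and non-abelian, $S_7$ is not solvable, so $r$ lies in no radical extension of $\mathbb{Q}$; by the characterization of the radical computation tree in \cite{BanDevEpp-JGAA-15}, the coordinates of this bipyramid are not computable in that model.

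For the root case, I would use bipyramids whose equators are regular $p$-gons with side length $1$ and integer pole-edge length $\ell$ chosen, say, equal to $p$. For $p\ge 3$ the regular $p$-gon satisfies \cref{lem:cyclic-existence} (since $\max S=1\le p/(1+\pi/2)$) and contains its circumcenter, while $\ell=p$ exceeds the circumradius $1/(2\sin(\pi/p))$ and makes $2p\arcsin(1/(2\ell))<2\pi$, so the bipyramid exists with all integer edges. The circumradius of a regular $p$-gon is precisely the quantity that Bannister et al.\ show, for infinitely many primes $p$, requires roots of polynomials of unbounded degree $\Omega(p^{0.677})$. Since the degree bound of a root computation tree is fixed in advance, no such tree can compute the coordinates for every bipyramid in this family.

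The main obstacle is not the Galois theory, which is borrowed wholesale from \cite{BanDevEpp-JGAA-15} in both models, but the bookkeeping that ensures each witness bipyramid is actually realizable: one must check simultaneously that the equator's side lengths satisfy \cref{lem:cyclic-existence}, that $\ell$ is an integer exceeding the (possibly irrational) circumradius, and that the apex-angle sum at each pole remains below $2\pi$ so Alexandrov's theorem applies. Once these geometric preconditions are verified for the $S_7$ pentagon and for the regular $p$-gon family, the reduction through $r$ transports the radical and root-degree lower bounds from the scalar $r$ to the full coordinate vector of the bipyramid, completing both halves of the theorem.
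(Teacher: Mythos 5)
Your proposal is correct and follows essentially the same approach as the paper: regular $p$-gon equators for the root computation tree lower bound, and the $S_7$ cyclic pentagons of \cref{lem:s7-pentagon} for the radical computation tree lower bound, reducing through the circumradius $r$ in both directions. You spell out the realizability preconditions (integer $\ell$ exceeding $r$, apex-angle sums below $2\pi$, the hypotheses of \cref{lem:cyclic-existence}) more explicitly than the paper, which states these only as the setup preceding the theorem, but the core argument is identical.
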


\begin{proof}
Uncomputability in the root computation tree follows for bipyramids over regular $p$-gons where $p$ is a large prime number for which the largest prime factor of $p-1$ is also large, as in other problems considered by Bannister et al.~\cite{BanDevEpp-JGAA-15}. Uncomputability in the radical computation tree model follows for bipyramids over the cyclic pentagons of \cref{lem:s7-pentagon}, because any closed form formulas in nested radicals for the coordinates of these pentagons could be combined to produce a formula for the pentagon's circumradius. Such a formula does not exist, because $S_7$ is not solvable.
\end{proof}

For related applications to origami of Varfolomeev's results, see Royo and Tramuns on the systems of numbers realizable by 3d origami~\cite{RoyTra-OSME-14}. Royo and Tramuns calculate explicitly that the Galois group for the cyclic pentagon with side lengths $(1,2,3,4,5)$ is $S_7$. Instead of considering bipyramids and using Alexandrov's theorem, they follow a model of 3d origami folding in which it is possible to flatten the base of an open pyramid against a tabletop.

\section{Space and counting complexity}

Although it is difficult to find a flat-folded state for a crease pattern (\cref{thm:eth}), it is common when designing crease patterns to have a flat-folded state already in mind. Instead one must find folding steps that bring the crease pattern to its known flat-folded state. This is a \emph{reconfiguration problem}, in which one must navigate a combinatorial structure of states (partial foldings) and moves between pairs of states (individual folding steps) to reach a goal (the eventual flat folding).

Past work on reconfiguration in origami includes the work of Akitaya et al. on face flips in origami tessellations~\cite{AkiDujEpp-JoCG-20}. Here, an instance is defined by an unlabeled repeating crease pattern, such as a square grid, triangular grid, or the parallelograms of the Miura-ori. The states are \emph{locally foldable} mountain--valley assignments to the creases: assignments such that a small neighborhood of each vertex can be folded flat. The moves between states, in this work, reverse all creases surrounding a single face of the crease pattern, when this reversal leads to another locally foldable state. As Akitaya et al. showed, for a tessellation of $n$ equilateral triangles, any two locally foldable states can be connected to each other by $O(n)$ moves, but it is $\mathsf{NP}$-hard to find the shortest move sequence.

Another related result in origami reconfiguration comes from a second paper by Akitaya et al.~\cite{AkiDemHor-JoCG-20}. They showed that, for rigid origami, it is weakly $\mathsf{NP}$-complete to determine whether a given crease pattern has a continuous motion from its initial flattened state that simultaneously flexes all creases. In the same model, it is strongly $\mathsf{NP}$-complete to determine whether a subset of creases can be flexed. Here, weak $\mathsf{NP}$-completeness means that the problem is hard when the crease coordinates are written in the usual way as binary numbers, while strong $\mathsf{NP}$-completeness means that it remains hard even when these numbers are written in unary. In some sense, the difficulty of a strongly $\mathsf{NP}$-complete problem is purely combinatorial, whereas a weakly $\mathsf{NP}$-complete problem may require very large and precisely-determined numerical coordinates for its hard instances.

However, for reconfiguration problems, space complexity is more natural than nondeterministic time complexity: many reconfiguration problems are $\mathsf{PSPACE}$-complete, a stronger result than $\mathsf{NP}$-completeness. In this section, we introduce a simple and natural toy model of reconfiguration in origami flat folding for which we prove reconfiguration to be $\mathsf{PSPACE}$-complete. Our proof, by a reduction from \emph{nondeterministic constraint logic}, can be adapted to also prove that counting the number of states in the same toy problem is $\mathsf{\#P}$-complete. 

\subsection{Flaps and flips}

\begin{figure}[t]
\centering\includegraphics[width=0.35\textwidth]{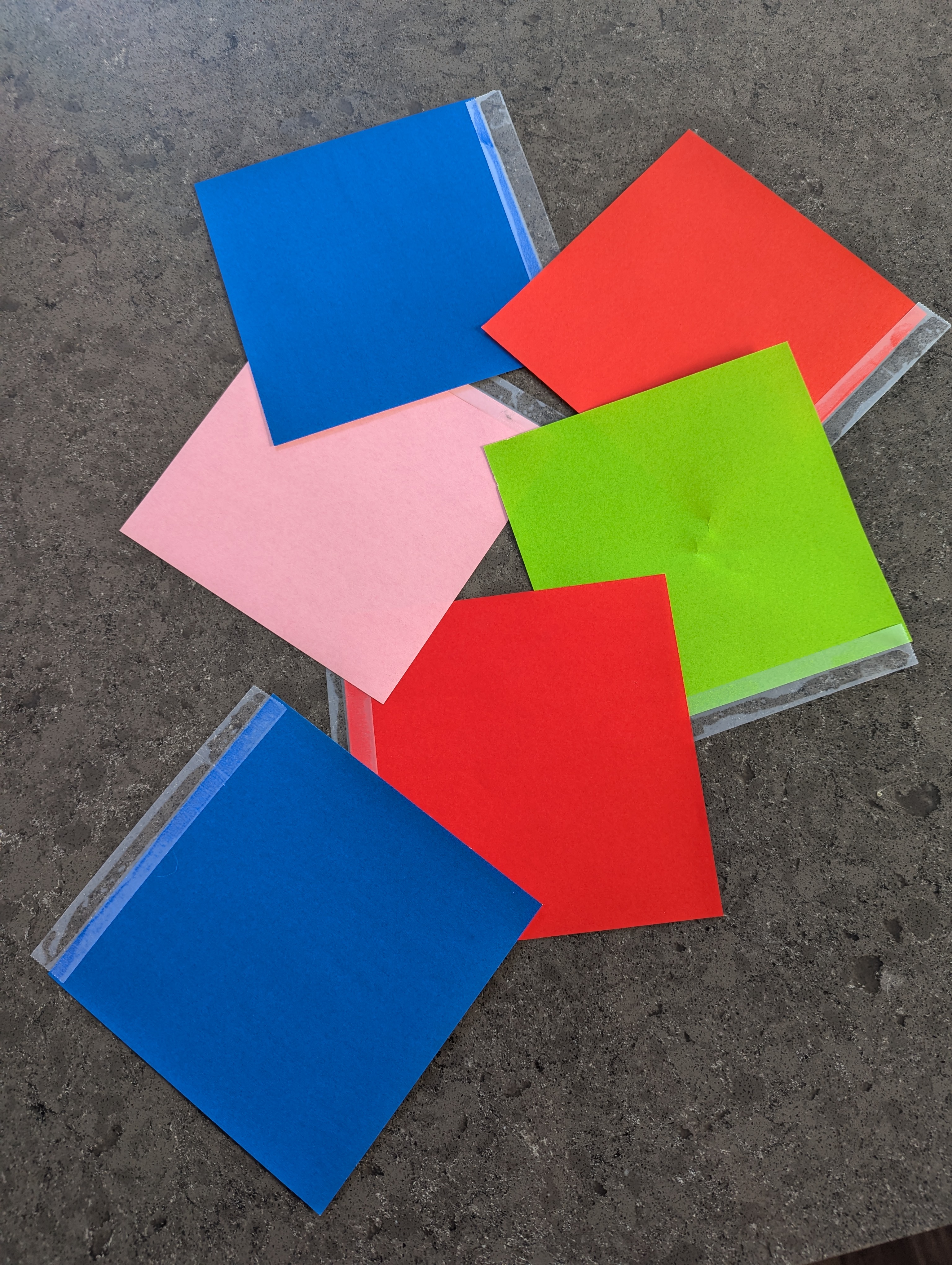}
\caption{An instance of flaps and flips, consisting of five squares of origami paper, each taped by one edge to a tabletop.}
\label{fig:fnf}
\end{figure}

We define here a family of state spaces for origami reconfiguration problems, which we call \emph{flaps and flips}. An instance of a flaps and flips problem consists of equal-sized squares of origami paper (\emph{flaps}), each attached to the surface of a rigid and flat tabletop by a hinge along the full length of one edge. Hinges are not allowed to cross through other flaps or hinges. A \emph{state} of this instance is, essentially, a flat folding in the sense described in \cref{sec:flat}: each flap must lie flat on the table, with its hinged edge in its fixed position on the table. Each two flaps that overlap must have a consistent above--below relation within their region of overlap, although the above--below relations among three or more flaps may have cycles. When one flap lies below another, the flap that is below cannot overlap the hinge of the flap that is above. The moves of this problem (\emph{flips}) change the position of a single flap to any other consistent position. We do not require rigid motion: a flap can be pulled out of a pocket formed by other flaps, or inserted into another pocket.

For an example, see \cref{fig:fnf}. The lower blue flap can flip to the other side of its hinge, as no other flap gets in its way. The central three flaps have cyclic above--below relations. The green flap can also flip, even though it is partially covered by a red flap, as the resulting state remains valid. No other flap can be moved to a different consistent position, until the flaps blocking them move.

As is standard for reconfiguration we define two versions of the question of whether a flaps and flips problem can be reconfigured. We also define an associated counting problem:
\begin{itemize}
\item Pairwise connectivity: Given two valid states of the same flaps and flips instance, does there exist a sequence of flips that transforms one of the two states into the other?
\item Global connectivity: Is every pair of valid states of a given flaps and flips instance connected by a sequence of flips?
\item Counting states: Given a flaps and flips instance, how many valid states does it have?
\end{itemize}

\subsection{Conversion from constraint logic}

We prove the $\mathsf{PSPACE}$-completeness of flaps and flips reconfiguration by a reduction from \emph{nondeterministic constraint logic} (NCL), a reconfiguration problem frequently used to prove hardness for puzzles and games~\cite{HeaDem-TCS-05,HeaDem-09}  Instances of nondeterministic constraint logic take the form of 3-regular directed graphs with edges colored blue and red, and with one or three blue edges per vertex. Each vertex must have at least one blue or two red edges directed into it. The allowed moves are to reverse one edge at a time, maintaining these constraints. \cref{fig:ncl-example} depicts an example in which the three edges incident to the yellow vertex can be reversed; the other edges cannot change orientation without violating the constraints. Reaching one orientation of the edges from another, or testing whether all orientations are connected, is $\mathsf{PSPACE}$-complete, even for planar graphs of bounded bandwidth~\cite{vdZ-IPEC-15}.

\begin{figure}[t]
\centering\includegraphics[width=0.3\textwidth]{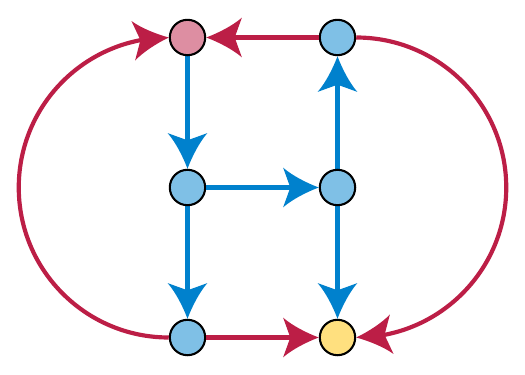}
\caption{An NCL instance. Each light blue vertex has an incoming blue edge, and the light red vertex has two incoming red edges. At the yellow vertex, both conditions are true, allowing any edge incident to this vertex to be reversed.}
\label{fig:ncl-example}
\end{figure}

\begin{figure}[t]
\centering\includegraphics[scale=0.2]{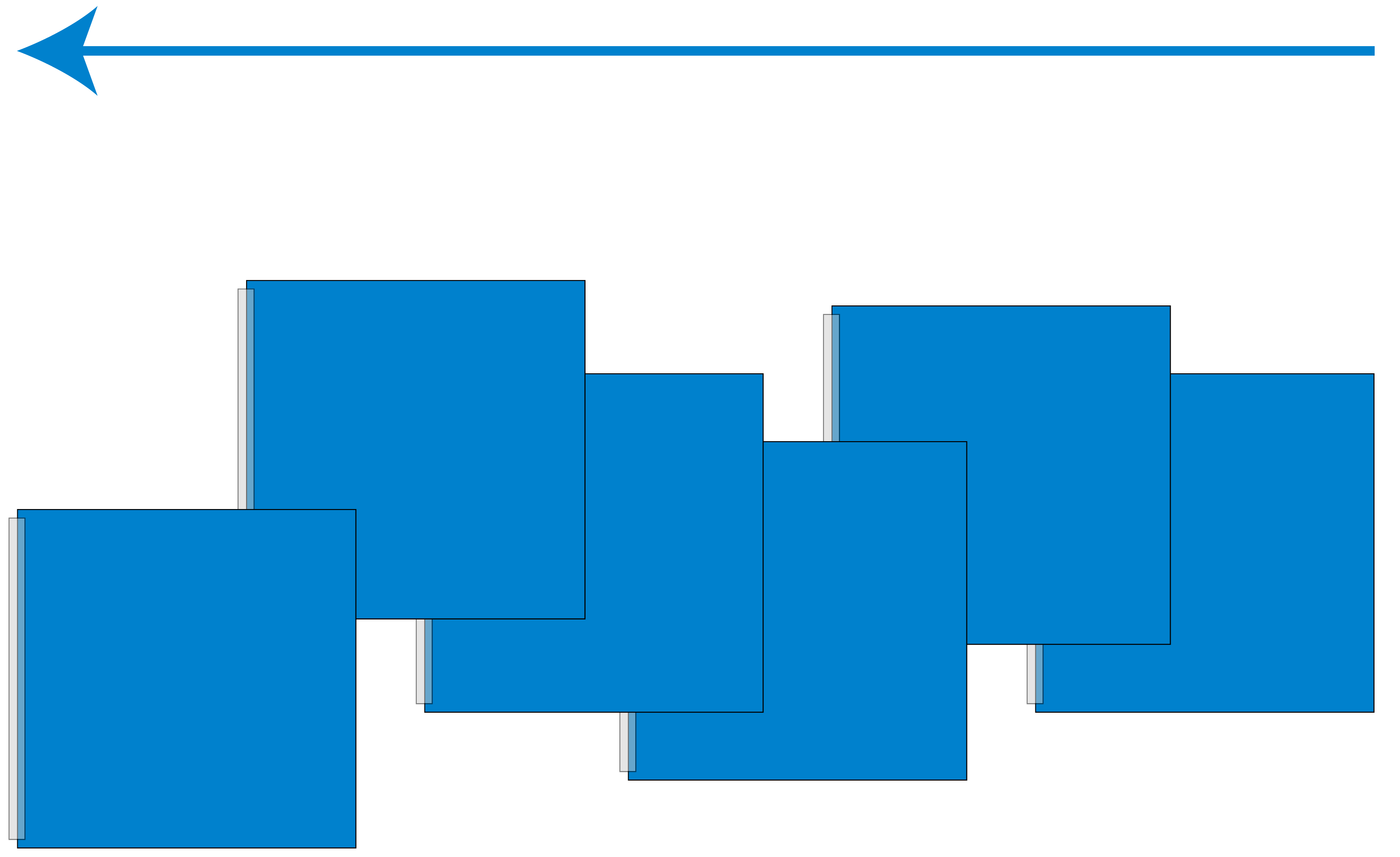}
\caption{Gadget for converting a directed edge of a nondeterministic constraint logic into flaps and flips.
Each flap, when flipped to one side of its hinge, will overlap the hinge of the next flap in that direction. The corresponding edge points in the direction of the hinges.}
\label{fig:arrow-nt}
\end{figure}

To prove $\mathsf{PSPACE}$-hardness for flaps and flips, we represent NCL edges and vertices using gadgets. For an edge gadget, lay out a sequence of flaps, each of which (when flipped to either side) covers part of the hinge of a neighboring flap in the sequence, as shown in \cref{fig:arrow-nt}. The arrowhead of the corresponding directed edge points to the uncovered hinge at one end of this sequence. To reverse the edge, flip its flaps in order, from the arrowhead to the tail. The intermediate states of these flips will leave the edge having two tails and no arrowheads, but this inconsistent state will not free up any other flaps at the ends of the arrow, so this possibility will not be problematic for reconfiguration. (We will need to take this possibility into account when we count states.) Two flaps in this gadget cannot point away from each other (with hinges on the far edges of the flaps from each other), because they would overlap, and whichever flap is below the other would make a disallowed crossing through the hinge of the other flap, so these are the only possible states of this gadget.

\begin{figure}[t]
\centering\includegraphics[scale=0.2]{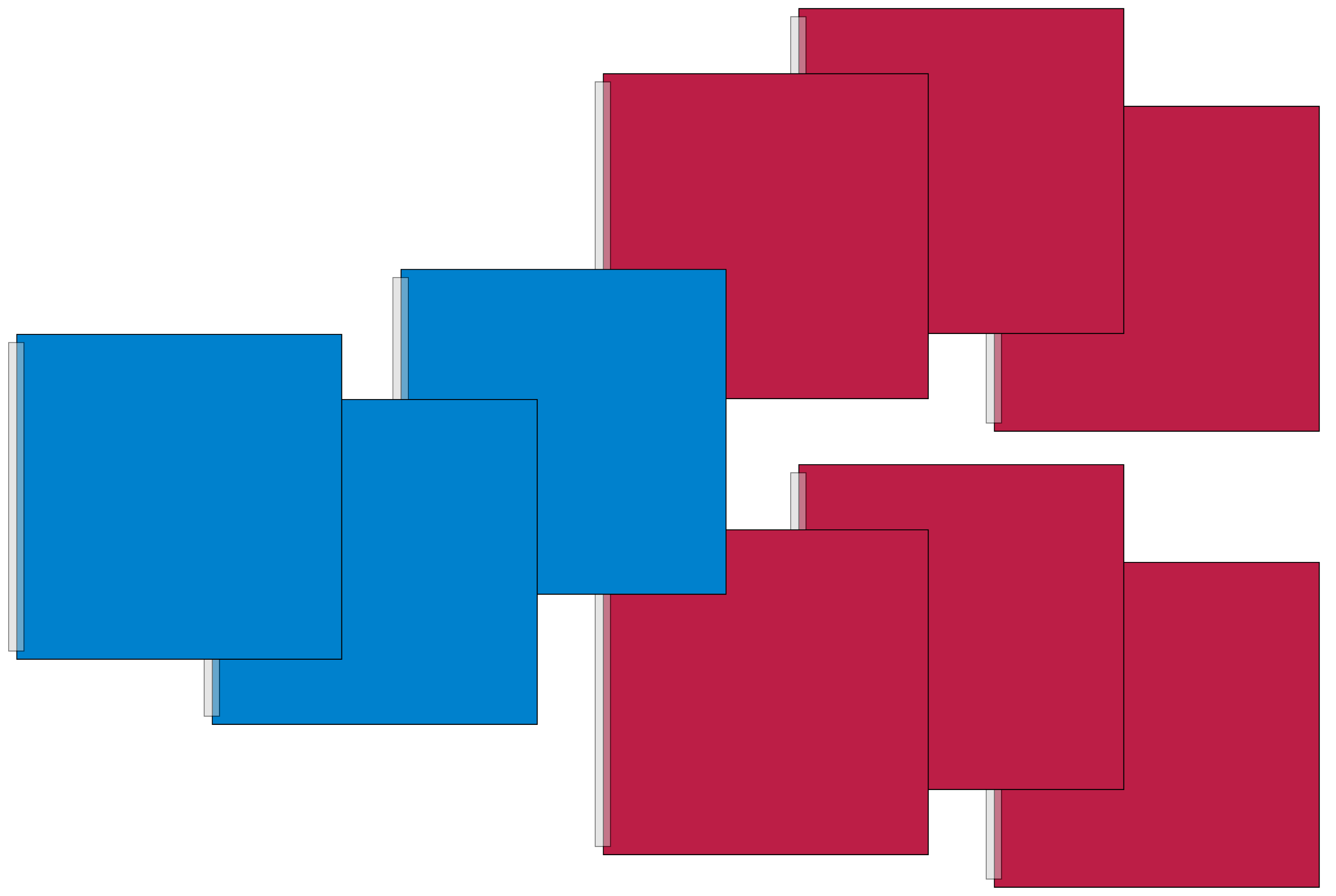}
\caption{Gadget for converting a blue--red--red NCL vertex into flaps and flips.
As shown, both incident red edges point into the vertex (with the hinges of each flap pointing towards the gadget), and the blue flap is free to flip. The red flaps can flip only when the blue edge points into the vertex (with the hinge of the blue flap pointing towards the gadget).}
\label{fig:and-gate}
\end{figure}

For a blue--red--red vertex, arrange the flaps at the ends of the three edges so that the blue flap can overlap both red hinges, but each red flap can only overlap the blue hinge (\cref{fig:and-gate}). If the blue flap hinge is covered, the blue flap itself must be flipped away from the vertex. In terms of the underlying NCL instance, the blue edge points in, as required. If the blue flap is not flipped away, it covers the two red flaps, and the two red edges point in. So this arrangement of flaps has the same behavior as a blue--red--red vertex is required to have: its allowed states are the ones where the blue edge points in, the two red edges both point in, or both of those things happen.

\begin{figure}[t]
\centering\includegraphics[scale=0.2]{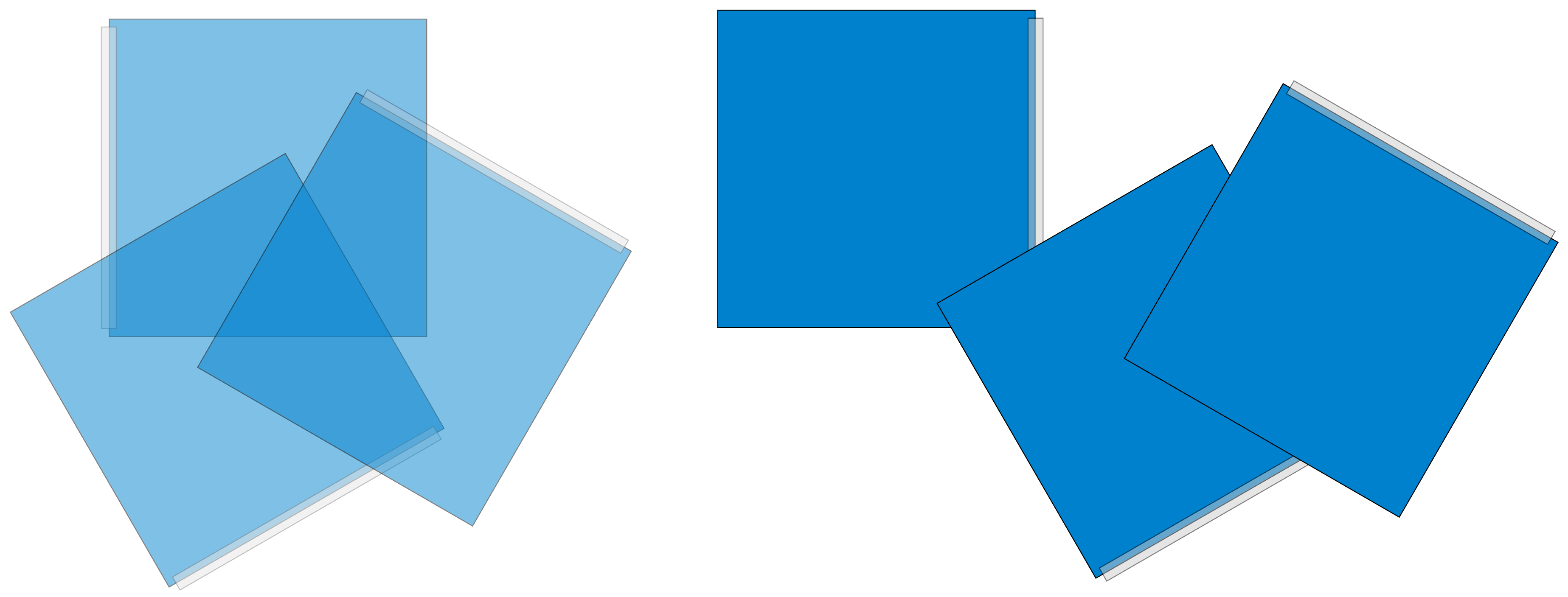}
\caption{Gadget for converting a blue--blue-blue NCL vertex into flaps and flips. The left side of the figure shows a disallowed state, with all three edges pointing out; this is impossible, because it would produce a cyclic above--below ordering for the three flaps in the region where all three overlap. The right side shows a state with one edge pointing in, and the other two flaps free to flip.}
\label{fig:or-gate}
\end{figure}

The blue-blue-blue vertex is shown in two arrangements in \cref{fig:or-gate}, but only one can be flat-folded. In the left part of the illustration, three flaps at the ends of three blue edges flip towards the vertex, corresponding to the forbidden state where three edges are directed out from the vertex. Each flap overlaps the hinge of another flap, forming cyclic above-below relations that cannot be realized in the central area where all three flaps overlap. However, if at least one flap is flipped away from the vertex (corresponding to the arrow pointing in), it is realizable, as shown on the right. In this state, the other two arrows can flip freely, unobstructed by the flipped-out flap or each other.

\begin{figure}[t]
\centering\includegraphics[scale=0.2]{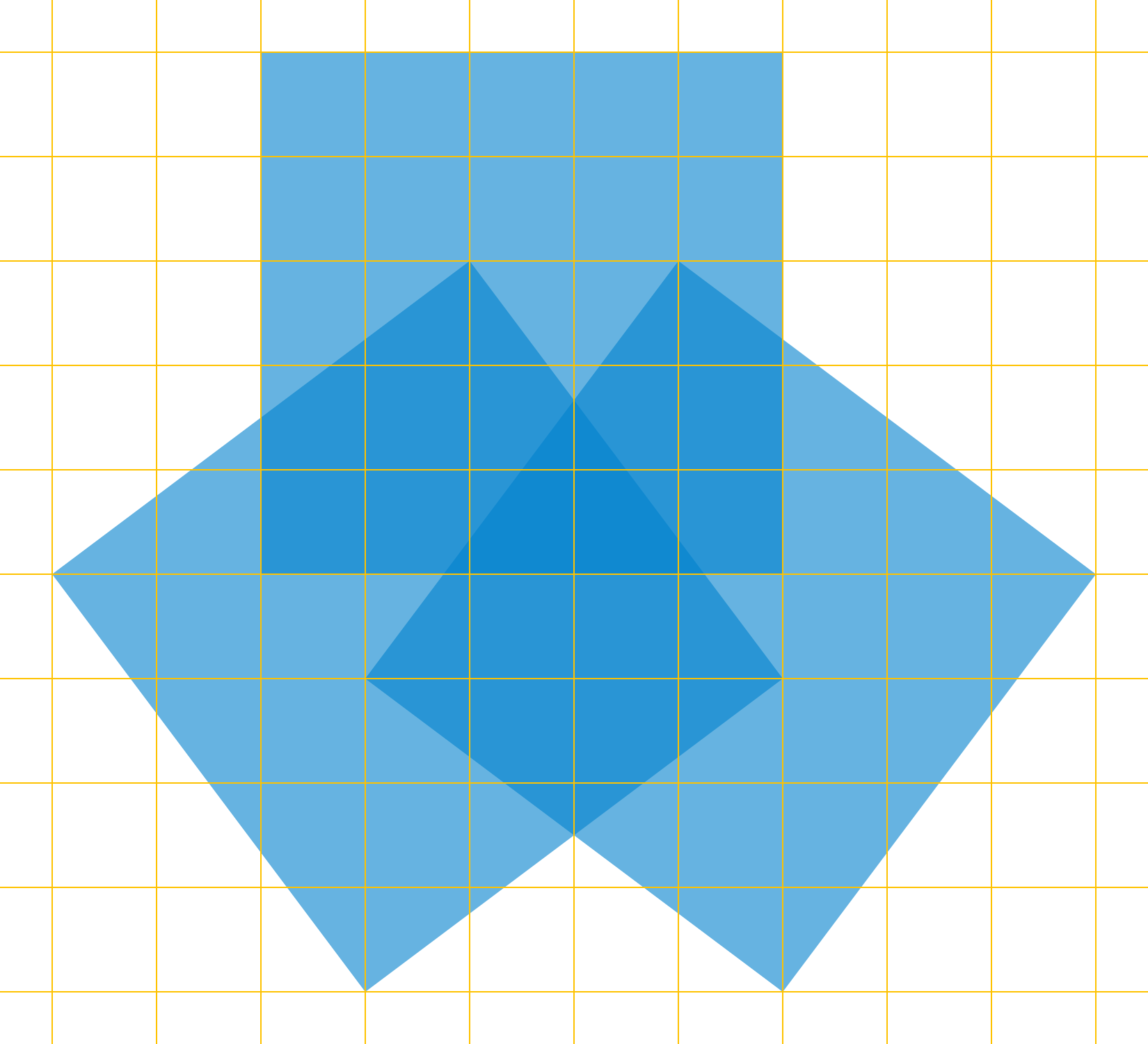}
\caption{Squares with integer vertex coordinates and side length five are sufficient to realize a blue--blue-blue vertex gadget.}
\label{fig:grid-or}
\end{figure}

To get a hard instance for flaps and flips, translate the vertices and edges of a hard planar NCL instance into gadgets, with the three central flaps of each vertex gadget forming the end flaps of an edge gadget. Cubic planar graphs can be drawn with vertices on a square grid of linear dimensions and with edges following paths of grid edges~\cite{Val-TC-81,DidLioOrt-SODA-20}. From such a layout, choose a flap size small enough relative to the grid size so that each vertex gadget can fit in a small neighborhood of a grid point, without interfering with other features of the layout, so that each edge gadget can be routed around the corners of a grid path by a sequence of obtuse-angle turns, and so the edge gadgets can be connected to the vertex gadgets appropriately. To avoid issues of coordinate representation such as those raised in \cref{sec:galois}, we use integer coordinates for the vertices of each flap, with squares of side length $5$, allowing both axis-aligned flaps and flaps with sides whose slopes are $\pm 3/4$ and $\pm 4/3$. These slopes are sufficient to turn the flaps in an edge gadget around corners in a grid layout, and to realize the three overlapping triangles of a blue-blue-blue vertex gadget (\cref{fig:grid-or}).

Define a \emph{canonical state} of a flaps and flips problem derived in this way from NCL to be a state in which each edge gadget has a consistent orientation, with an arrowhead at one end instead of two tails. Let us now formalize the correspondence between these problems:

\begin{lemma}
\label{lem:ff-redux-correctness}
For a flaps and flips problem derived from NCL, every NCL state corresponds to a canonical flaps and flips state, and every two adjacent NCL states correspond to two canonical flaps and flips states that are connected by flips. Every flaps and flips state corresponds to a canonical state that can be reached by flipping only flaps of edge gadgets that are not already in a canonical state. Every two adjacent flaps and flips states correspond to canonical states that represent either the same NCL state or adjacent NCL states.
\end{lemma}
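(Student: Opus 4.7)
The plan is to establish a many-to-one correspondence from flaps-and-flips states to NCL states via canonicalization of edge gadgets, verifying the four claims in order.

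For claim~1, given an NCL orientation I would orient each edge gadget so that its arrowhead matches the NCL direction, and then verify that the resulting flap arrangement at each vertex gadget is physically realizable. This reduces to the case analyses of \cref{fig:and-gate,fig:or-gate}, which were designed so that the NCL constraints at blue--red--red and blue--blue--blue vertices correspond exactly to the realizable overlap patterns at the three meeting flaps. For claim~2, to go from one canonical state to an adjacent one I would flip the flaps of the reversed edge's gadget one at a time, from the arrowhead toward the tail, following the procedure described in the gadget section. Each intermediate state leaves just this one edge gadget in a two-tails configuration, which is still a valid flaps-and-flips state because the flip is a local change that only rearranges overlaps within the gadget and at most removes one overlap between the endpoint flap and its vertex gadget.

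For claim~3, I would canonicalize an arbitrary flaps-and-flips state one edge gadget at a time. Any edge gadget currently in a two-tails configuration can be brought to one of its two canonical orientations by flipping flaps only within that gadget, using a suffix of the reversal sequence from claim~2, and these flips do not disturb other gadgets. After all edge gadgets are canonical, every vertex gadget must satisfy the corresponding NCL constraint, since otherwise the case analysis underlying \cref{fig:and-gate,fig:or-gate} would exhibit an unrealizable overlap, contradicting that we started from a valid flaps-and-flips state. For claim~4, a single flip changes one flap of one edge gadget, so the flip either toggles that gadget between its two canonical orientations (yielding NCL states that differ by exactly one edge reversal, allowed because the gadget realizes both endpoint constraints in both orientations) or moves within the non-canonical two-tails states of the same gadget (so both flaps-and-flips states canonicalize to the same NCL state).

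The main obstacle will be verifying the realizability of the intermediate and partially canonicalized states used in the proofs of claims~2 and~3, since the vertex gadget figures were only analyzed for their canonical endpoint behavior. This is a local check: each flip only affects overlaps within one edge gadget and the two adjacent vertex gadgets, so one enumerates the possible local configurations of an edge gadget with a single \emph{kink} in its overlap chain (one end a tail, the other end in the process of being reversed) and confirms that no cyclic above--below relation, no forbidden flap-over-hinge, and no new cyclic overlap at an incident vertex gadget is introduced.
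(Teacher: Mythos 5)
Your proposal is essentially the paper's own argument: the gadget-by-gadget correspondence for canonical states, edge reversal realized as flipping the gadget's flaps in order from arrowhead to tail, canonicalization by flipping only flaps of non-canonical edge gadgets, and the observation that a single flip is local to one edge gadget so canonical images differ in at most that one edge's orientation. The only notable difference is bookkeeping: the paper makes the canonicalization map well-defined by fixing an arbitrary distinguished endpoint per edge gadget (to which the arrowhead of any two-tails gadget is sent), whereas you leave the choice of canonical orientation open; otherwise the reasoning, including the validity of intermediate two-tails states, matches the paper's.
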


\begin{proof}
The correspondence from NCL states to canonical flaps and flips states is defined gadget-by-gadget, according to the description of the edge gadgets above. Our vertex gadgets are defined in such a way that, for a valid NCL state, the corresponding canonical flaps and flips state will have a consistent arrangement of flaps within each vertex gadget. An NCL edge reversal corresponds to flips of all flaps in an edge gadget, ordered from the arrowhead of the edge gadget to its tail.

To map arbitrary flaps and flips states to canonical states, choose arbitrarily a distinguished endpoint of each edge gadget, and replace each edge gadget that is not already in a canonical position (with an arrowhead at one end) by a different state of the same gadget that has the arrowhead at the distinguished endpoint. Each state can reach its corresponding canonical state by flipping the flaps of its non-canonical edge gadgets, ordered from the point at which two consecutive interior flaps of the gadget are oriented non-canonically towards the distinguished endpoint.

Two adjacent flaps and flips states differ in the position of one flap and have canonical states that can differ only in the orientation of the edge gadget containing that flap. The corresponding NCL states differ only in the orientation of this single edge, or not at all.
\end{proof}

\begin{theorem}
\label{thm:pspace}
Pairwise connectivity and global connectivity for flaps and flips, for flaps with integer coordinates, is $\mathsf{PSPACE}$-complete, even for instances with bounded treewidth as defined in \cref{sec:treewidth}.
\end{theorem}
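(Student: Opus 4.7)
The plan is to prove $\mathsf{PSPACE}$-completeness in two directions and then to extract the bounded-treewidth strengthening from the hardness construction.

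Containment in $\mathsf{PSPACE}$ is routine: a state of a flaps and flips instance with $n$ flaps is specified by one of two hinge sides per flap together with the pairwise above--below orderings on the $O(n^2)$ overlapping pairs, which is polynomial space. Pairwise connectivity is then decided in polynomial space by Savitch's theorem applied to this (exponentially large but polynomial-description) state graph, and global connectivity follows by iterating over all pairs of states, reusing space.

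For $\mathsf{PSPACE}$-hardness I would reduce from the corresponding NCL connectivity problem, which by the bounded-bandwidth planar NCL result cited in the previous subsection remains $\mathsf{PSPACE}$-complete. Given such an NCL graph, draw it on an orthogonal grid of bounded vertical extent and linear horizontal extent using the standard grid-drawing routines already invoked, and translate every vertex and every edge into the integer-coordinate flaps and flips gadgets constructed above. For the pairwise version, translate the two input NCL configurations into their canonical flap states and apply \cref{lem:ff-redux-correctness}, whose first two sentences directly identify the two reachability questions. For the global version, the same lemma also says that every flaps and flips state flips down to a canonical state without reversing any already-canonical edge gadget, so the flaps and flips state graph is connected iff its canonical subgraph is connected iff every pair of NCL configurations is connected; neither direction requires any additional auxiliary states or reductions.

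For the bounded-treewidth strengthening, the bounded-bandwidth NCL graph admits a rectilinear grid drawing whose vertical extent is bounded by a function of the bandwidth, and each of our gadgets occupies a fixed bounded-size neighborhood of integer coordinates. Sweeping a vertical line across the resulting arrangement yields a path decomposition of the cell adjacency graph whose bag size, hence pathwidth, is bounded by a function of the bandwidth; this pathwidth bounds the treewidth and hence transfers to the notion used in \cref{sec:treewidth}. I expect the only technical nuisance to be checking that the three overlapping triangles of the blue--blue--blue vertex gadget from \cref{fig:grid-or} genuinely fit inside the strip without inflating its height and that the slope-$\pm 3/4,\pm 4/3$ turns in the edge gadgets can be routed around grid corners within the strip, but since each gadget is a fixed finite pattern of integer-coordinate squares this reduces to a local case check rather than any global combinatorial argument.
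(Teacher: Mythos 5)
Your proposal is correct and follows essentially the same route as the paper: membership via Savitch's theorem, hardness via the gadget reduction from bounded-bandwidth planar NCL whose correctness is exactly \cref{lem:ff-redux-correctness}, and the bounded-treewidth strengthening from the bounded-bandwidth layout. The only differences are cosmetic---you handle global connectivity by deterministic enumeration over pairs where the paper guesses a disconnected pair nondeterministically and complements, and you spell out the vertical-sweep path decomposition that the paper compresses into ``bounded bandwidth implies bounded treewidth'' (just phrase the global-connectivity equivalence as connectivity among canonical states within the full state graph, since canonical states are linked through non-canonical intermediate states rather than inside the induced canonical subgraph).
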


\begin{proof}
Pairwise connectivity problem is in $\mathsf{PSPACE}$, because a nondeterministic $\mathsf{PSPACE}$ algorithm can search for flips connecting one state to the other, and nondeterministic and deterministic $\mathsf{PSPACE}$ are the same by Savitch's theorem.
Global connectivity can be solved (in complementary form) in $\mathsf{PSPACE}$ by nondeterministically guessing two disconnected states and applying the deterministic pairwise connectivity algorithm to verify disconnectedness.
The conversion from planar NCL to flaps and flips, for either problem, is a polynomial time many-one reduction, whose correctness follows from \cref{lem:ff-redux-correctness}. The theorem follows from the known $\mathsf{PSPACE}$-completeness of pairwise connectivity and global connectivity for planar NCL instances of bounded bandwidth~\cite{vdZ-IPEC-15}, and the fact that bounded bandwidth implies bounded treewidth.
\end{proof}

\subsection{Counting}
\label{sec:counting}

Combinatorial counting problems are modeled in computational complexity by the class $\mathsf{\#P}$, whose members are computational problems that output the number of accepting paths of a nondeterministic polynomial-time Turing machine. These include counting the flat-folded states of a finite crease pattern or flaps and flips instance, as one can nondeterministically generate the above--below orderings for each cell and accept a choice when it forms a valid flat folding. A problem is $\mathsf{\#P}$-complete if it belongs to $\mathsf{\#P}$ and is universal in the sense that every $\mathsf{\#P}$ problem can be reduced to it. However, there is a choice in what kind of reductions to allow:
\begin{itemize}
\item A \emph{Turing reduction} from problem $X$ to problem $Y$ is an algorithm for solving $X$ using a subroutine for $Y$, in polynomial time (not counting time within subroutine calls).
\item A \emph{polynomial-time counting reduction} from problem $X$ to problem $Y$ is a Turing reduction that calls its subroutine exactly once. Alternatively, it can be described in terms of two polynomial-time algorithms, to transform instances of problem $X$ into instances of problem $Y$, and to transform the subroutine output into the output for~$X$.
\item A \emph{parsimonious counting reduction} from problem $X$ to problem $Y$ calls the subroutine for $Y$ only once and then immediately returns the same number. Equivalently, it transforms instances of $X$ into instances of $Y$ that have the same output.
\end{itemize}
These three reduction types are successively more restrictive, and each is closed under compositions. We will follow a chain of reductions from the problem of counting matchings in arbitrary bipartite graphs, to matchings in 3-regular bipartite graphs, to NCL states, to flaps and flips states. Counting bipartite matchings was proven $\mathsf{\#P}$-complete by Valiant in the work that introduced $\mathsf{\#P}$-completeness, using a Turing reduction~\cite{Val-TCS-79}. Ben-Dor and Halevi replaced this by a polynomial-time counting reduction~\cite{BenHal-ISTCS-93}. A parsimonious reduction from bipartite matching to 3-regular bipartite matching is provided by Dagum and Luby, who needed parsimony for related results in the hardness of approximation~\cite{DagLub-TCS-92}. We provide new polynomial-time counting reductions, from 3-regular bipartite matching to NCL, and from the resulting special cases of NCL to flaps and flips, proving that counting flaps and flips states is $\mathsf{\#P}$-complete under polynomial-time counting reductions. We will need at least counting reductions, as NCL is not $\mathsf{\#P}$-complete under parsimonious reductions.

Our reduction from NCL to flaps and flips is based on the many-one reduction that we used in \cref{thm:pspace} to prove $\mathsf{PSPACE}$-completeness. However, we need to handle nonplanar instances, because planar bipartite matchings can be counted in polynomial time~\cite{Kas-GTTP-67}. Nonplanarities could be removed in the transition from matching to NCL, using an appropriate crossover gadget, but we instead supply a parsimonious crossover gadget for flaps and flips.

\begin{figure}[t]
\centering\includegraphics[width=0.35\textwidth]{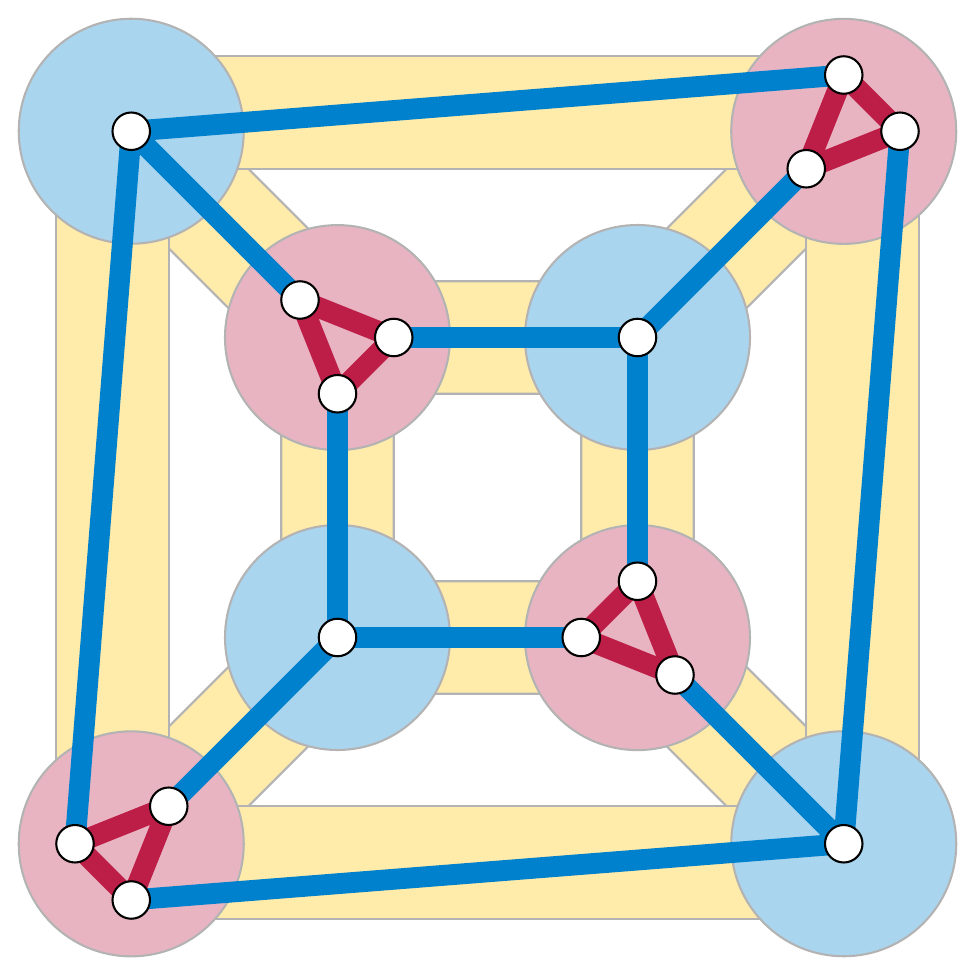}
\caption{Reduction from counting perfect matchings in a 3-regular bipartite graph (here the graph of a cube, with vertices drawn as the large light-shaded circles and thick yellow edges) to an instance of nondeterministic constraint logic (small white vertices and thin red and blue edges).}
\label{fig:cube-to-cgs}
\end{figure}

\begin{lemma}
\label{lem:match2ncl}
There exists a polynomial-time counting reduction from counting perfect matchings in a 3-regular bipartite graph to the problem of counting NCL states.
\end{lemma}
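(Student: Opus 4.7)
The plan is a local gadget replacement. Each vertex of the input 3-regular bipartite graph $G=(A\cup B, E)$ is replaced by a small NCL subgraph, and each graph edge becomes an NCL edge (or short chain of NCL edges) whose orientation will encode whether that edge lies in the matching. I would orient a matching edge outward at its $A$-endpoint and inward at its $B$-endpoint, and a non-matching edge the other way. A perfect matching of $G$ is then precisely an orientation in which every $A$-vertex has exactly one outgoing external edge and every $B$-vertex has exactly two outgoing external edges.

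Accordingly I would construct two vertex gadgets, $H_A$ and $H_B$, each with three external NCL edges, one per incident graph edge. $H_A$ must enforce ``exactly one outgoing external edge'' and $H_B$ must enforce ``exactly two outgoing external edges''. The available building blocks are the blue--blue--blue OR-vertex, which forces at least one blue edge directed in, and the blue--red--red vertex, which forces at least one blue in or both red in. I would sharpen these ``at-least'' constraints into the required ``exactly'' constraints by cascading small combinations of the two vertex types joined by private internal edges that absorb slack, so that each forbidden external configuration produces an unsatisfiable internal vertex, in the spirit of the NCL gadgetry already in the literature~\cite{HeaDem-09}. Crucially, each gadget must be designed so that every valid external configuration extends to the \emph{same} number $c_A$ (respectively $c_B$) of internal NCL states, independent of which of the three external edges plays the distinguished role; I would enforce this uniform multiplicity by building the gadget with a three-fold symmetry permuting the three external ports, using any residual asymmetry only where the one-or-three-blue-edges-per-vertex color rule forces it, and compensating with explicit case enumeration.

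Once such gadgets are in hand, the total number of NCL states of the assembled instance equals $c_A^{|A|} c_B^{|B|}$ times the number of perfect matchings of $G$, and the matching count is recovered from the NCL count by a single division, a polynomial-time counting reduction by definition. The main obstacle, and the place I expect to spend almost all of the technical effort, is the construction of $H_A$ and $H_B$ with the uniform-multiplicity property above: turning NCL's ``at least'' semantics into a clean ``exactly'' constraint while keeping the internal state count constant across external configurations. The rest of the argument---orienting the edges consistently, assembling the gadgets into a single NCL instance, and verifying the algebraic relation between the two counts---is routine bookkeeping, as is absorbing any residual color- or degree-balance issues by attaching constant-sized terminator subgadgets whose own contribution is a further known multiplicative constant that is divided out at the end.
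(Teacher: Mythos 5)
There is a genuine gap, and it is precisely at the step you yourself flag as the main technical effort: the gadgets $H_A$ and $H_B$ are not constructed, and in fact they cannot exist in the form you specify. An NCL constraint is a lower bound on incoming weight at each vertex, so reorienting an external edge to point \emph{into} a vertex gadget only increases the in-weight at its port vertex and affects no other vertex of the gadget. Consequently, for any gadget whatsoever, the family of orientations of its external edges that extend to a satisfying internal configuration is upward-closed under adding incoming edges. Your $H_A$ must allow ``exactly two of the three external edges incoming'' while forbidding ``all three incoming,'' and your $H_B$ must allow ``exactly one incoming'' while forbidding ``two or three incoming''; both requirements violate this monotonicity, so no cascade of blue--blue--blue and blue--red--red vertices with private internal edges can realize them by making forbidden configurations locally unsatisfiable. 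The uniform-multiplicity requirement on $c_A$ and $c_B$ is a further unaddressed burden, but the plan already fails before that point: exactness simply cannot be enforced gadget-locally in NCL.

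The paper avoids this by not enforcing exactness locally at all. It replaces only the vertices on one side of the bipartition by triangles of red edges (one triangle vertex per incident edge), keeps all original edges blue, and leaves the other side's vertices as blue--blue--blue vertices. A global tightness argument then does the work your gadgets were meant to do: with $n$ vertices on each side there are $4n$ NCL vertices, each needing one ``satisfier,'' while the $3n$ blue edges and $n$ triangles (each able to supply at most one pair of incoming red edges) provide at most $4n$ satisfiers, so in every valid state each vertex is satisfied exactly once. From this exactness one reads off that the blue edges satisfying the blue vertices form a perfect matching, all other orientations are forced except for one free red edge per triangle, and hence every perfect matching corresponds to exactly $2^n$ NCL states; dividing the NCL count by $2^n$ gives the counting reduction. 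If you want to rescue a gadget-style argument, you would have to replace your local ``exactly'' constraints by some such global supply-and-demand accounting, or else explicitly enumerate and divide out the spurious states in which an $A$-gadget has too many incoming edges, neither of which your proposal currently does.
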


\begin{proof}
The reduction is depicted in \cref{fig:cube-to-cgs}: replace the vertices on one side of the bipartition of the 3-regular graph with triangles of red edges, each adjacent to one of the three neighbors of the replaced vertex, and color all remaining edges blue. If there are $n$ red (replaced) vertices and $n$ blue vertices in the given graph, the resulting constraint logic instance has $4n$ vertices, $3n$ blue edges, and $n$ red triangles. Each triangle can supply at most one pair of incoming red edges to one triangle vertex. The $3n$ blue edges and $n$ triangles are exactly enough to satisfy the $4n$ NCL vertices, so each vertex is satisfied in exactly one way. The blue edges that satisfy a blue vertex form a perfect matching, their tails can only be satisfied by two red edges, and the remaining blue edges must be directed away from their blue endpoints. This leaves $n$ red edges that are free to be oriented in either direction. In this way, every perfect matching of the given 3-regular graph corresponds to exactly $2^n$ NCL solutions. We can count perfect matchings by counting NCL solutions and dividing by $2^n$.
\end{proof}

\begin{corollary}
\label{cor:ncl-count}
Counting NCL solutions is $\mathsf{\#P}$-complete under polynomial-time counting reductions.
\end{corollary}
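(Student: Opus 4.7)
The plan is to establish both $\mathsf{\#P}$-membership and $\mathsf{\#P}$-hardness under polynomial-time counting reductions, then combine them. Membership is straightforward and I would dispatch it first: a nondeterministic polynomial-time Turing machine, on input an NCL graph $G$, can guess an orientation of each edge (using $|E(G)|$ nondeterministic bits) and deterministically verify in polynomial time that each vertex satisfies its in-constraint (at least one incoming blue edge, or at least two incoming red edges). The number of accepting computations equals the number of valid NCL states, placing the counting problem in $\mathsf{\#P}$.

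For hardness, I would chain together three polynomial-time counting reductions, as already assembled in the paragraph preceding \cref{lem:match2ncl}. First, counting perfect matchings in bipartite graphs is $\mathsf{\#P}$-complete under polynomial-time counting reductions, by Valiant's original $\mathsf{\#P}$-completeness proof (which used a Turing reduction) as strengthened by Ben-Dor and Halevi to a polynomial-time counting reduction. Second, Dagum and Luby provide a parsimonious reduction from counting bipartite matchings to counting matchings in 3-regular bipartite graphs, and parsimonious reductions are in particular polynomial-time counting reductions. Third, \cref{lem:match2ncl} gives a polynomial-time counting reduction from 3-regular bipartite perfect matching counting to NCL state counting (specifically, it divides the NCL count by $2^n$).

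The only thing to verify is that the composition of polynomial-time counting reductions is itself a polynomial-time counting reduction, which was explicitly noted earlier in the paper; concretely, the output transformation of the composed reduction is obtained by composing the two output transformations (e.g.\ the final postprocessing divides by the appropriate $2^n$ factor from the third reduction, then applies whatever postprocessing the Ben-Dor--Halevi reduction requires), all in polynomial time. I do not anticipate any technical obstacle: the main substantive content already lives in \cref{lem:match2ncl}, and the corollary is essentially bookkeeping over the reduction chain.
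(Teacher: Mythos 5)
Your proof is correct and follows essentially the same route as the paper: the corollary there is stated immediately after \cref{lem:match2ncl} with no separate proof, because the paragraph preceding that lemma already lays out precisely the chain you describe (Valiant via Ben-Dor--Halevi, then Dagum--Luby's parsimonious reduction to 3-regular bipartite matching, then \cref{lem:match2ncl}), together with the observation that these reduction types compose. Your added remarks on $\mathsf{\#P}$-membership and on the explicit composition of output transformations are straightforward and consistent with what the paper leaves implicit.
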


\begin{observation}
\label{obs:ncl-nonunique}
Every solvable NCL instance has more than one solution.
\end{observation}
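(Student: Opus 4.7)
The plan is, given any valid NCL orientation $S$, to produce a second valid orientation $S'\ne S$, either by reversing a single edge or by reversing every edge simultaneously. It is convenient to assign weight $2$ to each blue edge and $1$ to each red edge, so that the NCL satisfaction condition at a vertex $v$ becomes $w_S(v)\ge 2$, where $w_S(v)$ is the total weight of edges directed into $v$. The total incident weight at $v$ is $W_v=6$ for 3-blue vertices and $W_v=4$ for 1-blue-2-red vertices.

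First I would check whether some edge $e=u\to v$ is \emph{flippable}, meaning that reversing $e$ still yields a valid orientation. Reversing $e$ decreases $w(v)$ by $\mathrm{wt}(e)$ and increases $w(u)$ by the same amount, so flippability holds exactly when $w_S(v)\ge 2+\mathrm{wt}(e)$; concretely, $w_S(v)\ge 4$ for an incoming blue edge and $w_S(v)\ge 3$ for an incoming red edge. If any such $e$ exists, reversing it produces the desired $S'$.

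Otherwise every incoming blue edge at a vertex $v$ forces $w_S(v)\le 3$ and every incoming red edge forces $w_S(v)\le 2$. A short case analysis then pins down $w_S(v)=2$ at every vertex: a 3-blue vertex must have at least one blue in-edge, and its in-weight is a nonnegative even integer at most $3$, hence exactly $2$; a 1-blue-2-red vertex either has its blue edge incoming, in which case no red may also be incoming (else $w_S(v)\ge 3$ contradicts $w_S(v)\le 2$), giving $w_S(v)=2$, or has its blue edge outgoing, which forces both reds to be incoming and again gives $w_S(v)=2$. With $w_S(v)=2$ everywhere, the globally reversed orientation $\bar S$ has in-weight $W_v-w_S(v)=W_v-2\ge 2$ at every vertex and is therefore valid. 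Any NCL instance with at least one vertex contains edges by 3-regularity, so $\bar S$ reverses at least one edge and $\bar S\ne S$.

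The main obstacle is the case analysis that forces $w_S(v)=2$ everywhere under the non-flippability assumption; it is routine but requires enumerating the in/out patterns at both vertex types. Once that step is in hand, the global-reversal argument follows from simple weight accounting at each vertex.
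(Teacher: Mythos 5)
Your proof is correct and uses the same weight device as the paper (blue $= 2$, red $= 1$) and the same two constructions: flip a single edge, or flip every edge. You organize the case split slightly differently—on whether any single edge is flippable, with a local case analysis pinning down $w_S(v)=2$ everywhere, rather than on whether a blue--blue--blue vertex exists, which the paper handles by a global averaging argument—but the core argument is essentially identical.
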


\begin{proof}
Assign weight 2 to blue edges and weight 1 to red edges, and consider the following cases:
\begin{itemize}
\item The edges at any vertex have total weight 4 (red--red--blue) or 6 (blue--blue--blue). So if there is a blue--blue--blue vertex, the average weight per vertex is more than 4. The average weight of incoming edges per vertex in a valid solution is half of that quantity, more than two, and therefore some vertex has incoming weight three or more. At such a vertex, one of the edges must be free to be reversed, producing another satisfying orientation.
\item Otherwise, there are only red--red--blue vertices. The average weight per vertex is exactly 4, the average incoming weight per vertex is exactly 2, and to satisfy each vertex the incoming weight must be exactly 2. The outgoing weight at each vertex is also exactly 2, so reversing every edge simultaneously produces another satisfying orientation.\qedhere
\end{itemize}
\end{proof}

\begin{corollary}
Counting NCL solutions is not $\mathsf{\#P}$-complete under parsimonious reductions.
\end{corollary}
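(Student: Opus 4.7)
The plan is to derive a contradiction from \cref{obs:ncl-nonunique}. I would suppose for contradiction that counting NCL solutions is $\mathsf{\#P}$-complete under parsimonious reductions. Then every problem in $\mathsf{\#P}$ parsimoniously reduces to NCL counting, and by the definition of parsimonious reduction the output value is preserved exactly across every such reduction: the transformed instance must have a count equal to the count on the original instance.

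To produce the contradiction I would exhibit a $\mathsf{\#P}$ problem whose output range includes the value $1$. The canonical example is $\#\mathrm{SAT}$: the conjunction $x_1\wedge x_2\wedge\cdots\wedge x_k$ has exactly one satisfying assignment, so $\#\mathrm{SAT}$ takes the value $1$ on this input. Under the supposed parsimonious reduction, such an instance would be transformed into an NCL instance whose number of valid orientations equals $1$. However, \cref{obs:ncl-nonunique} shows that every NCL instance has either $0$ valid orientations (if unsolvable) or at least $2$ (if solvable), so the value $1$ is not in the range of NCL counting. This is the desired contradiction.

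Since the argument only requires exhibiting any $\mathsf{\#P}$ problem whose outputs include $1$ (counting perfect matchings of $K_2$ or counting satisfying assignments of a trivial formula would serve equally well), the step amounts to unfolding definitions, so there is no substantive obstacle beyond the preceding observation itself, which is exactly what rules $1$ out of the range of NCL counting and thereby blocks parsimonious $\mathsf{\#P}$-completeness.
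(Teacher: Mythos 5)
Your proposal is correct and matches the paper's proof, which likewise argues that a $\mathsf{\#P}$ instance with exactly one solution cannot be parsimoniously reduced to NCL because \cref{obs:ncl-nonunique} rules out the value $1$ in the range of NCL counting. You have simply unfolded the same one-line argument in more detail.
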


\begin{proof}
Instances in $\mathsf{\#P}$ with one solution cannot be parsimoniously reduced to NCL.
\end{proof}

\begin{figure}[t]
\centering\includegraphics[scale=0.2]{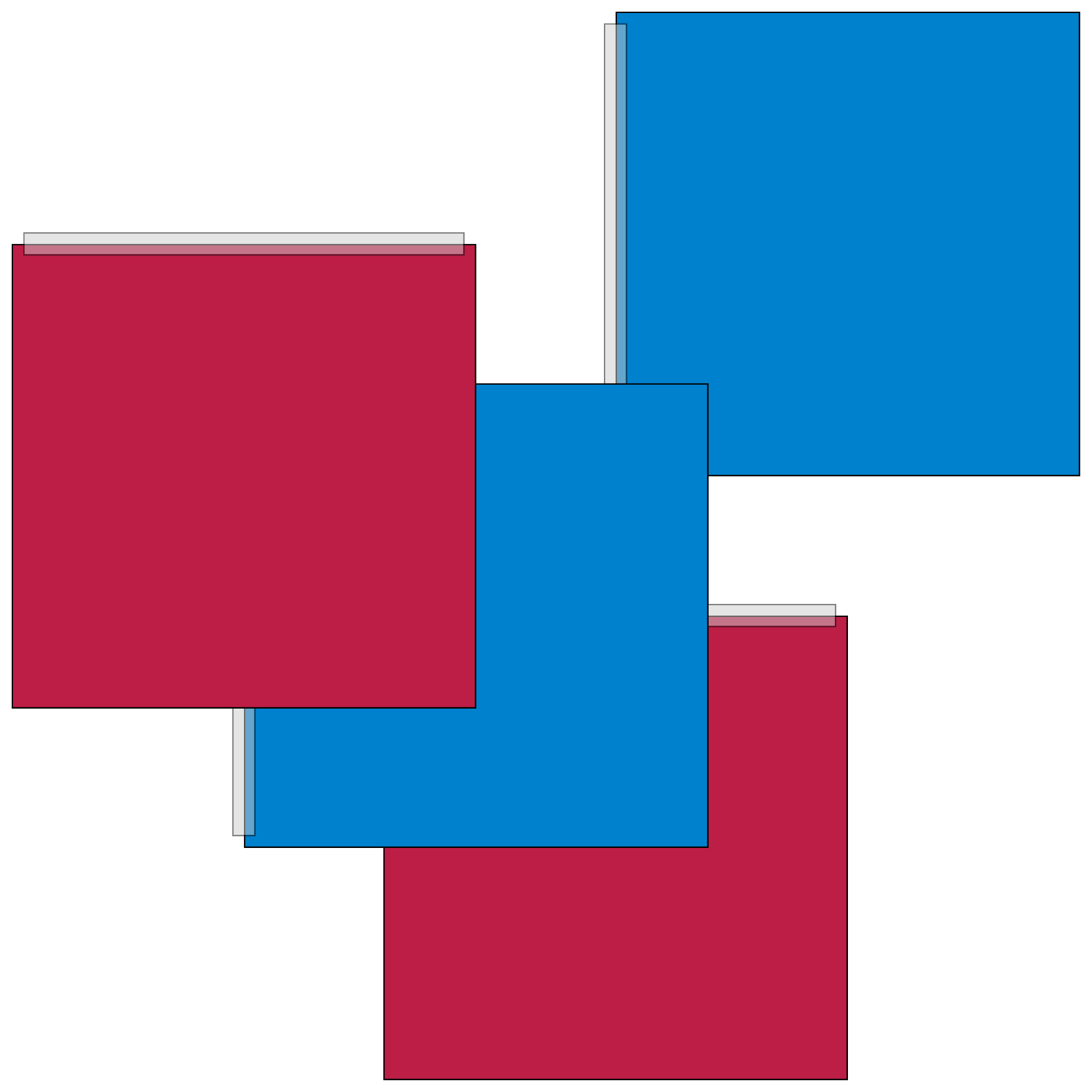}
\caption{A crossover gadget for flaps and flips. The two edge gadgets that cross in this gadget are unconstrained by each other.}
\label{fig:ff-crossover}
\end{figure}

\begin{theorem}
\label{thm:count}
Counting flaps and flips states is $\mathsf{\#P}$-complete under counting reductions.
\end{theorem}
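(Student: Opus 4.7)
The plan is to establish both containment in $\mathsf{\#P}$ and hardness under polynomial-time counting reductions. Containment is routine: a nondeterministic polynomial-time algorithm guesses, for each cell of the arrangement of overlapping flaps, an above--below ordering of the flaps covering that cell, and accepts when the guess is globally consistent with a valid flat folding. The number of accepting paths is exactly the number of states.

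For hardness, I would reduce from counting NCL states, which is $\mathsf{\#P}$-complete under polynomial-time counting reductions by \cref{cor:ncl-count}. The construction is the gadget-based encoding already used in \cref{thm:pspace}, augmented with the crossover gadget shown in \cref{fig:ff-crossover}. The crossover is necessary because \cref{lem:match2ncl} can produce nonplanar instances, and planar bipartite matching is polynomially countable by Kasteleyn's algorithm~\cite{Kas-GTTP-67}. I would verify the parsimony claim accompanying the crossover: the two crossing edge gadgets do not constrain each other, so the crossover contributes a fixed multiplicative constant (the count of its own internal configurations, which factors over the two independent edges) and no coupling between them.

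The central combinatorial task is to compute $N_{\text{ff}}$, the total number of flaps-and-flips states, as a polynomial-time-computable function of $N_{\text{ncl}}$. I would analyze each gadget to enumerate its external interfaces (the positions of its end flaps) and count the internal configurations compatible with each interface. Each edge gadget has two canonical external interfaces corresponding to the two NCL orientations of that edge, and, depending on its length, possibly a non-canonical ``double-tailed'' interface in which both end flaps are flipped toward their incident vertex gadgets. Crucially, from either incident vertex's point of view the double-tailed interface is indistinguishable from an outgoing canonical edge, since the end flap sits in the same position; so vertex-gadget validity depends only on this local view. Consequently $N_{\text{ff}}$ decomposes as a sum over globally consistent interface assignments, each weighted by the product of per-gadget internal-state counts.

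The main obstacle will be extracting $N_{\text{ncl}}$ from this decomposition. I would address it by choosing the shortest edge gadgets the geometric layout permits, so that the non-canonical interface either is unavailable (when the edge gadget has length one) or contributes only a small, explicitly computable multiplicity per edge. With the non-canonical contribution under control, $N_{\text{ncl}}$ can be recovered from $N_{\text{ff}}$ in polynomial time, either directly (when the non-canonical configurations are suppressed by gadget length) or by inclusion--exclusion over the set of edges placed in the double-tailed state, using the polynomial-time-computable per-edge multiplicities. Either route yields the desired polynomial-time counting reduction and, together with membership in $\mathsf{\#P}$, completes the proof.
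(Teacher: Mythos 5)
Your outline hits the right landmarks (membership in $\mathsf{\#P}$, reduction from NCL counting, the need for a crossover gadget because planar matching is polynomially countable, and the role of the ``two-tails'' non-canonical edge-gadget states), but the core combinatorial step has a gap. You propose handling the double-tailed states by either shrinking edge gadgets to length one or by inclusion--exclusion over the set of double-tailed edges. Neither works as stated: edge gadgets cannot all have length one in a grid embedding (signals must be routed around corners and across other edges), and naive inclusion--exclusion over subsets of edges is exponential. You do not show that the sum $N_{\text{ff}}=\sum_A \prod_e (\text{internal count of }e\text{ under }A)$ over consistent interface assignments $A$ is tractable for arbitrary NCL instances---and in fact it has no reason to be.

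The paper avoids this entirely by exploiting the \emph{specific structure} of the NCL instances produced in \cref{lem:match2ncl}, which you use as a source but do not exploit. In those instances every vertex is satisfied in exactly one way, so \emph{only} the $n$ free red edges (one per triangle) can deviate from a forced canonical orientation; every other edge gadget is rigid, and in particular cannot be double-tailed. The drawing is then arranged so that every red edge has the same number $k$ of flaps. With that, the free red edges contribute independently, each with $1+1+(k-1)=k+1$ internal states, so the sum over interface assignments \emph{factors} as $N_{\text{ff}}=M\cdot(k+1)^n$ while $N_{\text{ncl}}=M\cdot 2^n$ (where $M$ is the number of perfect matchings). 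The conversion factor $2^n/(k+1)^n$ is then trivially poly-time computable. The fix you need is not shorter edges but (i) the uniqueness-of-satisfaction structure that confines non-canonical behavior to the $n$ free red edges, and (ii) uniform length for exactly those edges so the per-edge multiplicity is a single constant that factors out. Without both of these, your sum does not collapse to a ratio and you have no polynomial-time way to recover $N_{\text{ncl}}$ from $N_{\text{ff}}$.
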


\begin{proof}
We provide a polynomial-time counting reduction to flaps and flips from the NCL instances created by \cref{lem:match2ncl}, in which the red edges form disjoint triangles connected in pairs by blue edges. To do so, we use the same gadgets as for \cref{thm:pspace}, applied to a grid drawing of the NCL instance in which the red triangles are drawn with exactly two grid edges for each red edge of the constraint logic instance, and in which all crossings involve two blue edges and occur at a grid vertex. At each crossing, we use the crossover gadget depicted in \cref{fig:ff-crossover}, for which the two edge gadgets that cross do not interact with each other: any consistent state of both edge gadgets, taken separately, corresponds to a unique consistent state of the crossover gadget.

Because we arrange this drawing with the same length for each red edge, we can replace each red edge by an edge gadget with the same number of flaps, $k$.
As argued in \cref{lem:match2ncl}, in the consistent states of the NCL instance, most edges are fixed in orientation, required to uniquely satisfy some vertex, and the corresponding edge gadgets are likewise fixed in a single flat-folded state. The only parts of the flaps and flips instance that can flip are in the edge gadgets corresponding to the reversable red edges. If there are $n$ red triangles, there are $n$ reversible red edges and $n$ flippable red edge gadgets. Each flippable red edge gadget has $k+1$ flat-folded states, obtained by partitioning its flaps into two contiguous subsets and flipping each flap away from the ones in the other subset. Therefore, for each connected component of $2^n$ NCL states, there will be a corresponding connected component of exactly $(k+1)^n$ flaps and flips states. The number of NCL states can be calculated from the number of flaps and flips states by multiplying by $2^n/(k+1)^n$.

The chain of polynomial-time counting reductions from arbitrary problems in $\mathsf{\#P}$ to bipartite matchings, 3-regular bipartite matchings, NCL, and flaps and flips, proves that counting flat-folded states of flaps and flips instances is $\mathsf{\#P}$-complete, as stated.
\end{proof}

\begin{figure}[t]
\centering\includegraphics[scale=0.2]{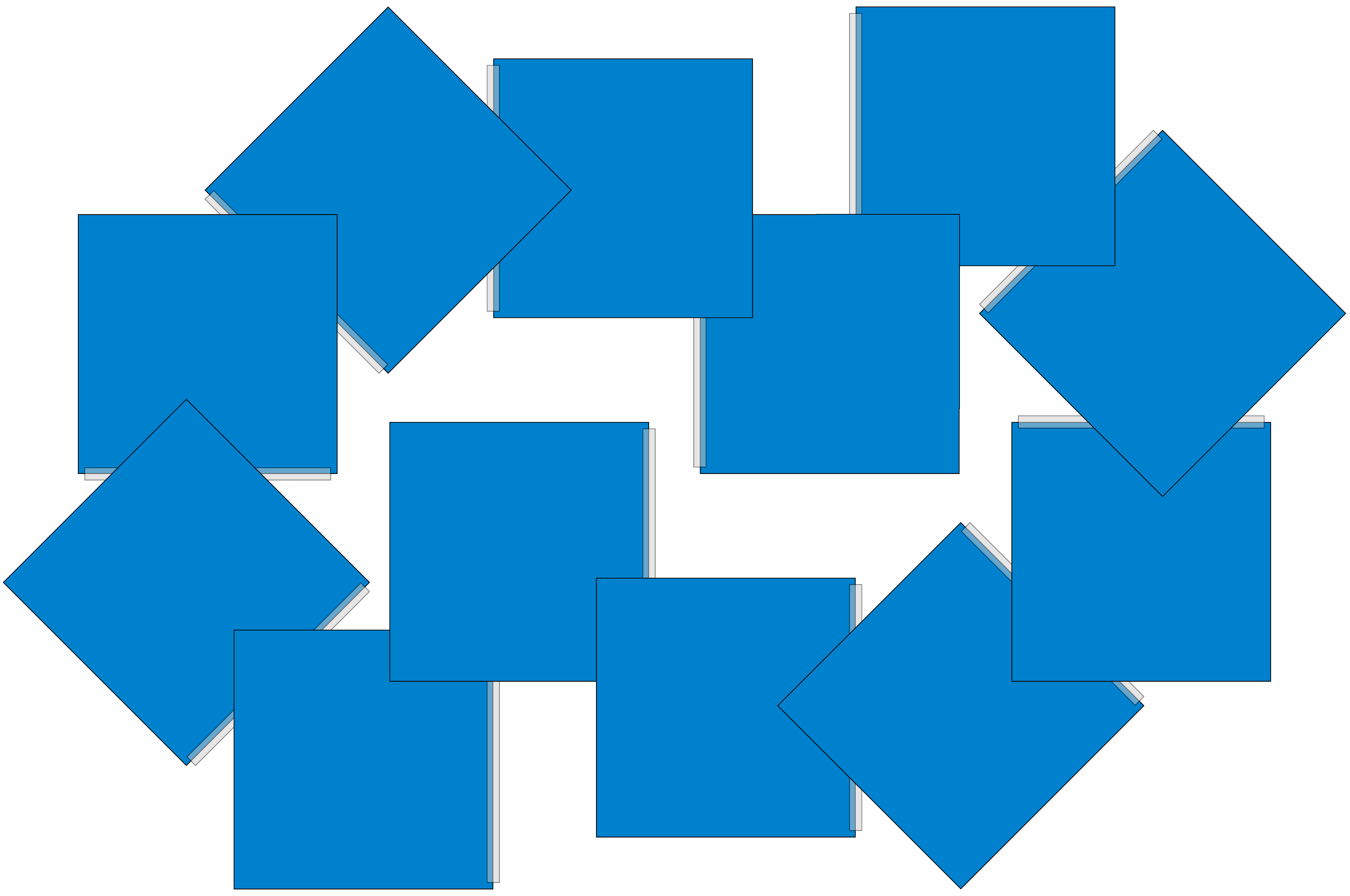}
\caption{Flaps and flips instance with a unique flat-folded state}
\label{fig:ff-unique}
\end{figure}

The same issue that prevents parsimonious completeness for nondeterministic constraint logic, the inability to find an instance with a unique solution (\cref{obs:ncl-nonunique}) does not apply to flaps and flips. There exist flaps and flips instances with unique flat-folded states (\cref{fig:ff-unique}). We do not know whether counting flaps and flips states is $\mathsf{\#P}$-complete under parsimonous reductions.

The fixed-parameter tractable algorithm of \cref{thm:fpt}, and the ETH-based lower bound on its complexity in \cref{thm:eth}, apply equally well to counting the flat-folded states of a crease pattern, parameterized by its ply and treewidth. For flaps and flips, the definitions of ply and treewidth do not apply directly, as they assume a unique local flat folding, which in flaps and flips varies according to which side of its hinge each flap lies on. Instead, we may define the ply and treewidth for a local flat folding that includes two copies of each flap, one on each side of its hinge. With this modified definition, counting flaps and flips states is fixed-parameter tractable in ply and treewidth, as in \cref{thm:fpt}. A reduction from 3SAT to flaps and flips via NCL, with the same overall form as the $\mathsf{NP}$-completeness proof for flat folding of Bern and Hayes~\cite{BerHay-SODA-96} and Akitaya et al.~\cite{AkiCheDem-JCGCGG-15} (with horizontal edge gadgets representing variables, vertical edge gadgets connected to them by red--red--blue vertex gadgets and crossing the other variables using crossover gadgets, and blue--blue--blue vertex gadgets representing clauses) shows that, under the exponential time hypothesis, the exponential dependence on treewidth of this algorithm cannot be improved, as in \cref{thm:eth}.

\section{Infinite crease patterns}

Recently Hull and Zakharevich~\cite{HullZak-23} and Assis~\cite{Ass-24} showed that, in a certain sense, folding crease patterns with infinitely many creases, on infinite sheets of paper, can simulate universal computation, far beyond the complexity classes discussed above. After definitions in \cref{sec:fundamentals}, we review this work in \cref{sec:review} before strengthening it to show the undecidability of flat foldability for infinite crease patterns on finite sheets of paper.

\subsection{Fundamentals}
\label{sec:fundamentals}

Crease patterns with infinitely many creases do not require infinitely large sheets of paper. Indeed, the artistic work of Tomoko Fuse includes fractal crease patterns on finite paper that, taken to their limit, would include infinitely many folds~\cite{Fus-IF-20}. Such infinite crease patterns raise mathematical difficulties beyond the finite case. For instance, a finite crease pattern has a local flat folding that is essentially unique (\cref{obs:local}), but for infinite crease patterns this may be untrue. For example, the infinite crease pattern of \cref{fig:bowtie} folds flat to two triangles that meet at a point. Changing the angle at which the triangles meet would produce a different but equally consistent locally flat folding. Additionally, the above--below orderings that define flat foldings are discrete rather than continuous (unlike local flat foldings, which are continuous functions). Therefore, we cannot use continuity and limits to determine whether these orderings are consistent at accumulation points of creases.

To avoid such definitional issues, we will consider crease patterns with the following property. Define a point of a crease pattern to be \emph{crumpled} if every neighborhood of the point intersects infinitely many creases. Define a crease pattern to be \emph{tame} if it is defined on a simply-connected open set (such as the plane, open half-plane, or an open square or disk), has no crumpled points, and every two points are connected by a curve of finite length within the crease pattern. These requirements are not met for the crease pattern of \cref{fig:bowtie}. However, crease patterns based on periodic tessellations of the plane, such as that of Hull and Zakharevich~\cite{HullZak-23}, are tame, and a tame crease pattern can also take the form of a fractal system of creases on an open square with crumpled points only at its boundary (where they are not considered to be part of the crease pattern). For a tame crease pattern, every curve of finite length between two points crosses only finitely many creases, because otherwise its crossings with creases would have a limit point. Tame crease patterns have the following analogue of \cref{obs:local}:

\begin{figure}[t]
\centering\includegraphics[width=0.6\textwidth]{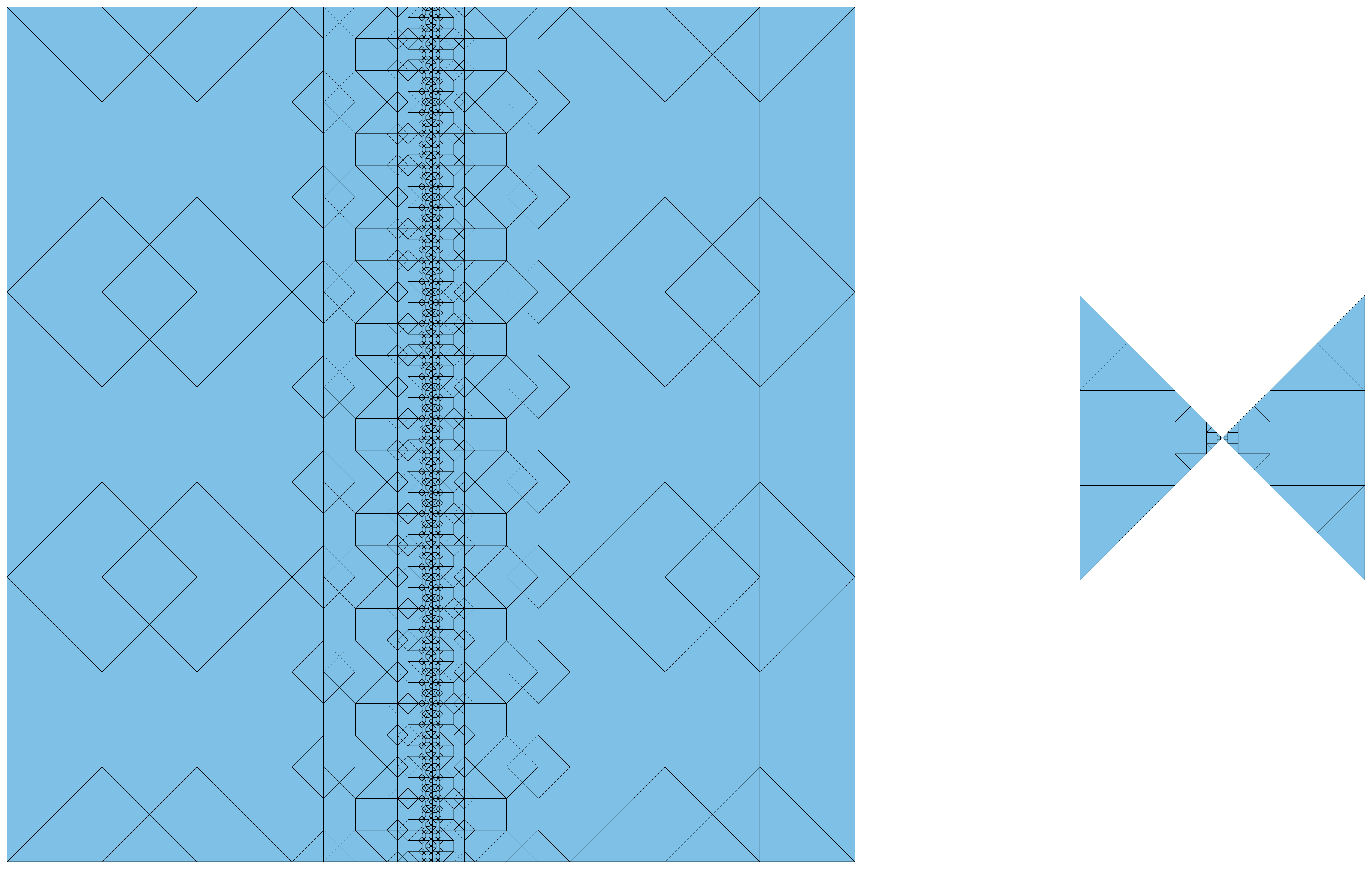}
\caption{How to fold a bowtie}
\label{fig:bowtie}
\end{figure}

\begin{observation}
\label{obs:tame}
A local flat folding of a tame crease pattern is unique up to rigid motion.
\end{observation}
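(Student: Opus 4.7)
The plan is to mimic the constructive argument used in \cref{obs:local}, replacing the finite polygon-by-polygon traversal with a continuation argument along rectifiable curves, and then to verify that the tameness hypotheses are exactly what is needed to make this continuation well-defined and finite in each instance. Concretely, I would fix two local flat foldings $\varphi_1$ and $\varphi_2$ of the tame crease pattern $P$, choose a base point $x_0 \in P$ that does not lie on any crease, and compose $\varphi_2$ with a rigid motion of the plane so that $\varphi_1$ and $\varphi_2$ agree on some open neighborhood $U$ of $x_0$ that meets no crease (this is possible because each $\varphi_i$ restricts to an isometry on $U$, and any two isometries of an open disk differ by a rigid motion of the target plane).

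Next, for an arbitrary point $y \in P$, I would use the tameness hypothesis to choose a rectifiable curve $\gamma \subset P$ from $x_0$ to $y$ of finite length, perturbing $\gamma$ slightly so that it meets only the relative interiors of creases (never endpoints or vertices). The key combinatorial fact I want to extract is that $\gamma$ crosses only finitely many creases: if infinitely many crossings occurred along a curve of finite length, they would have an accumulation point $z \in \overline{\gamma} \subset P$, and every neighborhood of $z$ would then meet infinitely many creases, making $z$ crumpled and contradicting tameness. Thus $\gamma$ is partitioned into finitely many subarcs, each lying in a region bounded by creases.

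I would then argue by induction along this finite sequence of regions, starting from the region containing $x_0$. For the inductive step, suppose $\varphi_1$ and $\varphi_2$ agree on a region $R$ meeting $\gamma$, and let $R'$ be the next region along $\gamma$, separated from $R$ by a crease $c$. The second condition in the definition of local flat folding forces each $\varphi_i\vert_{R'}$ to equal the composition of $\varphi_i\vert_R$ with reflection across $\varphi_i(c)$; since $\varphi_1$ and $\varphi_2$ agree on $R$, their images of $c$ coincide, and hence they agree on $R'$. Continuing along $\gamma$, the two foldings agree in the region containing $y$, and in particular $\varphi_1(y) = \varphi_2(y)$. Since $y$ was arbitrary, $\varphi_1 = \varphi_2$, proving uniqueness up to the initial rigid motion.

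The main obstacle I anticipate is the finite-crossing claim, since it is the only place where tameness is genuinely used: one must be confident that ``no crumpled points'' plus ``finite length'' really does rule out infinitely many crease crossings, and that $\gamma$ can be perturbed to avoid vertices and tangencies without losing either property. The simple connectivity of the underlying domain, while not used in this proof directly, is what guarantees that the rigid-motion normalization at $x_0$ gives a globally consistent identification (otherwise a nontrivial monodromy around a loop could produce distinct foldings from the same local data).
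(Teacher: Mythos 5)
Your proposal is essentially the paper's own argument: both start from an initial region where the folding is normalized, propagate by composing reflections along a finite-length curve (the same one guaranteed by tameness), and use the no-crumpled-points condition to ensure only finitely many crossings. The only cosmetic differences are that the paper anchors the continuation at a neighborhood of a crease $c_0$ rather than a point off all creases, and phrases uniqueness as ``$\varphi$ is fixed by its germ'' rather than comparing two foldings $\varphi_1,\varphi_2$; these are equivalent formulations, so your proof is correct and follows the same route.
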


\begin{proof}
Let $\varphi$ be a local flat folding of a tame crease pattern $P$. Choose an arbitrary crease $c_0$, and an open neighborhood of $c_0$ that intersects no other crease, and map this neighborhood into the plane by a restriction of $\varphi$. Then, for each other crease $c_i$, we can find the restriction of $\varphi$ to a neighborhood of $c_i$ by following a  curve of finite length from $c_0$ to $c_i$ and composing finitely many reflections for each crease intersected by this curve. Each reflection across a crease gives $\varphi$ on the other side of the crease, so this composition of reflections remains equal to $\varphi$, regardless of the choice of curve. Because $P$ is tame, every point that is not on a crease has a closest crease, and is in its open neighborhood, so this determines $\varphi$ at all points. Thus, $\varphi$ is completely fixed given the initial mapping of the neighborhood of $c_0$, which is (in that neighborhood) unique up to rigid motions.
 \end{proof}
 
For any finite set $S$ of creases of a crease pattern, define the \emph{nearby region} of $S$ to be the set of points that can be connected to $S$ by finitely creased curves that only intersect creases in $S$. We need the following to ensure that, if we define  a flat folding of an infinite crease pattern step by step, we can use the limiting case of a flat folding for the entire crease pattern.

\begin{lemma}
\label{lem:local2global}
A tame crease pattern with a local flat folding has a flat folding if and only if every nearby region has a flat folding.
\end{lemma}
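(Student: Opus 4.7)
The plan is to prove the two directions separately. The forward direction is essentially automatic. Given a global flat folding $(\varphi, L)$ of $P$ and a finite set $S$ of creases, the local flat folding $\varphi|_{N_S}$ together with the layering $L$ restricted to $N_S$ gives a flat folding of $N_S$: each uncrossedness condition at a crease of $S$ is intrinsic to a neighborhood of that crease and unchanged by enlarging the ambient paper. The induced layering on any cell of the sub-arrangement of $N_S$ is obtained by choosing representative preimages of full-arrangement cells inside it, and uncrossedness of $L$ makes this order independent of the choice.

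For the backward direction, the plan is a compactness/diagonalization argument. Tameness combined with second countability of the planar domain on which $P$ is defined makes the crease set countable; enumerate them as $c_1, c_2, \dots$ and write $N_n$ for the nearby region of $S_n = \{c_1, \dots, c_n\}$. By hypothesis each $N_n$ has a flat folding, and by \cref{obs:tame} its local flat folding can be rigidly normalized to equal $\varphi|_{N_n}$, which identifies preimages of cells coherently across all $n$. Let $\Pi$ be the (countable) set of pairs of co-mapped preimages of cells of $\varphi$; each flat folding $L_n$ of $N_n$ assigns an above/below value to every pair in $\Pi$ whose two preimages both lie in $N_n$. Extend $L_n$ arbitrarily outside its domain to a total function $\tilde L_n \in \{<, >\}^\Pi$; a diagonal subsequence $\tilde L_{n_k}$ then converges coordinatewise to some $L_\infty \in \{<, >\}^\Pi$.

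To conclude, I would verify that $L_\infty$ is a valid uncrossed layering. Each validity condition factors through only finitely many coordinates of $\Pi$: antisymmetry is built into the encoding; transitivity on a triple of co-mapped preimages depends on three coordinates; and each of the taco--tortilla, taco--taco, mountain/valley, and uncreased-non-crossing conditions at a crease $c = c_j$ involves only a finite collection of preimages meeting $c$. For any such finite collection, take $n$ large enough that all the involved preimages sit inside $N_n$; there $L_n$ satisfies the condition, $\tilde L_n$ agrees with $L_n$ on that collection regardless of how the arbitrary extension was made, and passing to the subsequence limit $L_\infty$ transports the condition intact. Together with $\varphi$, this gives a flat folding of $P$.

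The main obstacle is the bookkeeping at the interface between the $N_n$ and the full arrangement: one must check that preimages are identified coherently across different $n$ (using \cref{obs:tame}) and that the definition of \emph{uncrossed} decomposes into conditions each depending on only finitely many preimages. Once these are in place, the compactness/diagonal step is routine.
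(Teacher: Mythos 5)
Your proof is correct and is essentially the same as the paper's: both directions match, and the backward direction uses a standard compactness argument over a countable enumeration of creases. The paper phrases the compactness step via K\H{o}nig's infinity lemma applied to the tree of flat foldings of nearby regions ordered by restriction, whereas you phrase it via sequential compactness of $\{<,>\}^\Pi$ and a diagonal subsequence; in a countable, finitely-branching setting these are interchangeable. You are also more explicit than the paper about the bookkeeping between the sub-arrangement of a nearby region and the full arrangement, and about the fact that each uncrossedness/order condition is finitary, both of which the paper leaves implicit.
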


\begin{proof}
In one direction, a flat folding of the whole crease pattern can be restricted to any nearby region.
In the other, the number of creases must be countable (because of the open neighborhoods around each crease) so we can enumerate the creases as $c_0,c_1,\dots$ and let $S_i=\{c_j\mid j\le i\}$. For each $i$, the local region of $S_i$ has a finite set of flat foldings. Define an infinite tree with finite vertex degree in which the nodes are flat foldings of local regions of of sets $S_i$ for some $i$, and the parent of a node is the restriction of its flat folding to $S_{i-1}$. By K\H{o}nig's infinity lemma~\cite{Kon-ASM-27}, this tree has an infinite path. Any such path determines above--below relations of all points that cannot violate the conditions of \cref{lem:uncrossed}, because otherwise, these relations would also violate the same conditions in a nearby region of the creases involved in the violation.
\end{proof}

\subsection{Review of recent work}
\label{sec:review}

\begin{figure}[t]
\centering\includegraphics[width=0.4\textwidth]{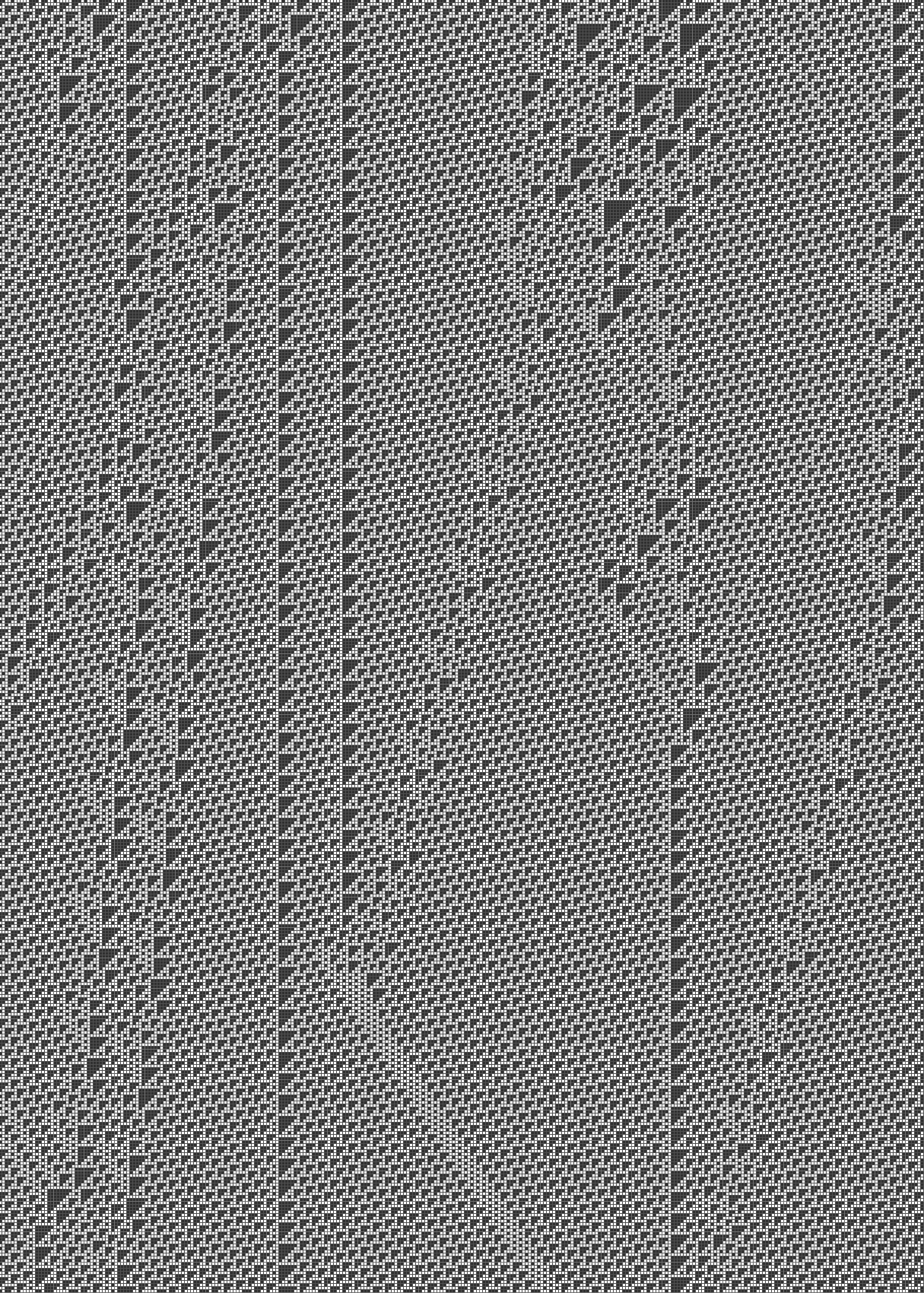}
\caption{Time-space diagram of the behavior of the Rule 110 cellular automaton on a random initial configuration (the top row of pixels of the figure). The $i$th row of the figure gives the state of the automaton after $i$ time steps. Several complex stationary and moving patterns are evident, overlaid on a more finely grained repeating background texture.}
\label{fig:rule110}
\end{figure}

Hull and Zakharevich~\cite{HullZak-23} use origami crease patterns to simulate the Rule 110 cellular automaton (\cref{fig:rule110}). The creases for their simulation repeat in periodic units on a half-plane, aligned with a triangular grid, and are labeled as mountain or valley and as required or optional folds. Required folds must be folded as specified; optional folds may remain unfolded. In each repeating unit, gadgets resembling those of Akitaya et al.~\cite{AkiCheDem-JCGCGG-15} simulate a Boolean circuit that simulates a single Rule 110 cell. Boolean signals enter the unit conveying the states of the cell and its neighbors in the previous time step. Boolean gates within the unit compute the state of the cell in the current time step, which is then sent out to the units for the same cell and its neighbors in the next time step.

Rule 110 was shown capable of universal computation by Matthew Cook~\cite{Coo-CS-04}. Cook uses Rule 110 to simulate Post tag systems, which repeatedly transform a string by removing a symbol at its start and concatenating a replacement string at its end. These tag systems in turn can simulate Turing machines, and it is undecidable whether a given tag system and initial state eventually leads to an empty string. Translating this result into the language of cellular automata, Cook proved that it is undecidable whether a Rule 110 system, with a given initial state that is periodic except for a finite perturbation, eventually becomes purely periodic.

Hull and Zakharevich encode the initial state of the Rule 110 pattern, in their simulation of Rule 110, by overlaying their crease pattern with additional required folds (in the position of formerly optional folds), on the input signals of the repeating units that represent the initial state of Rule 110. These additional required folds are (like the initial state of Cook) repeating, at a scale larger than a single cell, and with a finite perturbation. Hull and Zakharevich state only that their method can simulate universal computation, but one can make a more concrete statement of undecidability, translated from Cook's statement. If the Post tag system underlying Cook's result terminates in an empty string, the crease pattern of Hull and Zakharevich will produce a flat-folded state that is periodic for all but finitely many folds. If the tag system does not terminate, then the initial finite permutation will produce a folded state that differs from its periodic background at an infinite set of positions. Therefore, it is undecidable to determine whether a periodic infinite crease pattern on a half-plane, with optional folds, and with an overlay of required folds at its boundary that is periodic with a finite perturbation, can be folded in such a way that the perturbation remains finite. Thus, in a certain sense, folding infinite crease patterns is undecidable, but the undecidable problem is more complicated than flat foldability. The patterns of Hull and Zakharevich are always flat foldable. As well, their optional folds go beyond more common variants of the flat folding problem.

An alternative and independently developed method for proving Turing universality of origami folding has been recently suggested by Michael Assis~\cite{Ass-24}. Assis uses gadgets from Bern and Hayes~\cite{BerHay-SODA-96}, avoiding the bugs in the proof of Bern and Hayes by taking care that signal-carrying wire gadgets cross each other only perpendicularly. With this basis, his construction does not need optional folds. With alternative choices of gadgets it can use either unlabeled or labeled crease patterns. Assis observes that the gadgets of Bern and Hayes can form arbitrary Boolean logic circuits. He envisions from this a conventional CPU and memory built from these gates, connected to each other as sequential logic (incorporating feedback loops), with a sequence of folded states that change in  over time like the changing signals in the circuitry of an electronic computer, and kept in synchrony by a clock signal whose nature is not specified. Many details, such as the geometric layout of the resulting crease pattern, what it means for a folding to change dynamically, and how to incorporate an infinite memory into a conventional random-access machine design, are left unspecified. Therefore, it is unclear how to extract from his description a well-specified and undecidable decision problem.

In a talk presenting his work at the 8th International Meeting on Origami in Science, Mathematics and Education (8OSME) in Melbourne, 2024, Assis remarked that a universal computer constructed on this basis would (like the construction of Hull and Zakharevich) require an infinite sheet of paper, and a similar remark appears in his preprint: ``as long as we have infinite sheet of paper with an infinite number of NAND gates connected properly with an infinite amount of memory, we will have constructed a Universal Turing Machine''. Attempting to determine whether this requirement of infinite paper is truly necessary, in light of Tomoko Fuse's work on fractal folds on finite paper (presented on the same day of 8OSME),  became the basis for the new results of this section.

\subsection{Binary tiling}

\begin{figure}[t]
\centering
\includegraphics[width=0.4\textwidth]{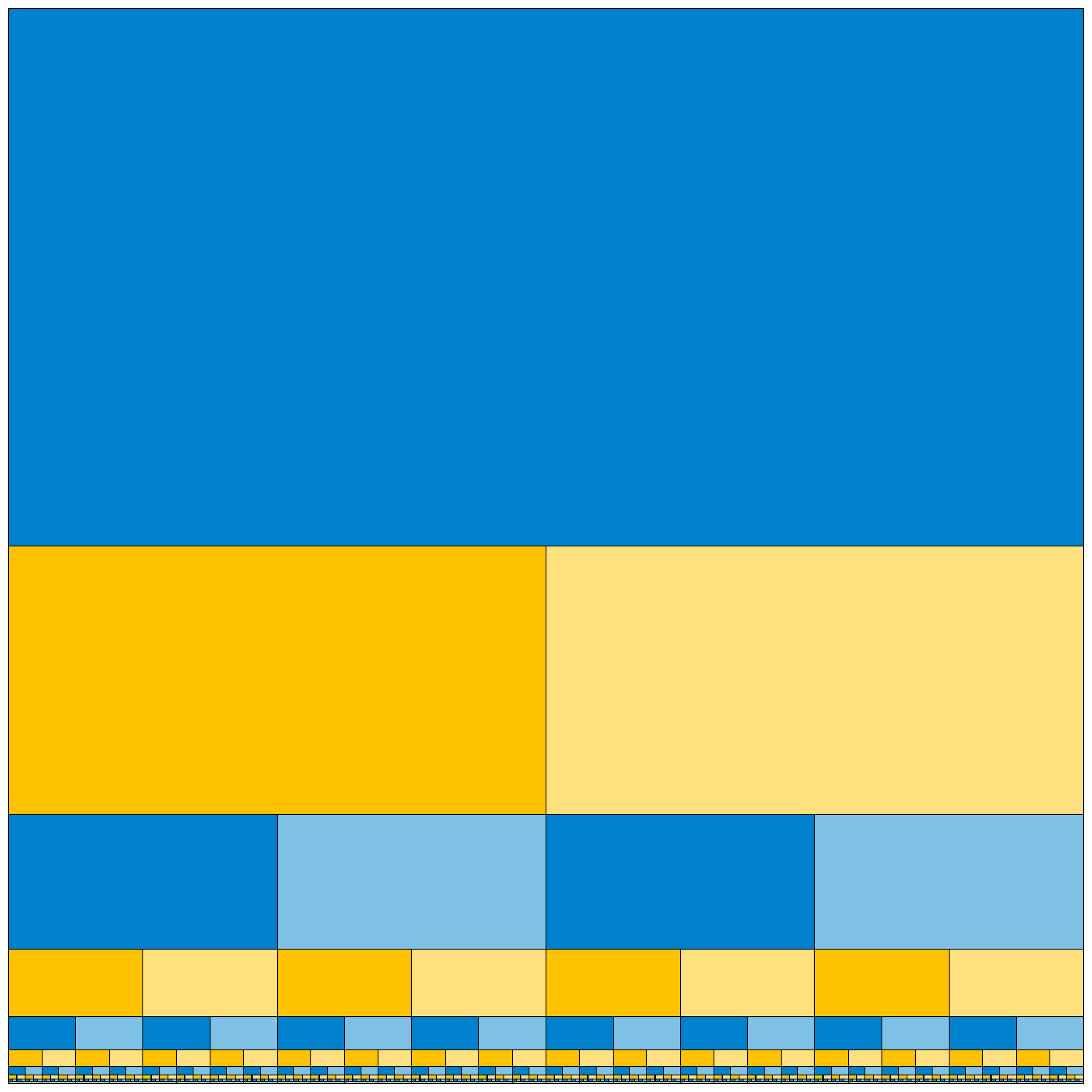}
\qquad
\includegraphics[width=0.4\textwidth]{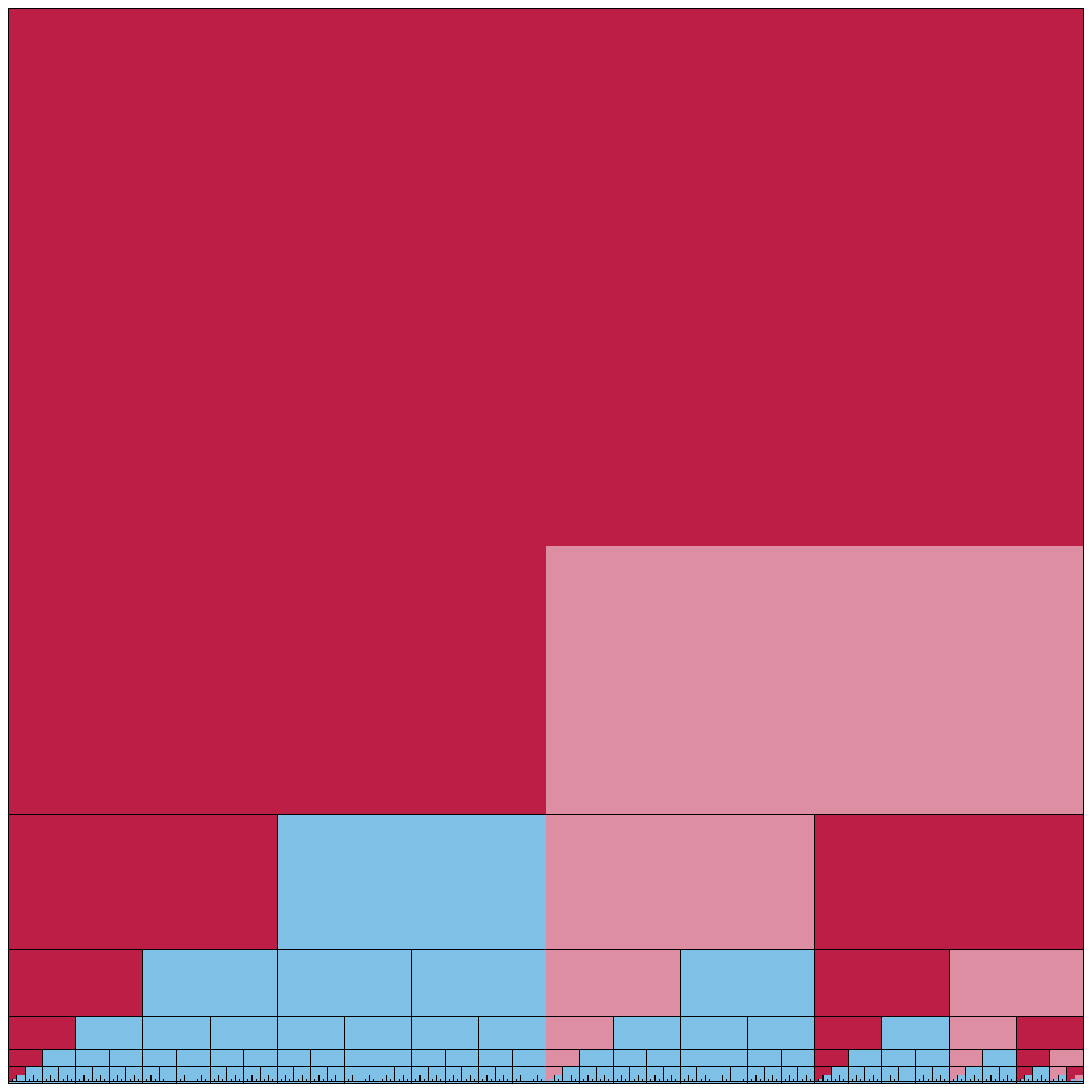}
\caption{The binary tiling, in the form of a recursive subdivision of a square into $2\times 1$ rectangles. In the coloring on the right, the alternating dark and light red columns of rectangles represent successive states of a Turing machine simulation; the blue rectangles are unused by the simulation, and will instead transfer a Turing machine cell state unchanged from left to right.}
\label{fig:binary}
\end{figure}

We simulate a Turing machine using a finite sheet of paper whose folding layout uses the \emph{binary tiling} in the form depicted in \cref{fig:binary}: a recursive self-similar subdivision of a square into $2\times 1$ rectangles, one covering the upper half-square, with the square's lower left and lower right quadrants each subdivided in the same pattern as the whole square. Each rectangle has two children, left and right, adjacent to its bottom edge, like the left and right children of a binary tree, and (except for the topmost rectangle) a parent rectangle above it. Additionally, each rectangle has left and right neighbors adjacent to it along its left and right sides.

The binary tiling was studied as a tiling of the hyperbolic plane by Böröczky in 1974~\cite{Bor-ML-74}. The outer square can be interpreted as part of a Poincaré half-plane model of the hyperbolic plane, with the square's bottom side lying on the line at infinity. The rectangular tiles model congruent non-convex hyperbolic quadrilaterals whose horizontal sides form arcs of horocycles; their vertical sides are straight line segments. Similar subdivisions of a disk by circular arcs (which can be interpreted in the same way as straight line segments and arcs of horocycles in the Poincaré disk model of the hyperbolic plane~\cite{Gup-IMM-06}) have been studied since the late 1920s as the Smith chart of radio engineering~\cite{Miz-JIECEJ-37,Smi-Electronics-39,Vol-Pb-40}. The binary tiling also models the interlocking lizards of M. C. Escher's \emph{Regular Division of the Plane VI} (1957)~\cite{Esc-EoE-89}. We will use its recursive self-similar structure as a decomposition of an origami square into rectangles, but we do not need its hyperbolic geometry.

We follow the following strategy for constructing a crease pattern that can simulate Turing--universal computation:
\begin{itemize}
\item Our crease pattern will repeat the same finite finite pattern of creases in each $2\times 1$ rectangle of the binary tiling, scaled to that rectangle. In this way it will have finite description complexity, suitable as input to a computational problem that we will prove undecidable.
\item As we detail in \cref{sec:circuits}, we design each rectangle's crease pattern to simulate a Boolean circuit, taking input binary signals from neighboring rectangles and passing outputs to neighboring rectangles in an acyclic signal-passing pattern.
\item We design a circuit within each rectangle that chooses between two different behaviors, which we call \emph{active} and \emph{passive}, in a pattern depicted in \cref{fig:binary} (right). Among active rectangles we distinguish the \emph{top} rectangles, topmost in a column of active rectangles.
\item To form this pattern of active and passive rectangles, each active rectangle passes a signal to its left child telling it to become active but not top. An active top rectangle signals its right child to become active and top. Non-top active rectangles instead signal their right children to become passive. Passive rectangles signal both children to become passive. In this way, signaling the topmost rectangle of the binary tiling to become active and top produces exactly the pattern of active and passive rectangles in \cref{fig:binary}.
\item We simulate a Turing machine with a single half-infinite tape and a single head somewhere on this tape. Each column of active rectangles simulates a single time step of this machine. Each rectangle in the column simulates a cell of the tape, including its symbol, whether it contains the Turing machine head, and its state if it does.
\item To simulate a time step of the Turing machine, an active rectangle takes in signals from its left side describing the previous (or initial) symbol on its tape cell, exchanges signals with its parent and left child sharing their symbols, and calculates the state that should result for the same tape cell in the next time step. These updated states are then routed down one position within the column, out the left sides of the non-top rectangles in the column, and across passive cells to the next column in the sequence of columns. The downward motion incorporated into this routing maintains the alignment of the half-infinite Turing machine tape cells with rectangles in columns of active rectangles.
\item The circuits simulated by the folds in each rectangle will determine the states, alphabet, and transition function of an arbitrary Turing machine. We will simulate a universal Turing machine, with additional constraints detailed later. As in the construction of Hull and Zakharevich~\cite{HullZak-23}, we augment the crease pattern with a finite amount of information about how to fold it, specifying that the topmost rectangle is top and active, that it contains the Turing machine head in its initial state, and specifying the symbols on finitely many tape cells. The constraints on the Turing machine will ensure that it behaves correctly with only this finite specification.
\end{itemize}

\subsection{Circuits in binary tiles}
\label{sec:circuits}

Our simulations of circuits by crease patterns use the gadgets from the $\mathsf{NP}$-completeness proof for flat folding by Akitaya et al.~\cite{AkiCheDem-JCGCGG-15}. Akitaya et al. provided two sets of gadgets, one for unlabeled crease patterns and a second for crease patterns labeled with mountain and valley folds. Both sets have similar geometry, with all creases horizontal, vertical, or at $45^\circ$ angles to the coordinate axes. They can be described schematically as colored vertices and edges in a directed graph, drawn in the plane in the rough positions of the folds of the corresponding crease pattern, with each directed edge drawn as a line segment that is again horizontal, vertical, or at $45^\circ$ angles.
\begin{itemize}
\item A variable gadget is drawn as a directed edge, and carries a Boolean value between its two endpoints. The direction of this edge may be unrelated to the direction of information flow through the circuit. Instead, it is a notational convention, used to define a correspondence between folding patterns and Boolean values. For unlabeled crease patterns, an edge gadget is represented by two parallel creases, which may be pleated in either of two ways. For labeled crease patterns, it is represented by four parallel creases, again with two foldings. In both sets of gadgets, reversing the direction of a variable gadget leaves its creases unchanged but inverts the correspondence between its two foldings and the Boolean values that it represents.
\item A split gadget is drawn as a white vertex where three variable gadgets meet: one with an incoming directed edge, and two with outgoing directed edges both at $135^\circ$ angles to the incoming edge. It can only be folded when these three edges all represent equal Boolean values. Akitaya et al. also present a second of split gadget with one outgoing edge and two incoming edges at $135^\circ$ angles to it, but it is really just the same crease pattern reinterpreted under edge reversals and complementation of the corresponding Boolean values. We may allow split gadgets with other choices of incoming and outgoing edges, always with the same crease pattern and with the allowed Boolean values complemented for each reversed edge.
\item A clause gadget is drawn as a black vertex where three variable gadgets meet, at the same angles as a split gadget: two edges at $135^\circ$ angles to a third edge, and at $90^\circ$ angles to each other. When the three variable gadgets are oriented all incoming or all outgoing, the clause gadget can only be folded when the Boolean values that they represent are not all equal.
\item A cross gadget is drawn by Akitaya et al. as a crossed white vertex, at which four variable gadgets meet at right angles, but we will draw it as a nonplanar crossing of two variable gadgets that each continue in both directions through the gadget. In either style of drawing the effect is the same: the Boolean values represented on the vertical segments above and below the gadget must be equal, the Boolean values represented on the horizontal segments to the left and right of the gadget must be equal, and the vertical and horizontal Boolean values may be chosen independently of each other.
\end{itemize}

\begin{figure}[t]
\centering\includegraphics[scale=0.3]{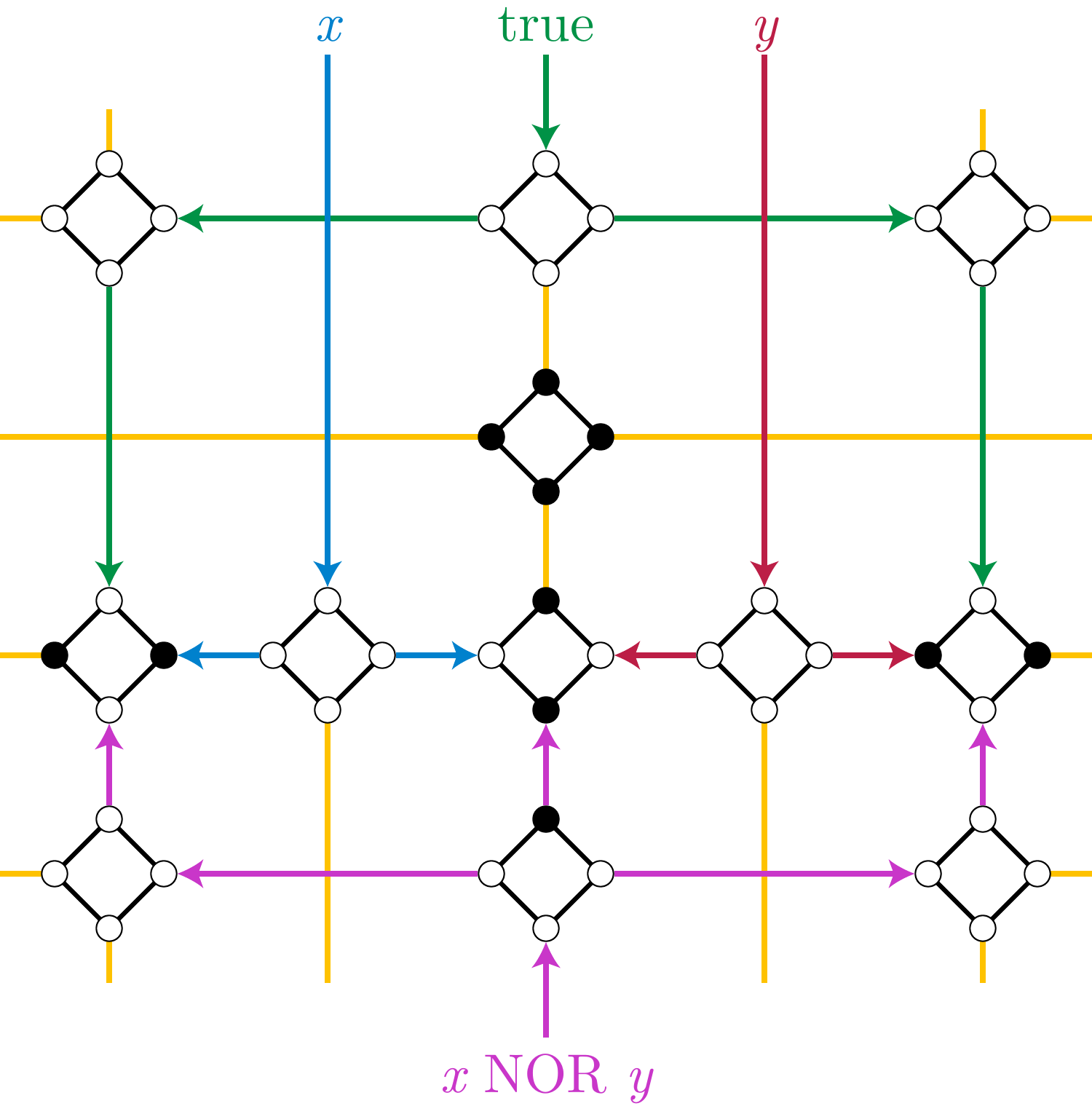}
\caption{NOR gate built from the gadgets of Akitaya et al.~\cite{AkiCheDem-JCGCGG-15}, using our grid-based layout convention. The output of the NOR gate is depicted with an inward-pointing arrow: the direction of each arrow may be unrelated to the direction of information flow. Yellow edges indicate variable gadgets whose value is unused or arbitrary; we leave these edges undirected.}
\label{fig:grid-nor}
\end{figure}

We adopt the following strict layout convention to build circuits out of these gadgets (\cref{fig:grid-nor}):
\begin{itemize}
\item For an appropriate parameter $k$, we will divide each rectangle of the binary tiling into a grid of $2k\times k$ squares, each having variable gadgets crossing the midpoints of its four sides.
\item Variable gadgets may carry one of three classes of values:
\begin{enumerate}
\item They may carry values intended to be used as inputs, outputs, or intermediate values of a circuit, such as the variables labeled $x$, $y$, and $x$ NOR $y$ in the figure.
\item They may be unused, either waste values produced from parts of the circuit, or filler used to space the components of the circuit in their intended positions in the grid. These may carry arbitrary values and the circuit should function as intended regardless of these values. In the figure, these are the variable gadgets depicted as yellow edges.
\item To distinguish true from false Boolean values, we connect throughout the grid (in all rectangles of our binary tiling) variable gadgets carrying a Boolean value ``true'' (green in the figure). A signal-carrying variable gadget will be interpreted as being true if it has the same folding pattern (for the same arrow orientation) as the ``true'' variable gadgets, and false otherwise. We do this because our gadgets act symmetrically under true--false reversal and would otherwise be unable to compute Boolean functions that are not likewise symmetric. A similar convention was used for the same reason by Assis~\cite{Ass-24}.
\end{enumerate}
\item Each grid square will contain either a single cross gadget (allowing the variable gadgets to connect from left to right and top to bottom), or four split and clause gadgets, arranged in a diamond (tilted square), with variable gadgets on its sides. Each of these four gadgets terminates a horizontal or vertical variable gadget entering one midpoint of the grid square.
\end{itemize}

Different diamonds of four elementary gadgets in a grid square have different effects on the Boolean values on their incident horizontal and vertical variable gadgets:
\begin{itemize}
\item A diamond of four clause gadgets (black vertices), such as the one in the upper center of \cref{fig:grid-nor}, allows any combination of Boolean values on the incident horizontal and vertical variable gadgets. The Boolean values on the sides of the diamond can alternate true and false, satisfying the clause gadgets regardless of the other variable gadgets. This allows, for instance, unused or waste signals to be terminated without interfering with each other. Each variable gadget intended to carry an unused signal should have a grid square of this type at one or both of its endpoints.
\item A diamond of four split gadgets, or three split gadgets and one clause gadget (both visible in \cref{fig:grid-nor}) forces all four of its incident variable gadgets to have the same Boolean value (or the complementary value, depending on arrow direction). Both variations are equivalent except for the conventional direction of their arrows. This can also be used to complement a Boolean value, relative to the value that would result from using a crossover gadget, while producing two waste values in the perpendicular directions.
\item A diamond of two opposite split gadgets and two opposite clause gadgets has the effect of applying the two clause gadgets to two triples of incoming values. There are three of these in \cref{fig:grid-nor}; in each of the four, one of the two clause gadgets is an essential part of the circuit while the other one produces a waste bit as output.
\end{itemize}

Grid squares containing a cross gadget, four clause gadgets, or four split gadgets are sufficient to allow arbitrary Boolean values (including the ``true'' value) to be routed to any part of a circuit, insulated from each other by grid squares with four clause gadgets, and inverted or not as required. The remaining grid squares, which apply clause gadgets to triples of values, can be used to implement a Boolean NOR gate (\cref{fig:grid-nor}) and hence any other Boolean gate or acyclic circuit of Boolean gates. The NOR gate of the figure operates by routing its two inputs $x$ and $y$, and the ``true'' value, to three clause gadgets, and then using split gadgets to force the three ``output'' signals from these clause gadgets to have a single shared Boolean value. It resembles in structure, if not in its geometric layout, a similar construction by Assis~\cite{Ass-24} for a NAND gate based on the gadgets of Bern and Hayes~\cite{BerHay-SODA-96}. One may verify using truth tables that the only valid foldings of the figure are ones in which the $x$ NOR $y$ output signal carries the Boolean value $x$ NOR $y$. Any schematic representation of these gadgets, laid out according to this layout convention, can be converted into a crease pattern, with the gadgets all having the same geometric proportions, chosen small enough to fit into the grid squares.

One detail that we will need, going beyond pure combinational logic, is a mechanism for preventing a crease pattern from folding flat, based on the value of a logical signal (analogous to the mythical ``halt and catch fire'' instruction of some computer architectures). Doing so is straightforward: route the signal to a clause gadget, together with two copies of the ``true'' signal. If the signal is also true, the clause gadget will fail to fold. We can think of this as emulating a special kind of Boolean logic gate with a single input and no outputs, a ``halt gate''.

We summarize the observations of this section:

\begin{observation}
Any combinational (acyclic) Boolean logic circuit (allowing halt gates) can be simulated using gadgets of the types described by Akitaya et al.~\cite{AkiCheDem-JCGCGG-15}, laid out schematically in the grid conventions described above, with the circuit's inputs and outputs connected to arbitrarily chosen grid square midpoints on the boundary of a rectangle of $2k\times k$ grid squares, for any sufficiently large $k$.
\end{observation}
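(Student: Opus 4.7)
The plan is to combine the functional completeness of NOR with a standard grid-based circuit layout. The preceding discussion has already assembled the four ingredients the argument needs. First, Figure \ref{fig:grid-nor} exhibits a NOR gadget whose output variable gadget carries the value $x\ \text{NOR}\ y$ of its two inputs. Second, a halt gate is realized by a single clause gadget fed by the signal to be halted on together with two copies of the global ``true'' wire: such a gadget fails to fold flat exactly when its input signal is true. Third, a grid square containing a cross gadget lets two variable gadgets pass through the same square independently. Fourth, a grid square containing a diamond of four split gadgets forces all four incident variable gadgets to a common Boolean value, so with the two unused sides terminated by four-clause-diamond squares it serves as either a wire bend or a fan-out. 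Since NOR is a functionally complete single-gate basis and fan-out is available, every combinational circuit (optionally with halt gates) can be rewritten as a DAG whose internal nodes are NOR or halt gates and whose wires may be freely duplicated and turned.

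Given such a DAG $C$, I would lay it out in the grid as follows. Topologically order the gates $g_1,\dots,g_m$ and assign each one a fixed, constant-size block of grid squares large enough to hold a NOR or halt gadget, separating consecutive blocks by channels of four-clause-diamond squares. Reserve a global ``true'' bus that threads through the rectangle along a spanning path, branching off copies via four-split diamonds wherever a gate or re-inverted signal needs one. Connect the output of each gate to the inputs of its downstream gates by routing wires through the intervening grid squares, using cross gadgets on straight segments and four-split diamonds at turning points; all remaining grid squares are filled with four-clause diamonds so that their unused variable gadgets may take arbitrary values. Any standard VLSI-style planar embedding of a DAG with $m$ gates and $p$ boundary ports produces such a layout inside an $O(m+p)\times O(m+p)$ grid, so choosing $k=\Theta(m+p)$ fits the circuit inside a $2k\times k$ block with the specified inputs and outputs placed at any desired grid-square midpoints on the boundary of the rectangle.

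The main obstacle is routing bookkeeping rather than mathematical depth: correctness of each individual gadget is inherited from Akitaya et al.~\cite{AkiCheDem-JCGCGG-15}, and the correctness of the NOR gate is immediate from the truth-table verification in the preceding paragraph, so what remains is to certify that the ``true'' bus, the signal wires, and the gate blocks can coexist without conflict. This is standard: because each grid square carries at most two signals (one horizontal, one vertical), because inversions can be absorbed into the arrow-orientation convention of variable gadgets so no auxiliary NOT gates are required, and because $k$ may be taken as large as desired, the required disjoint routing always exists. Fixing any such layout yields the claimed simulation.
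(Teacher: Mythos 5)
Your proposal matches the paper's own (implicit) argument essentially exactly. The paper presents this observation as a summary of the preceding subsection rather than giving it a separate proof, and the ingredients you assemble — the NOR gate of \cref{fig:grid-nor} for functional completeness, the halt gate as a clause gadget fed two copies of the ``true'' wire, cross gadgets and four-split diamonds for routing and fan-out, four-clause diamonds as insulating filler, and arrow reversal to absorb inversions — are precisely the ones the paper enumerates. Your added VLSI-style $O(m+p)$ size bound is a reasonable instantiation of the paper's unquantified ``for any sufficiently large $k$,'' so the two arguments agree in substance; the only real difference is that you make explicit a routing-area claim that the paper leaves to the reader.
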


We do not yet write that we can simulate the circuit by a finite crease pattern using the crease pattern designs by Akitaya et al., nor yet that we can connect signals from one rectangle to another, because of the scale and alignment issues that we address in the next two sections.

\subsection{Scale}

Although we have described the gadgets of Akitaya et al.~\cite{AkiCheDem-JCGCGG-15} as 0-dimensional point vertices and 1-dimensional line segment edges, their actual design is two-dimensional, as fold patterns occupying an area within a crease pattern. This area can be parameterized by the \emph{line width} of the variable gadgets. In defining width, it is equivalent to consider the distance between the outer parallel creases of an unfolded variable gadget, the distance between its outer creases in its folded state, or the amount by which folding this gadget reduces the distance between points on opposite sides of the gadget; these definitions are equivalent up to fixed factors which depend on the gadget design but not its scale. We use the third definition, the amount by which folding reduces distances.

A single variable gadget can be constructed at any scale that fits into the paper to be folded, but the scales of interacting variable gadgets are not independent of each other. For the crease patterns of Akitaya et al.~\cite{AkiCheDem-JCGCGG-15}, two variable gadgets that cross at a cross gadget have the same width. At the split and clause gadgets, the three incident variable gadgets have two different widths: a smaller width for the two variables that meet at a right angle, and a larger width for the third variable, at a $135^\circ$ angle to the other two. One can calculate the ratio of these widths by examining the split and clause crease patterns in both labeled and unlabeled forms, but instead these width ratios can be determined theoretically in a unified way:

\begin{figure}[t]
\centering\includegraphics[scale=0.23]{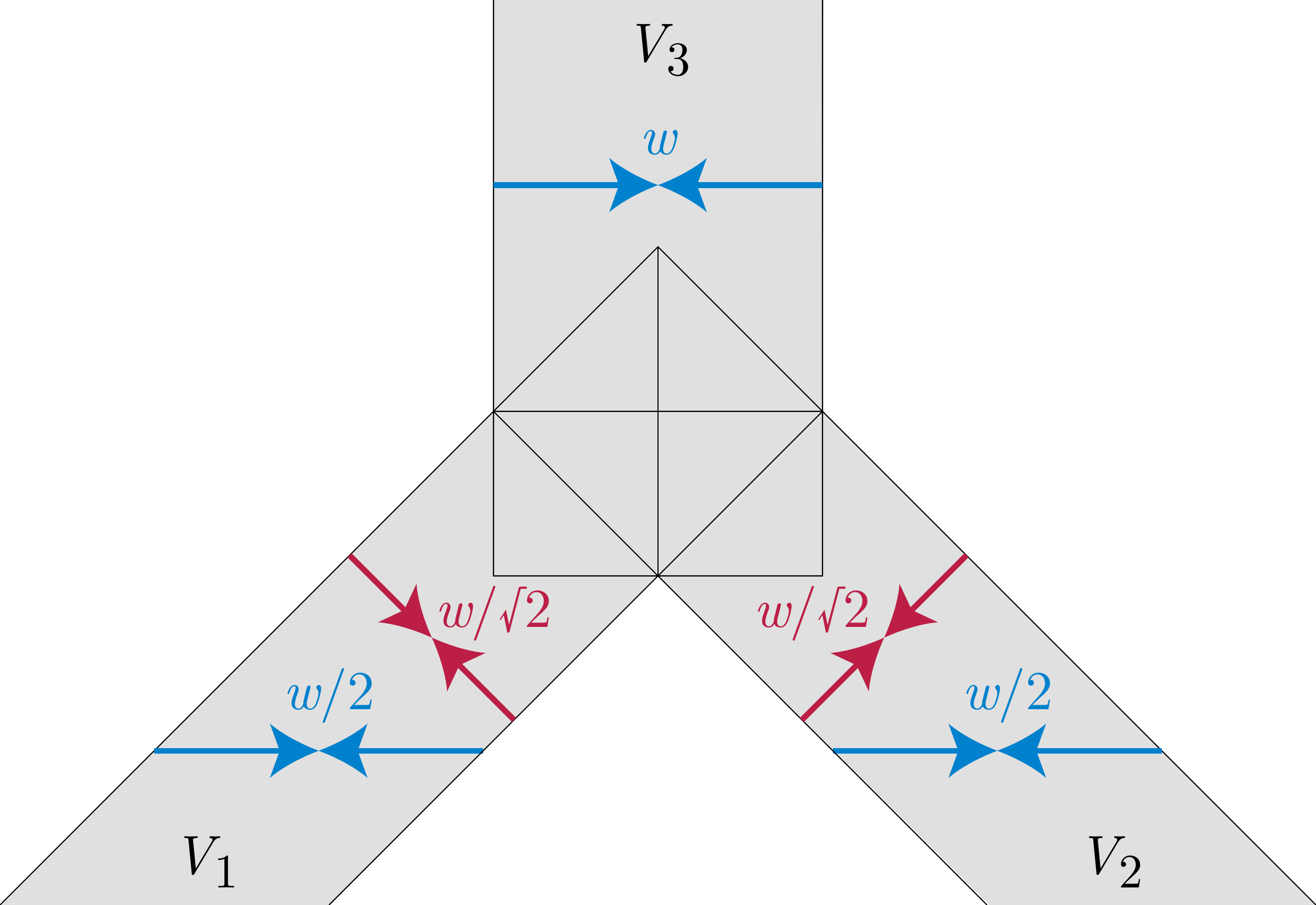}
\caption{Notation for \cref{lem:varscale}: three variable gadgets, meeting at a split or clause gadget (shown schematically rather than by its crease pattern), labeled by how much folding reduces the horizontal distance (blue) or perpendicular distance (red) between points on opposite sides of the variables.}
\label{fig:scale}
\end{figure}

\begin{lemma}
\label{lem:varscale}
At any gadget where three variable gadgets meet, two at right angles to each other and a third at $135^\circ$ angles to both of them, and in which the two right-angled variable gadgets have equal widths, this width is smaller from the width of the third variable gadget by a factor of $1/\sqrt2$.
\end{lemma}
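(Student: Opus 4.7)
The plan is to exploit the fact that each variable gadget, away from the central vertex, is essentially a pleat: two parallel creases whose effect on the local flat folding is the composition of two reflections across parallel lines, i.e., a pure translation perpendicular to the pleat. With the paper's definition of width as the amount by which folding reduces perpendicular distance across the gadget, this translation has magnitude equal to the gadget's width.

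The main step is to impose the closure condition at the vertex. Label the three gadgets $1,2,3$, with $1$ and $2$ the right-angled pair of common width $w$ and $3$ the third gadget of width $w'$. Place the vertex at the origin, orient gadget $1$ along the positive $x$-axis and gadget $2$ along the positive $y$-axis, so that gadget $3$ lies along the ray at angle $225^\circ$ (the unique direction making $135^\circ$ with both of the other two). The three sectors surrounding the vertex are related in the local flat folding by the three perpendicular translations described above. Traversing the vertex counterclockwise and returning to the starting sector must reproduce the original rigid motion, so the three translation vectors must sum to zero. With the perpendicular unit normals oriented consistently (each pointing from the sector being entered back toward the sector being left), this gives
\[
w\,(1,0) + w'\,\bigl(-\tfrac{1}{\sqrt 2},\tfrac{1}{\sqrt 2}\bigr) + w\,(0,-1) = (0,0).
\]
Each coordinate then yields $w = w'/\sqrt 2$, matching the claim. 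Note that this argument actually does not require the equality of the two right-angled widths as a hypothesis; the two scalar equations force each of them separately to equal $w'/\sqrt 2$.

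The main point to justify carefully is the closure condition itself, and in particular the sign conventions for the three perpendicular translations. This reduces to the standard observation that a product of two reflections across parallel lines is a translation whose magnitude is twice the distance between the lines and whose direction is perpendicular to them, combined with the requirement that the piecewise isometry be consistently defined on a neighborhood of the vertex. The argument is the pleated analogue of the familiar vertex-closure constraint in flat folding, and no genuine new difficulty arises beyond careful bookkeeping of the three translations.
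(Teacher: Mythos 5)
Your proof is correct, and it takes a genuinely different (and in some ways cleaner) route than the paper's. The paper works directly with the geometry of the specific configuration: it places $V_3$ vertically with width $w$, notes that folding brings a point in the left $135^\circ$ wedge closer to its reflection across $V_3$'s axis by exactly $w$, invokes the equal-width hypothesis on $V_1$, $V_2$ to split that reduction evenly at the midpoint, and then projects onto the horizontal axis to conclude that each diagonal gadget must have width $w/\sqrt2$. Your proof instead makes the underlying mechanism explicit as a vertex closure condition: traversing once around the vertex, the three pleat translations (each perpendicular to its gadget, with magnitude equal to the gadget width, since a pleat acts as a product of reflections across parallel lines) must compose to the identity, and because translations commute this becomes the vector-balance equation you write down and solve. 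This framing is more systematic, generalizes without change to any number of pleats meeting at arbitrary angles, and, as you observe, does not actually require the equal-width hypothesis: the two scalar equations force both right-angled widths to equal $w'/\sqrt2$ independently, so the hypothesis is a consequence rather than an assumption. Your sign convention (each translation pointing from the sector entered back toward the sector left) is the correct one, reflecting the fact that folding a pleat brings its far side toward its near side, and the arithmetic checks out. In effect, the paper's argument is the same closure computation carried out only halfway around (to the midpoint) and then finished off by the symmetry assumption; your version completes the loop and thereby removes the need for that assumption.
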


\begin{proof}
As a notational convention, align the gadgets so that the third variable gadget $V_3$ (the one making $135^\circ$ angles to the other two variable gadgets $V_1$ and $V_2$) is vertical, with width $w$ (see \cref{fig:scale}).
Thus, by our definition of width, folding these gadgets brings points in the plane that are on the left side of $V_3$ closer to their reflections across $V_3$ on its right side, reducing their distance by $w$.
The left side and right sides of $V_3$ are bounded by two $135^\circ$ uncreased wedges in the plane, which extend downwards of  the gadget $G$ at which $V_1$, $V_2$, and $V_3$ meet; thus, for points within these wedges, even below $G$, a point in the left wedge is brought closer to its reflection in the right wedge by a distance of $w$, by folding the gadgets. By the assumption that $V_1$ and $V_2$ have the same width, the midpoint of these two reflected points is brought closer to both points (considering only horizontal distance, i.e. the absolute difference of $x$-coordinates) by $w/2$ when the gadgets are folded. (This reduction in horizontal distance is not the same as the horizontal extent of gadgets $V_1$ and $V_2$.) Because $V_1$ and $V_2$ are oriented diagonally, at slopes $\pm 1$, they must have width $w/\sqrt2$ in order to make the horizontal component of distance reduction be $w/2$.
\end{proof}

\begin{figure}[t]
\centering\includegraphics[scale=0.2]{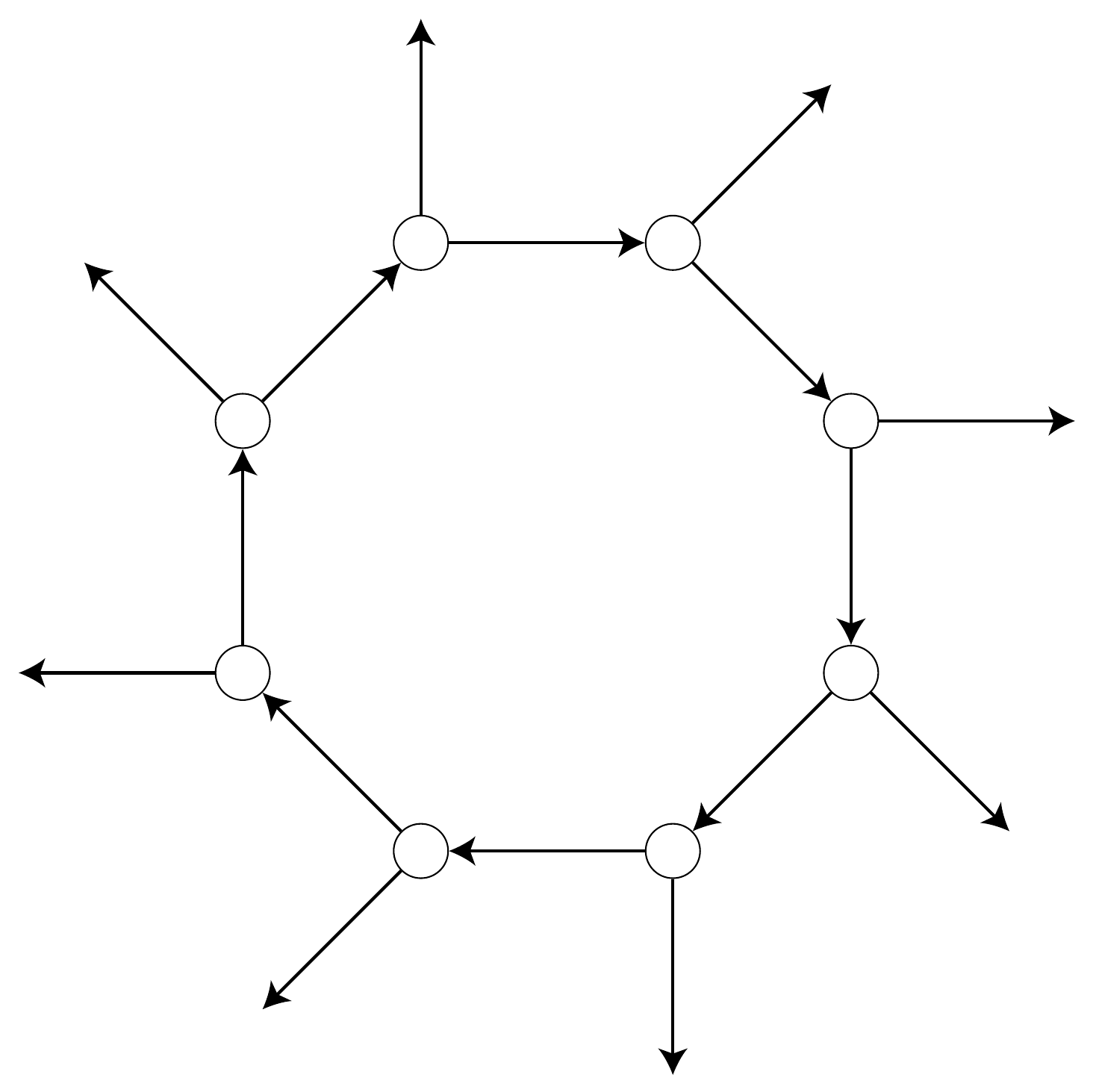}
\caption{A layout of variable and split gadgets with incompatible scales. Each variable gadget (directed edge) in the central cycle must have width $1/\sqrt2$ times its predecessor's width, resulting in a 16-fold reduction in width after eight steps around the cycle returning to the start, an impossibility.}
\label{fig:incompatible-scale}
\end{figure}

Because of this change of width, certain arrangements of gadgets have incompatible widths (\cref{fig:incompatible-scale}), and we must verify that our intended layout does not have such incompatibilities. Akitaya et al. do not describe or check this scale issue, but the layouts they use do not have incompatible scales. Perhaps for this reason Hull and Zakharevich~\cite{HullZak-23} instead designed their gadgets with variable gadgets meeting at $120^\circ$ angles, all the same width. However, for our purposes, scale change is a feature: it allows us to connect circuits of different scales in different rectangles of the binary tiling.

\begin{observation}
For the circuit layouts detailed in \cref{sec:circuits}, all variable gadgets can be designed with compatible scales at each cross, split, or clause gadget.
\end{observation}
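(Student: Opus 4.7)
The plan is to assign every horizontal or vertical variable gadget a single common width $w$, and every diagonal variable gadget (one with slope $\pm 1$) the width $w/\sqrt{2}$, choosing $w$ small enough relative to the grid-square side length that each gadget's crease pattern fits well inside its containing grid square. I would then check scale compatibility at the only two places where variable gadgets meet: cross gadgets and split or clause gadgets.

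At each cross gadget, a horizontal variable crosses a vertical one, both of width $w$, which matches the equal-width requirement at cross gadgets stated in the preceding paragraph. At each corner of one of the diamond arrangements, a split or clause gadget is incident to two adjacent diamond sides (both diagonal, meeting at $90^\circ$, both of width $w/\sqrt{2}$) and to the horizontal or vertical variable entering through the midpoint of the grid-square side at that corner (of width $w$, and meeting each diagonal at a $135^\circ$ angle). Since $w/\sqrt{2} = (1/\sqrt{2})\cdot w$, this is exactly the ratio required by \cref{lem:varscale}.

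The key structural fact that rules out situations like \cref{fig:incompatible-scale} is that in the grid layout convention of \cref{sec:circuits} every diagonal variable is one of the four sides of a diamond interior to a single grid square: diagonals are never shared between grid squares, so no cycle of gadgets can enforce a repeated $1/\sqrt{2}$ rescaling on them. The horizontal and vertical variables are the only ones that cross grid-square boundaries, and they all carry the same width $w$, so they impose no conflicting constraint either.

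The only real thing to verify is the elementary geometry at a diamond corner — that two adjacent diamond sides of slopes $+1$ and $-1$ meet at $90^\circ$, and that each meets the axis-aligned fourth incident variable at $135^\circ$ — together with the observation that any sufficiently small $w$ leaves enough room inside each grid square for the (fixed-shape) crease patterns of Akitaya et al.\ at each gadget. I expect the main obstacle is just making the ``no cycle of diagonals'' observation precise by walking through the layout convention carefully, since once that is established the rest is a routine application of \cref{lem:varscale}.
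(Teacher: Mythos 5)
Your proposal is correct and matches the paper's proof essentially exactly: width $w$ for axis-aligned variable gadgets, $w/\sqrt2$ for diagonals, with compatibility at cross gadgets (two equal widths $w$) and at split/clause gadgets via \cref{lem:varscale}. Your extra remark about diagonals being confined to single grid squares is a harmless elaboration; the paper's argument already succeeds on the purely local width check at each gadget.
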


\begin{proof}
We choose some width $w$, sufficiently small for each gadget to fit into the grid squares of the layout, use width $w$ for the vertical and horizontal variable gadgets, and use width $w/\sqrt2$ for the diagonal variable gadgets. All cross gadgets involve the crossing of a vertical and horizontal variable gadgets, of equal width $w$ and all split or clause gadget involve a single vertical or horizontal variable gadget of width $w$ at $135^\circ$ angles to the other two diagonal variable gadgets of widths $w/\sqrt2$, matching the restrictions of \cref{lem:varscale}.
\end{proof}

\begin{figure}[t]
\centering\includegraphics[scale=0.3]{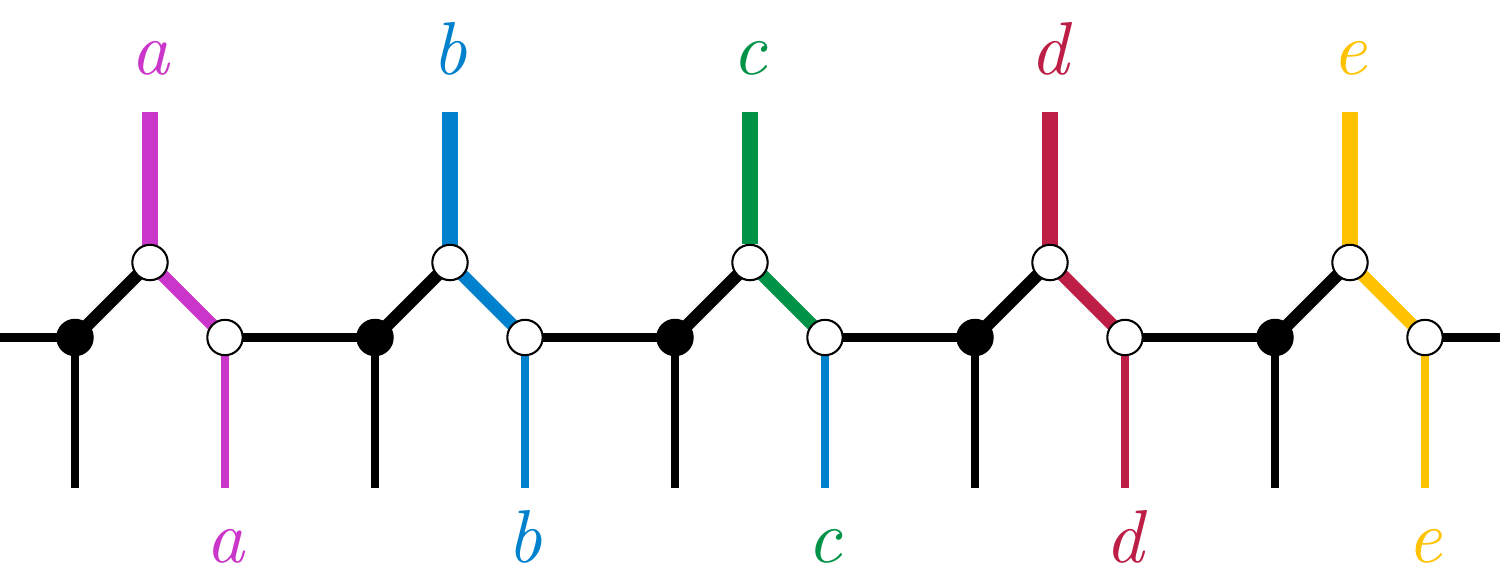}
\caption{Scale transition from variable gadgets in a parent rectangle with gadget width $w$ (the colored and labeled vertical line segments at the top of the figure) to variable gadgets in a child rectangle with gadget width $w/2$ (the corresponding colored and labeled vertical line segments at the bottom of the figure). The split gadgets in the central part of the figure (white circles) ensure that corresponding parent and child variable gadgets have the same value. The clause gadgets (black circles) separate these values from each other in the child rectangle by waste signals (black line segments). Variable widths are shown in approximately the correct proportions to each other.}
\label{fig:scale-transition}
\end{figure}

As discussed earlier, each rectangle of the binary tiling will have a scaled copy of the crease pattern used at the upper root rectangle. If we design the gadgets of this root crease pattern to fit in the root rectangle, their scaled copies will automatically fit in the same way into every other rectangle of the binary tiling. However, this leaves us with a difficulty: we need to send binary signals between a parent rectangle, with variable gadgets having some width $w$ scaled to its size, and its left and right children with the different width $w/2$. Variable gadgets cannot cross directly from a parent rectangle to a child rectangle, as they can between left and right neighbor rectangles, because after such a crossing the variable gadget would have the wrong scale.

Instead, we connect the variable gadgets in each grid column of a parent rectangle to variable gadgets in its child rectangles using the layout of gadgets shown in \cref{fig:scale-transition}. This layout ensures that corresponding Boolean values in the parent and child rectangles have the same value, regardless of which direction information flows. The black clause gadgets of the figure separate these values, producing waste values on alternating vertical paths into the child rectangle. This pattern can be extended from left to right across the entire origami square, with its horizontal black edges aligned with the midpoints of the topmost row of grid squares in each child rectangle.

\begin{observation}
The scale transition above preserves signal values across the boundary between parent and child rectangles, has compatible variable gadget widths at all split and clause gadgets, and is aligned with the grid squares of both the parent and child rectangles.
\end{observation}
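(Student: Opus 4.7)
I will verify the three assertions of the observation in turn, tracing signals, widths, and coordinates through the pattern of \cref{fig:scale-transition}.

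First, for signal preservation, I trace the correspondence from each parent variable gadget (width $w$) to its corresponding child variable (width $w/2$). Each parent variable enters an upper white split gadget whose two remaining incident variables are the two diagonals descending from it; by the defining property of split gadgets, these diagonals must carry the same Boolean value as the parent variable, up to the arrow-direction conventions, which can be fixed consistently along the chain. Each diagonal then terminates at a second, lower white split gadget, whose third incident variable is the corresponding child variable; a second application of the split-gadget property forces this child variable to carry the same value. The interposed black clause gadgets produce waste signals on independent variable gadgets and so do not interact with the signal-carrying chains, giving signal preservation independently for each parent-to-child chain.

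Second, for width compatibility, I apply \cref{lem:varscale} twice along each chain. At the upper split the parent variable is the one at $135^\circ$ to two mutually perpendicular diagonals, so those diagonals have width $w/\sqrt{2}$. At the lower split those same two diagonals now play the role of the right-angled pair of equal width, and the child variable is the one at $135^\circ$ to both, so \cref{lem:varscale} forces its width to be $(w/\sqrt{2})/\sqrt{2}=w/2$, which is exactly the scale required by the child rectangle's grid. The black clause gadgets have the same three-variable geometry (two right-angled variables and a third at $135^\circ$ to both) and so inherit compatible widths by the same lemma; since no cross gadgets appear in the transition, no further width equalities need to be checked.

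Third, for grid alignment, I verify that the entire transition fits within a single row of the child's grid squares along the parent-child boundary. The parent's downward variables exit at midpoints of edges in the parent's bottom grid row; since the child's grid side is half the parent's, each such exit point sits midway between two adjacent child-column midpoints, offset from the nearest one by half a child grid square. A diagonal of slope $\pm 1$ spanning a vertical distance equal to half a child grid square shifts horizontally by that same amount, which is exactly the displacement needed to align each chain with the nearest child-column midpoint; placing the upper split at the top of the child's first row and the lower split at its bottom therefore routes each signal into the correct child column. The black clause gadgets fit into the remaining child columns and produce waste signals whose endpoints land at child grid midpoints. The main obstacle I expect is precisely this final bookkeeping: one must confirm simultaneously that the horizontal offsets between parent and child variables, the vertical extent of the transition, and the locations of the clause gadgets all fit inside one row of the child grid without mutual interference, and that enough free child columns remain for every waste signal produced along the way to be cleanly terminated.
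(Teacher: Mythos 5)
Your width argument at the lower split gadget applies \cref{lem:varscale} backwards. The lemma says that where three variables meet, the two at right angles to each other are \emph{smaller} than the lone variable at $135^\circ$ to both of them, by a factor of $1/\sqrt2$. So if, as you posit, the two diagonals (each of width $w/\sqrt2$) play the role of the right-angled pair at the lower split and the child vertical is the lone $135^\circ$ variable, the lemma forces the child vertical to have width $\sqrt2\cdot(w/\sqrt2)=w$, not $w/2$. To actually halve the width, each diagonal of width $w/\sqrt2$ must arrive at the lower gadget as the \emph{lone} $135^\circ$ variable, with the child vertical and a second width-$w/2$ variable (one of which can be a horizontal or another diagonal carrying a waste value) forming the perpendicular pair. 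In other words the chain is not parent $\to$ split $\to$ \{two diagonals\} $\to$ split $\to$ child; rather each signal passes through two gadgets at which it is successively the wide $135^\circ$ leg, with the extra outgoing legs at each gadget carrying waste. Your stated structure cannot give the scale change you claim, so this step needs to be repaired.

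Two smaller points. First, you write that variable gadgets exit at ``midpoints of edges,'' but the paper's layout convention routes variables through midpoints of grid \emph{squares}, not grid edges; the paper calls this choice out explicitly as the device that makes parent/child grids line up. Second, the alignment claim is not self-contained: the paper's unlabeled clause gadget does not have its three variable axes concurrent, so placing the scale transition requires the realignment construction of the subsequent subsection. You flag the alignment bookkeeping as the hard part, which is fair, but without addressing the clause-gadget misalignment your alignment argument does not close.
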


It is for the purpose of this grid alignment that we lay out our circuits with variable gadgets passing through the midpoints of grid squares, rather than through grid edges.

\subsection{Alignment}

\begin{figure}[t]
\centering\includegraphics[width=0.4\textwidth]{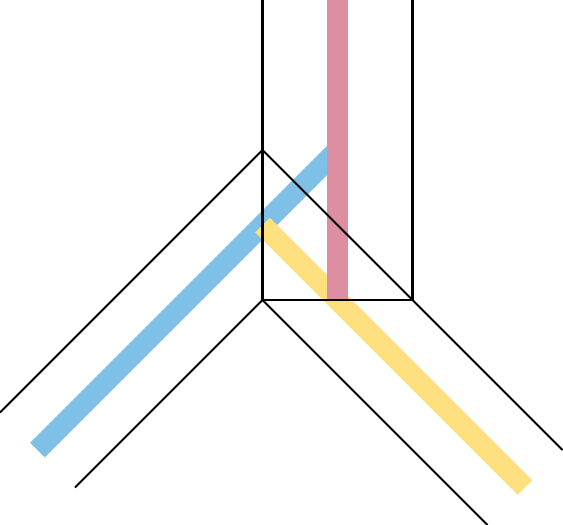}
\caption{For the unlabeled clause gadget of Akitaya et al.~\cite{AkiCheDem-JCGCGG-15} (black lines), the symmetry axes of its three incident variable gadgets (thick colored lines) do not meet at a point.}
\label{fig:unlabeled-clause-alignment}
\end{figure}

\begin{figure}[t]
\centering\includegraphics[scale=0.25]{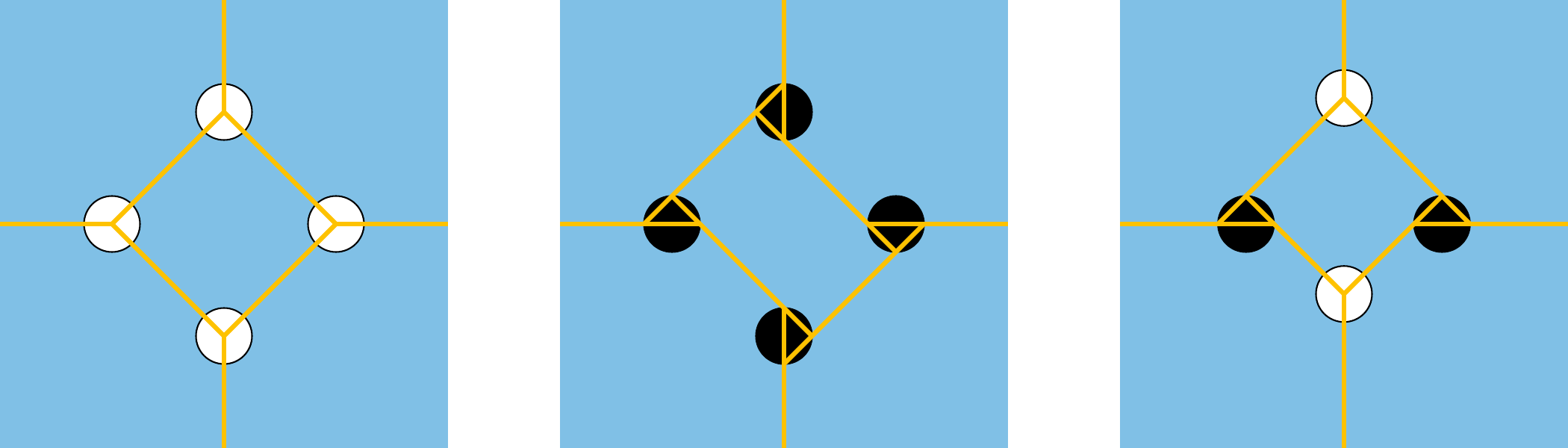}
\caption{Realignment of grid squares containing two or four unlabeled clause gadgets}
\label{fig:realigned}
\end{figure}

When we replace the schematic layout of our gadgets by their crease patterns, after choosing the variable width, we also need to determine consistent positions of the variable gadgets relative to their corresponding schematic line segments. The variable gadgets have crease patterns that are symmetric across a line parallel to the gadget, and ideally we would like to place each variable gadget with its axis of symmetry aligned with its schematic line segment.

Five of the six vertex gadgets of Akitaya et al., the unlabeled and labeled cross and split gadgets, and the labeled clause gadget, allow this symmetric alignment. But for the unlabeled clause gadget, the three symmetry axes of incident variable gadgets do not meet in a point (\cref{fig:unlabeled-clause-alignment}), so the schematic alignment cannot match the actual alignment. In a finite crease pattern, this misalignment merely causes small changes of position of other gadgets throughout the design. But in an infinite crease pattern such as ours with different scales for different parts of the pattern, it is important to maintain the same crease pattern in all scaled rectangles of the binary tiling, and to prevent misalignment of gadgets at one scale from propagating to bigger misalignments at a smaller scale.

For grid squares containing two or four clause gadgets, a suitable placement for these gadgets allows all the changes of alignment from these gadgets to affect the diagonal variable gadgets within the grid square; the vertical and horizontal variable gadgets connecting it to neighboring squares are unaffected (\cref{fig:realigned}). Grid squares containing a single clause gadget, such as the one used in \cref{fig:grid-nor}, are unnecessary as the same function may be performed by grid squares containing four split gadgets, and so they may be omitted from our design altogether.

\begin{figure}[t]
\centering\includegraphics[scale=0.25]{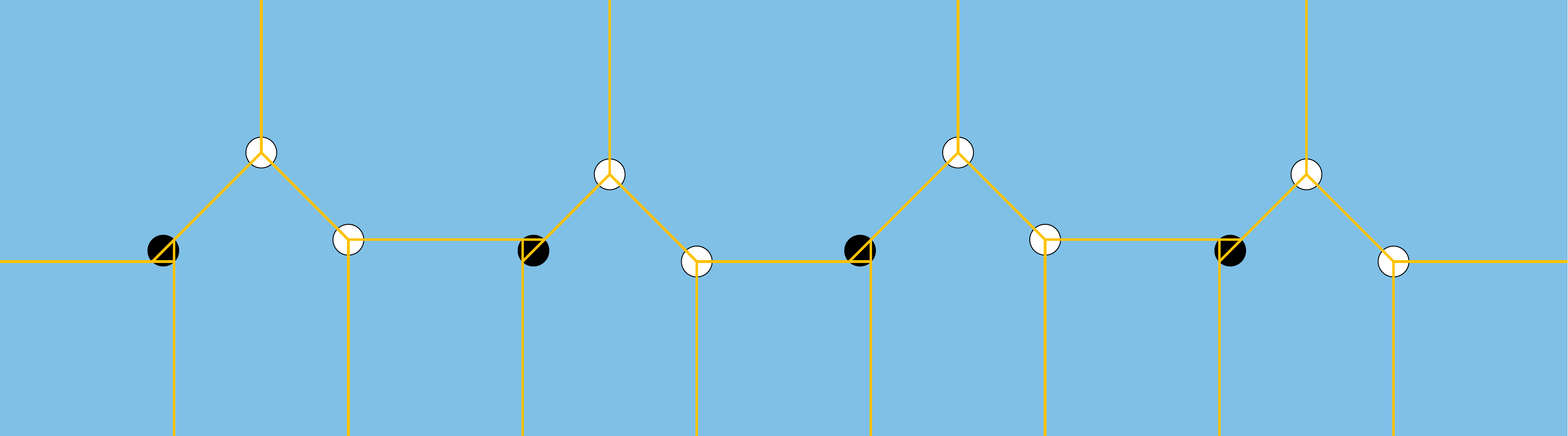}
\caption{Realignment of scale transition}
\label{fig:realigned-scale}
\end{figure}

For the transition between scales at the boundary between parent and child rectangles, we use clause gadgets alternatingly aligned above and below their incoming diagonal edge (\cref{fig:realigned-scale}), together with suitable vertical adjustments of the positions of split gadgets. The vertical variable gadgets in the two adjacent rectangles keep their exact alignments; only the diagonal and horizontal variable gadgets are realigned. Because each rectangle of the binary tiling is subdivided into a $2k\times k$ grid of squares, the number of alternations is even, causing this pattern to line up with the same pattern in the left and right neighbors of each rectangle.

\subsection{Universality}

Certain individual Turing machines, called \emph{universal Turing machines}, with a fixed alphabet, fixed state set, and fixed transition function, can simulate any other Turing machine, even one with more tapes, alphabet symbols, states. A natural way to design universal Turing machines is to have them directly simulate other Turing machines. Each alphabet symbol of the simulated machine can be represented by a sufficiently large tuple of tape cells of the simulating machine. The simulating machine can reserve part of its tape to store a representation of the states and state transitions of the simulated machine, using the rest of its tape to store its representation of the tape(s) of the simulated machine. Special alphabet symbols (or combinations of binary alphabet values) can be used as markers into the state transition table and simulated tape, allowing the simulating machine to bounce back and forth between the state transition table and simulated tape as it simulates one step of the simulated machine in many steps of the simulating machine. In this way we may to design a universal Turing machine $\mathcal U$ with the following additional properties, needed for our application:
\begin{itemize}
\item $\mathcal{U}$ operates on a semi-infinite tape. Finitely many contiguous tape cells provide input to $\mathcal U$, specifying the states, state transition function, alphabet, initial tape state, and initial head position of the Turing machine being simulated by $\mathcal U$. This finite input is terminated at each end by two distinct \emph{active tape terminators}.
\item $\mathcal{U}$ never changes, nor moves earlier in its tape than, the first active tape terminators. If it ever moves later in its tape than the second active tape terminator, it removes this terminator from its previous position and immediately places it at the new position, so the second position always records the farthest position to which the head has moved.
\item If the machine that is being simulated halts, $\mathcal{U}$ halts as well.
\end{itemize}
Because of the first active tape terminator, $\mathcal U$ will never move off the start of its semi-infinite tape, saving us from specifying what to do in that event. Because of the second active tape terminator, the state of the tape beyond these terminators is irrelevant to its behavior; it will behave the same regardless of what else is on the tape. The following complexity-theoretic result is standard:

\begin{lemma}[halting problem]
For any fixed universal Turing machine $\mathcal{U}$ that operates by direct simulation of an arbitrary Turing machine, let $\operatorname{Halt}_{\mathcal U}$ be the language of inputs that cause $\mathcal{U}$ to halt. Then $\operatorname{Halt}_{\mathcal U}$ is undecidable, and $\mathsf{RE}$-complete.
\end{lemma}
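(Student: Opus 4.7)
The plan is to prove the two directions separately: membership in $\mathsf{RE}$ and $\mathsf{RE}$-hardness, with undecidability following as a corollary of the latter and the strict containment $\mathsf{R}\subsetneq\mathsf{RE}$.

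For membership, $\operatorname{Halt}_{\mathcal U}\in\mathsf{RE}$ is immediate: a Turing machine $M$ recognizing $\operatorname{Halt}_{\mathcal U}$ simulates $\mathcal U$ step by step on its input and accepts if and only if the simulation halts. For $\mathsf{RE}$-hardness, I would reduce from the general halting problem $\operatorname{Halt}=\{\langle M,x\rangle : M\text{ halts on }x\}$, which is the canonical $\mathsf{RE}$-complete language by the classical construction. Given an instance $\langle M,x\rangle$, the reduction computes an encoding of the description of $M$ (its states, alphabet, transitions) together with the input $x$ and the initial head position, formatted in whatever input convention $\mathcal U$ uses, and flanked by the two active tape terminators specified in the lemma. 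This can be produced by a total computable function, giving a many-one reduction.

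The correctness of the reduction uses the ``direct simulation'' hypothesis on $\mathcal U$: by definition, when started on this encoding, $\mathcal U$ simulates $M$ on $x$ step by step, with the tape region between the active terminators (which $\mathcal U$ never crosses to the left and which it pushes rightward only as needed) used as workspace for the simulated tape. The active terminators guarantee that $\mathcal U$ never falls off the left end of its semi-infinite tape and that the irrelevant contents beyond the right terminator do not affect the simulation. The ``halt when the simulated machine halts'' property then yields the required biconditional: $\mathcal U$ halts on the encoding if and only if $M$ halts on $x$, so $\langle M,x\rangle\in\operatorname{Halt}$ if and only if the encoded input lies in $\operatorname{Halt}_{\mathcal U}$. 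This establishes $\mathsf{RE}$-completeness of $\operatorname{Halt}_{\mathcal U}$, and undecidability follows because an $\mathsf{RE}$-complete language that were decidable would make every $\mathsf{RE}$ language decidable, contradicting Turing's diagonal argument.

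The only place that needs a small amount of care, and hence the main obstacle, is verifying that the stipulated behavioral contract of $\mathcal U$ (semi-infinite tape, two active terminators that mark the live portion and are maintained invariantly, direct simulation, halting in lockstep with the simulated machine) really does make the reduction uniformly computable and correct for every Turing machine, not just for some restricted class. This is standard for textbook universal machines constructed by direct simulation and does not require any additional machinery beyond the properties already assumed in the statement.
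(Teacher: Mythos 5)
The paper does not actually supply a proof of this lemma; it introduces it with the remark that ``the following complexity-theoretic result is standard'' and immediately invokes it in the proof of \cref{thm:undecidable}. Your proposal is a correct and complete writeout of exactly that standard argument: $\mathsf{RE}$ membership by simulation, $\mathsf{RE}$-hardness by a many-one reduction from $\operatorname{Halt}$ that simply re-encodes $\langle M,x\rangle$ in the input convention of $\mathcal U$ (which is well-defined and total precisely because $\mathcal U$ is assumed to perform direct simulation), and undecidability as a consequence of $\mathsf{RE}$-completeness. The only thing worth flagging is that the paper's listed properties of $\mathcal U$ (the active tape terminators, halting in lockstep) are there to support the crease-pattern simulation in \cref{thm:undecidable}, not because they are needed for the halting-problem lemma itself; your proof correctly treats them as harmless conveniences rather than load-bearing hypotheses.
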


\begin{theorem}
\label{thm:undecidable}
There exists a fixed tame crease pattern $P$ for a square of origami paper (with either labeled or unlabeled creases) such that, for a given finite set $F$ of decisions about how to fold creases on the boundary of $P$, it is undecidable and $\mathsf{coRE}$-complete to determine whether $P$ can be flat folded consistently with $F$.
\end{theorem}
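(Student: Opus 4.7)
The plan is to build $P$ as the tame infinite crease pattern obtained by placing a scaled copy of one fixed finite pattern $P_0$ in each $2\times 1$ rectangle of the binary tiling of the origami square (\cref{fig:binary}). The pattern $P_0$ is laid out using the gadgets of Akitaya et al.\ on a $2k \times k$ grid of squares, with scale transitions at parent/child boundaries (\cref{fig:scale-transition,fig:realigned-scale}) matching the grid of the neighboring scale. Inside $P_0$ I would build three cooperating combinational subcircuits: an \emph{active/passive controller} implementing the propagation rules from \cref{sec:circuits}; a \emph{simulator} for one transition of the fixed universal Turing machine $\mathcal{U}$ at one tape cell, routing tape symbols in from the left neighbor and parent and delivering the updated symbol down and to the right; and a \emph{halt gate} wired to the signal ``machine is in its halt state'' that forbids flat-folding whenever it fires. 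Passive rectangles instantiate only the controller plus wire circuits that transparently forward Turing-machine signals between neighbors and between parent and children. The finite side information $F$ specifies boundary creases at the top of the root rectangle that mark it as active-top, install $\mathcal{U}$'s start state and head at the root's tape cell, write the input word on finitely many consecutive cells, and bracket this input by the two active tape terminators of $\mathcal{U}$.

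Tameness is immediate: all creases are unions of scaled copies of $P_0$, accumulating only at the bottom edge of the square, which is not part of the open underlying domain; every other point has a neighborhood meeting only finitely many creases. Using \cref{obs:tame} the local flat folding is unique up to rigid motion, and using the correctness of the Akitaya et al.\ gadgets the Boolean values on non-waste variable gadgets inside each rectangle are forced by the inputs to that rectangle regardless of what happens on the waste gadgets. Combined with the deterministic propagation of the active/passive controller, this means that, for every finite nearby region $S$ and every flat folding of $S$ consistent with $F$, the non-waste signal values in every active rectangle of $S$ equal the corresponding values in the simulation of $\mathcal{U}$ on the input encoded by $F$. The active tape terminators of $\mathcal{U}$ ensure that the simulation touches only finitely many cells at every step, so the values on all but finitely many cell-rectangles of any column are the forced blank symbol and agree with the default behavior of uninitialized cells.

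The reduction then has two directions. If $\mathcal{U}$ never halts on the input encoded by $F$, the halt gate never fires, and for each finite nearby region $S$ one can flat-fold $S$ by using the forced simulation values on signal variable gadgets and any allowed values on waste gadgets; by \cref{lem:local2global}, these extend to a global flat folding of $P$ consistent with $F$. If $\mathcal{U}$ halts at time step $t$, the halt gate in the step-$t$ active rectangle that sees the halt condition is forced to fire in every finite nearby region containing that rectangle and its chain of predecessors, so by \cref{lem:uncrossed} no flat folding of that region exists, and hence no global flat folding exists either. This is a many-one reduction from the complement of $\operatorname{Halt}_{\mathcal{U}}$ to flat-foldability of $P$ consistent with $F$, giving $\mathsf{coRE}$-hardness. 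Membership in $\mathsf{coRE}$ follows because, by \cref{lem:local2global}, non-foldability is witnessed by some finite nearby region admitting no valid layering, which can be enumerated and tested using \cref{thm:fpt} (or brute force over its finitely many candidate layerings). The argument applies to both the labeled and unlabeled gadget families of Akitaya et al., giving both versions of the theorem.

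The main obstacle is verifying the claim that within each active rectangle the simulation values on non-waste gadgets really are forced, despite the freedom available in waste signals, in the interior of passive rectangles, and in the realigned clause gadgets of \cref{fig:realigned}, and despite the fact that creases on the interface of parent and child are constrained only indirectly through the scale-transition strip. This is a local check per gadget class, but it must be carried out carefully for each variant of $P_0$ and for the scale-transition and realignment strips, and then combined through the infinite binary tree. Once this local forcing property is in hand, the global undecidability argument reduces to a diagram chase through the tiling together with a single application of \cref{lem:local2global}.
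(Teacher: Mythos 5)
Your proposal follows essentially the same approach as the paper: simulate a fixed universal Turing machine $\mathcal{U}$ on the binary tiling using the circuit gadgets of Akitaya et al., let $F$ encode the active/top designation of the root and the finite input tape, use a halt gate to block flat-foldability when $\mathcal{U}$ halts, and invoke \cref{lem:local2global} to pass from foldability of every finite nearby region to foldability of the whole pattern. Your explicit note on $\mathsf{coRE}$ membership (enumerating finite nearby regions and testing each for a valid layering) and your flagging of the forcing property as the step needing the most care are both accurate supplements to the paper's terser exposition.
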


\begin{proof}
We apply the circuit design methodology detailed in earlier subsections to convert $\mathcal{U}$ to a crease pattern $P_\mathcal{U}$,
consisting of copies of the same finite crease pattern in each rectangle of a binary tiling. The foldings of $P_\mathcal{U}$ simulate $\mathcal{U}$ on an initial tape whose initial state (between and including the active tape terminators) can be specified by the above--below relation of finitely many folds at the boundary of $P_\mathcal{U}$. We also specify the binary signals that tell the root rectangle of $P_\mathcal{U}$ to be an active top rectangle, again using the above--below relation for finitely many folds  at the boundary of $P_\mathcal{U}$. If $\mathcal{U}$ halts on this input, $P_\mathcal{U}$ will fail to fold, by triggering a halt gate in the rectangle modeling the tape cell and time step at which it halts. If $\mathcal{U}$ does not halt on this input, then its non-halting run can be translated into a choice of how to fold $P_\mathcal{U}$ that puts valid tape cell values (rather than, say, extra Turing machine heads) on all unspecified input signals, and then folds each gadget in $P_\mathcal{U}$ according to the values of the Boolean circuit that it represents. This non-halting run will not trigger any halt gates, and because $P_\mathcal{U}$ is tame, this sequence of folding decisions will lead to a valid flat folding of $P_\mathcal{U}$ by \cref{lem:local2global}.

Thus, let $P$ be $P_\mathcal{U}$ and $F$ be the finite set of folding decisions that specifies the input tape to $\mathcal{U}$ and the requirement that the root rectangle of the binary tiling be active and top. Then $P$ can be flat folded consistently with $F$ if and only if $\mathcal{U}$ does not halt on the given input.
Because the halting problem for $\mathcal{U}$ on a given input is $\mathsf{RE}$-complete, its complementary problem (true if it does not halt) is  $\mathsf{coRE}$-complete. Thus, as claimed, testing whether $P$ can be flat folded consistently with $F$ is $\mathsf{coRE}$-complete and therefore also undecidable.
\end{proof}

\subsection{Wang tiling}

\begin{figure}
\centering\includegraphics[width=0.8\textwidth]{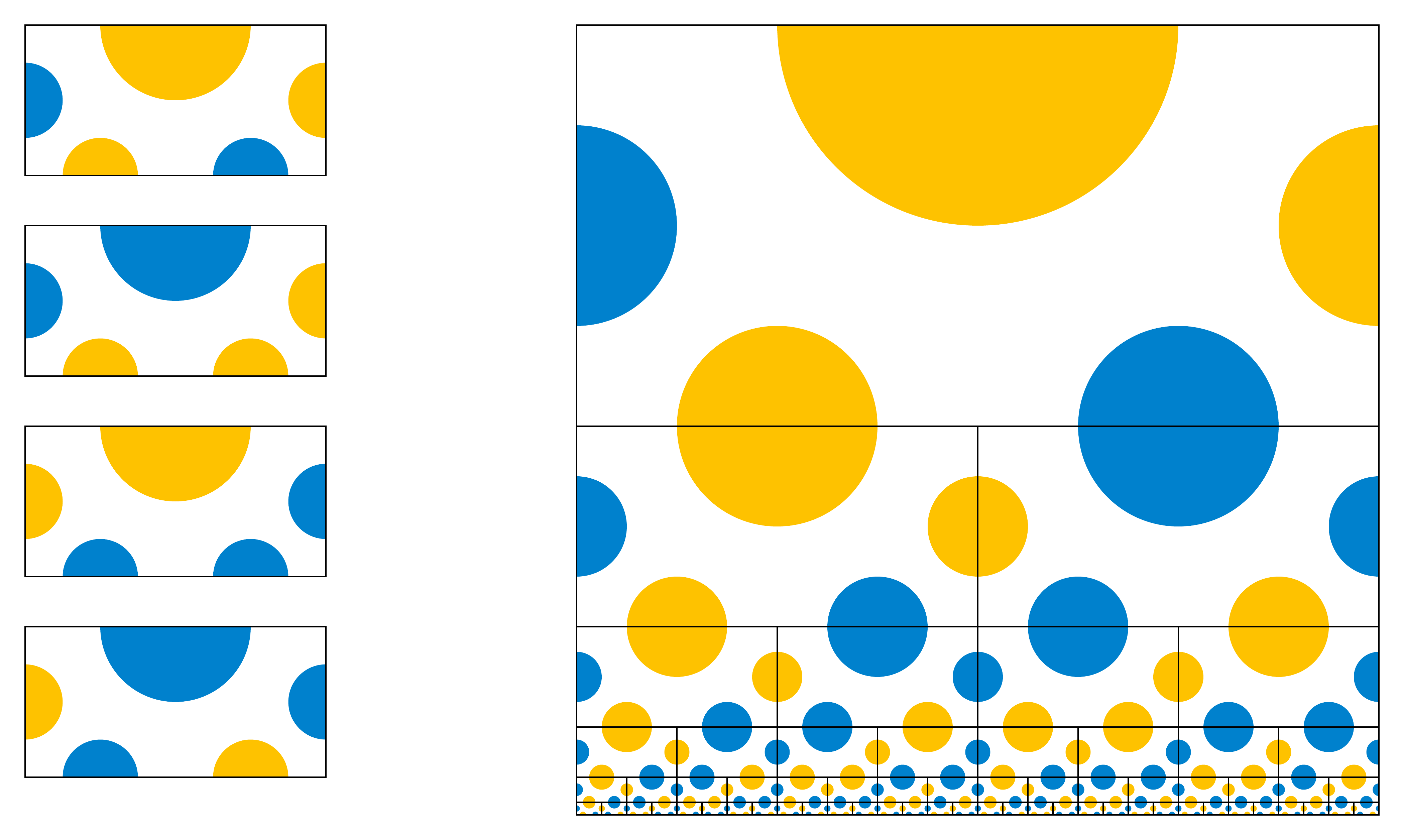}
\caption{The prototiles of a binary Wang tiling problem (left) and a binary Wang tiling using them (right). The semicircles in each tile color its five edges.
The edge coloring along each horizontal line of the tiling follows the alternate paper-folding (dragon curve) sequence~\cite[section 4]{DavKnu-JRM-70}.}
\label{fig:binary-wang}
\end{figure}

In this section we describe an alternative proof of undecidability to the one in \cref{thm:undecidable}, based on the same idea of embedding circuits in a binary tiling, for which the undecidable instances consist only of a crease pattern with the same creases in each rectangle of a binary tiling. Unlike \cref{thm:undecidable} we do not also include a finite set of folding decisions within the instances. However, in order to provide inputs of unbounded size, we allow the crease pattern to vary. The basis for our proof is a generalization of the \emph{Wang tiling} problem to binary tilings, studied by Jarkko Kari~\cite{Kar-MCU-07}. Following Kari, we define an input to the \emph{binary Wang tiling problem} to consist of a finite set of \emph{prototiles}, $2\times 1$ rectangles with colored edges. We consider the bottom side of a $2\times 1$ rectangle to consist of two edges, so that the tiles of the binary tiling meet edge-to-edge. Given this data, a binary Wang tiling is simply a copy of the binary tiling, in which each tile uses the coloring from one of the prototiles, and such that each two tiles that share an edge use the same color for that edge. The output of the problem is true if there exists such a tiling, and false otherwise. A true instance with four colored tiles and an example of a binary Wang tiling for this instance are depicted in \cref{fig:binary-wang}.

Kari considers binary tilings in the hyperbolic plane rather than in a square but this turns out not to be of significance for the binary Wang tiling problem.
The hyperbolic plane may be modeled by a half-plane, with a horizontal boundary (the Poincaré half-plane model) for which the hyperbolic isometries include left-to-right translation and scaling. If our square binary tiling is placed into this half-plane, with its lower edge on the half-plane boundary, all the tiles are isometric, with their boundaries consisting of vertical hyperbolic line segments and horizontal arcs of hyperbolic horocycles. The tiling within the square may be extended (in uncountably many distinct ways) into a tiling of the entire half-plane, with tiles of the same shape meeting each other in the same way; these are the hyperbolic binary tilings.

\begin{lemma}
A given instance of the binary Wang tiling problem can tile the square, if and only if it can tile every binary tiling of the Poincaré half-plane, if and only if it can tile every finite union of tiles of either the square binary tiling or of the half-plane binary tiling.
\end{lemma}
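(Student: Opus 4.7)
The plan is to close a cycle of implications among three conditions: (i) the instance tiles the square, (ii) it tiles every binary tiling of the Poincaré half-plane, and (iii) it tiles every finite union of tiles of either the square binary tiling or of a half-plane binary tiling. The two structural tools are the self-similar fact that the descendants of any tile in a half-plane binary tiling form a combinatorial copy of the square binary tiling, and K\H{o}nig's infinity lemma as used already in \cref{lem:local2global}.

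For (ii) $\Rightarrow$ (i), I would fix any half-plane binary tiling $H$ and any tile $T \in H$. The set of descendants of $T$ ($T$ itself, its two children, their children, and so on) has the same edge-adjacency graph as the square binary tiling: internal edges correspond to edges between descendants, while the lateral edges on the sides of the ``strip under $T$'' meet non-descendants and correspond to the outer boundary of the square. Any valid coloring of $H$ therefore restricts to a valid coloring of this copy of the square tiling.

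For (i) $\Rightarrow$ (iii), I would observe that any finite union $F$ of tiles in a half-plane binary tiling lies inside the descendants of some single tile $A$: any two tiles of such a tiling have a common ancestor (climb up either parent chain until the resulting super-tile contains the other), and iterating pairwise over the finite set $F$ produces such an $A$. Since the descendants of $A$ are combinatorially a square binary tiling, the coloring granted by (i) restricts to $F$; finite unions of tiles in the square binary tiling are handled by direct restriction. For (iii) $\Rightarrow$ (ii), I would apply K\H{o}nig's infinity lemma to any half-plane binary tiling $H$: enumerate its countably many tiles as $t_0, t_1, \dots$, set $F_n = \{t_0, \dots, t_n\}$, and form a tree whose level-$n$ nodes are the valid colorings of $F_n$, with edges linking each coloring to its restriction to $F_{n-1}$. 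By (iii) every level is nonempty, and since there are only finitely many prototiles the tree is finitely branching, so an infinite path exists and assembles into a valid coloring of $H$.

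The main obstacle is the ancestor argument in (i) $\Rightarrow$ (iii): one needs to exploit the self-similarity of the half-plane binary tiling to show that any finite set of tiles sits inside a single descendant subtree. Once this is in hand, both the restriction argument for (ii) $\Rightarrow$ (i) and the compactness argument for (iii) $\Rightarrow$ (ii) are routine, and the three conditions become equivalent.
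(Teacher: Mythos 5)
Your (iii) $\Rightarrow$ (ii) step is exactly the paper's K\H{o}nig compactness argument, and your (ii) $\Rightarrow$ (i) step --- restricting a half-plane coloring to the descendants of a single tile --- is a correct combinatorial rephrasing of the paper's observation that the square binary tiling embeds as a subset of a half-plane binary tiling. The problem is in (i) $\Rightarrow$ (iii).

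You assert that any two tiles of a half-plane binary tiling have a common ancestor, and hence that any finite union $F$ of half-plane tiles lies within the descendant subtree of a single tile $A$. That assertion is false. The parent relation gives a rootless infinite tree, and distinct ancestor chains need not merge. Concretely, take the half-plane binary tiling whose level-$k$ tiles occupy horizontal intervals $[n\,2^k,(n+1)\,2^k]$ for $n\in\mathbb{Z}$. The tile over $[0,1]$ at level $0$ is the left child of its parent $[0,2]$, and by induction is always a left child, so its ancestors are the tiles over $[0,2^k]$, covering only $[0,\infty)$. Its neighbor over $[-1,0]$ is always a right child, with ancestors over $[-2^k,0]$ covering $(-\infty,0]$. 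These two \emph{adjacent} tiles have no common ancestor, so no single descendant subtree contains both, and your pairwise-ancestor iteration cannot get started.

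The paper closes (i) $\Rightarrow$ (iii) without a common ancestor: a finite union $F$ is contained in the descendants of a finite contiguous \emph{row} of tiles at the maximum level $L$ occurring in $F$, not a single tile, and if that row has $m'$ tiles one maps it onto a contiguous $m'$-tile row at some level $k$ of the square binary tiling with $2^k\geq m'$. The descendant sets of the two rows are combinatorially isomorphic, so the square's Wang tiling pulls back to a valid coloring of $F$. This is what the paper's phrase ``scaling them down to a copy that fits within the square'' means: a hyperbolic isometry (Euclidean scaling plus translation) carrying the finite half-plane configuration onto tiles of the square binary tiling. Your argument is otherwise sound; replacing the single-ancestor claim by this row-based embedding repairs it.
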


\begin{proof}
This is a standard application of K\H{o}nig's infinity lemma~\cite{Kon-ASM-27}. If there exists a Wang tiling for the entire half-plane, there exists a Wang tiling for its subset, the square binary tiling. If there exists a Wang tiling for the square, the tiling can be restricted to any finite union of tiles. And if any finite union of tiles in the square can be tiled, then so can any finite union of tiles of the half-plane, by scaling them down to a copy that fits within the square and using the Wang tiling of that copy.

Thus, it remains to show that, if every finite union of binary tiles of the half-plane has a Wang tiling, then so does every binary tiling of the half-plane. Choose an arbitrary tiling of the half-plane, enumerate its tiles, and consider the unions of prefixes of this enumeration. Form an infinite tree whose nodes are the Wang tilings of each such union, with the parent of any node obtained by removing the final tile from its prefix. This tree has bounded degree, because the number of children of each node is at most the number of prototiles. By K\H{o}nig's lemma, it has an infinite path, which describes a consistent choice of prototile for each tile of the chosen half-plane binary tiling.
\end{proof}

Thus, in the following undecidability result of Kari, we need not distinguish between the hyperbolic and square binary tilings.

\begin{lemma}[Kari~\cite{Kar-MCU-07}]
The binary Wang tiling problem is undecidable.
\end{lemma}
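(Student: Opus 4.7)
The plan is to reduce from a known undecidable problem, most naturally the halting problem for some fixed universal Turing machine $\mathcal{U}$. The binary tiling is ideally suited for this: each $2\times 1$ tile has a single parent above and two children below, and horocycles (horizontal slices at a fixed depth) can be interpreted as snapshots of a tape at successive time steps. The scheme is to design a finite prototile set so that legal tilings correspond bijectively to non-halting computations of $\mathcal{U}$ on a specific input.

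First, I would encode the tape/state information on the edges of a tile. The top edge carries a pair consisting of a tape symbol plus optionally a state of $\mathcal{U}$ (indicating a head at that cell); the two bottom half-edges carry the analogous pair for the two child cells. The left and right edges carry signaling information needed to synchronize the head: they transmit ``a head is arriving from the left'' or ``from the right'' and the associated state. The tile rules enforce: if no head is present at this cell, the symbol is passed unchanged to both children; if a head is present, the two bottom edges record the symbol written by $\delta$ and at most one of them records the new head position (while the sibling receives a signal through the horizontal edges of the next row indicating the head has moved away). A dedicated ``halt'' color is included with the property that no prototile has it on a top edge, so if the transition rule ever emits it the tiling is blocked.

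The main obstacle is the exponential branching of the binary tiling: a single parent has two children, so going down one row doubles the number of cells, whereas a Turing step should only extend the used tape by one. I would address this by designating one ``active'' column of descendants in the tree—exactly the structure already developed in \cref{sec:circuits} for the crease-pattern construction, where signals propagate through a distinguished active spine with passive subtrees on either side. The passive subtrees just propagate background colors and never carry the head, while the active spine acts as a genuine linear Turing tape. The initial tape contents and head position are hard-wired into the prototile set, exploiting that only finitely many cells are nontrivial initially and that these can be read off from the tile used at the root (or a bounded-depth prefix of the tree).

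With this setup, a tiling exists if and only if the computation never reaches a halting state, so a tiling exists iff $\mathcal{U}$ does not halt on the chosen input. Since the halting problem for $\mathcal{U}$ is $\mathsf{RE}$-complete, binary Wang tilability is $\mathsf{coRE}$-hard and in particular undecidable. The technical details I expect to be delicate are (i) making the head-passing signal unambiguous in the presence of two children (only one may receive the head), and (ii) ensuring that the passive subtrees cannot ``cheat'' by hiding an extra head somewhere, which can be prevented by propagating a single ``head-count'' parity signal along each horocycle through the left/right edges.
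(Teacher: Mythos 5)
The paper does not prove this lemma; it is cited verbatim from Kari, and the surrounding text (the lemma just before it) shows that the square and half-plane forms of binary Wang tiling are equivalent. Your proposal is therefore an attempt to reprove Kari's theorem, and it has a concrete gap at exactly the point you flag as "delicate."

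The core difficulty is anchoring. In a binary Wang tiling of the square (equivalently, the half-plane), every tile has a parent and a left/right neighbor; there is no boundary data and no distinguished root. Your "active spine" idea is borrowed from this paper's crease-pattern construction in \cref{sec:circuits}, but that construction works only because the flat-folding problem of \cref{thm:undecidable} supplies an external finite input $F$ that designates the topmost rectangle as active and top, from which the spine signal propagates downward. A pure Wang tiling has no such external input: local edge-matching rules cannot single out a root tile, so there is nothing to emit the initial "I am active-top" signal or to initialize the tape. Saying the initial configuration is "hard-wired into the prototile set" and "read off from the tile used at the root" presupposes a root that the instance does not provide. Indeed, if the prototiles included a "root" tile whose colors cannot appear on any tile above it, that tile could simply never be used, and a legal tiling could consist entirely of passive background tiles, independently of whether the Turing machine halts. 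Your parity-signal patch for preventing extra heads in passive subtrees has the same problem: parity relative to what reference cell?

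To close this gap you would need one of the standard mechanisms for forcing computation in unanchored tilings: either construct a hierarchical/aperiodic substructure that forces infinitely many nested computation domains and runs a fresh blank-tape computation in each (Robinson's strategy in the Euclidean plane, which is nontrivial to adapt to the exponential branching of the binary tiling), or use Kari's actual technique, which avoids Turing-tape simulation altogether and instead encodes the orbits of a piecewise-affine map via balanced (Beatty-type) representations carried on tile edges, reducing from the undecidable mortality/immortality problem for such maps. The latter is what Kari~\cite{Kar-MCU-07} does, and it is a genuinely different argument from the one you sketch; it sidesteps anchoring because the affine-dynamics constraint is meaningful on every bi-infinite horocycle without designating any cell as special. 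Your high-level intuition (reduce from halting, exploit the tree structure, beware the branching) is reasonable, but the missing anchoring mechanism is the whole difficulty of the theorem, not a detail to be worked out.
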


\begin{theorem}
\label{thm:wang}
The problem of determining the flat-foldability of a crease pattern for a square of origami paper in which each tile of the square binary tiling has a scaled copy of the same finite set of (labeled or unlabeled) creases is undecidable.
\end{theorem}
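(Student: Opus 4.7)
The plan is to reduce from the binary Wang tiling problem, shown undecidable by Kari in the preceding lemma. Given a finite set of $k$ Wang prototiles, I will build a single finite crease pattern $P$ (labeled or unlabeled, as desired) whose scaled copy is placed in every rectangle of the square binary tiling, so that flat foldings of the resulting infinite crease pattern correspond to valid binary Wang tilings using the given prototiles.

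The circuit embedded in each rectangle, laid out using the grid convention of \cref{sec:circuits}, performs three tasks. First, $\lceil\log_2 k\rceil$ unconstrained variable gadgets serve as \emph{choice bits} that nondeterministically pick one of the $k$ prototiles. Second, combinational logic (Boolean gates assembled from the NOR construction of \cref{fig:grid-nor}) computes from the choice bits fixed-length binary encodings of the colors of the rectangle's five Wang edges: top, left, right, bottom-left, and bottom-right. Third, each encoding is routed through the grid to the corresponding boundary midpoints so that, after either the scale-transition construction of \cref{fig:scale-transition} and \cref{fig:realigned-scale} at a parent/child boundary, or direct crossing at a left/right neighbor boundary, it meets bit by bit the corresponding encoding emitted by the adjacent rectangle. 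At every such meeting I insert a small combinational equality-check subcircuit whose output feeds a halt gate of the kind described at the end of \cref{sec:circuits}, so that any disagreement between adjacent tiles prevents the pattern from folding flat. Signals routed to positions on the outer paper square have no opposing neighbor and are simply left unmatched, mirroring the absence of boundary constraints in the Wang tiling problem.

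For correctness, any binary Wang tiling fixes the choice bits of each rectangle so that all equality checks are satisfied and no halt gate fires; every nearby region of $P$ therefore admits a flat folding, and \cref{lem:local2global} lifts these to a flat folding of all of $P$. Conversely, any flat folding of $P$ assigns definite Boolean values to every variable gadget, including the choice bits of every rectangle; because no halt gate triggers, the per-rectangle prototile selections must agree along every shared edge, and thus form a binary Wang tiling.

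The main obstacle is imposing the self-similarity constraint: the same finite crease pattern must appear in every rectangle, so the circuit must be structurally scale-invariant while all of its gadgets satisfy the width-ratio requirement of \cref{lem:varscale}, and the color encodings must survive every parent/child scale transition without misalignment. These conditions are met by choosing the grid parameter $k$ large enough to house all five color encodings and their equality-check subcircuits, and by routing each color's bits along vertical variable gadgets whose positions line up with the scale-transition layout already in hand. The resulting $P$ is tame, since creases accumulate only toward the lower edge of the outer open square, which lies outside the crease pattern's domain, so \cref{lem:local2global} applies as required.
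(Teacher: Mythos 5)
Your proposal is correct and takes essentially the same route as the paper: reduce from Kari's binary Wang tiling problem, build one Boolean circuit per binary tile that encodes the edge colors, realize it as a repeated finite crease pattern using the grid/scale-transition machinery and halt gates, and lift local foldability to global via \cref{lem:local2global}. The only difference is an implementation choice: the paper has the shared wires carry the edge colors directly, with a local check inside each tile that the five colors form a prototile (so matching across a shared edge is automatic and no equality subcircuit is needed); you instead introduce nondeterministic ``choice bits'' per tile, derive colors from them, and insert explicit equality checks at each shared boundary. Both work; the paper's variant is a bit leaner (no choice bits, no equality gadgets), while yours makes the prototile selection and the consistency check more explicit. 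One small detail you should handle in your version: when $k$ is not a power of two, some settings of the choice bits encode no prototile, and these must also trigger a halt gate.
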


\begin{proof}
This follows by converting any hard instance of binary Wang tiling into a Boolean circuit for each tile that shares wires with its neighboring tiles describing the color of their shared edges, and checks that this coloring matches a prototile. We convert this circuit into a finite crease pattern, as in \cref{thm:undecidable}, that can be flat-folded within each tile if this check succeeds, and fails to flat-fold otherwise. The crease pattern that repeats this finite pattern for each tile of the binary tiling is flat-foldable if and only if the given Wang tiling instance has a solution.
\end{proof}

\bibliographystyle{plainurl}
\bibliography{folding}
\end{document}